\documentclass[11pt,american]{article}
\usepackage[T1]{fontenc}
\usepackage{babel}
\usepackage{csquotes}
\usepackage{amsmath}
\usepackage{graphicx}
\usepackage{amssymb}
\usepackage{tikz}
\usetikzlibrary{positioning,arrows.meta,calc}
\usepackage{pgfplots}
\usepackage{hyperref}
\usepackage[left=2cm, right=2cm]{geometry}

\DeclareMathOperator{\sign}{sign}

\DeclareMathOperator{\R}{\mathbb{R}}
\DeclareMathOperator{\rank}{\operatorname{rank}}
\DeclareMathOperator{\range}{range}
\DeclareMathOperator{\spanop}{span}

\usepackage{amsthm}
\newtheorem{theorem}{Theorem}
\newtheorem{lemma}{Lemma}
\newtheorem{proposition}{Proposition}
\theoremstyle{definition}
\newtheorem{definition}{Definition}
\newtheorem{assumption}{Assumption}
\theoremstyle{plain}
\newtheorem{corollary}{Corollary}
\theoremstyle{remark}
\newtheorem{remark}{Remark}

\usepackage{color,ifpdf,graphicx}
\ifpdf %if using PDFTeX in PDF mode
  \DeclareGraphicsExtensions{.pdf,.png,.mps}
\else %if using TeX or PDFTeX in TeX mode
  \DeclareGraphicsExtensions{.eps,.bmp}
  \usepackage{pstricks}%variant: \usepackage{pst-all}
\fi

\title{Tracking Through Decoupling Singularities: A Singularity-Robust Homotopy-Continuation Extension of Feedback Linearization}
\author{Alex Borisevich, \href{mailto:akpc806b@gmail.com}{akpc806b@gmail.com}}

\begin{document}

\maketitle

\begin{abstract}
Input--output feedback linearization fails at decoupling singularities, where the decoupling matrix loses rank, the relative degree is lost, and the linearizing control becomes unbounded. This paper develops a singularity-robust trajectory-tracking controller for square nonlinear control-affine systems that tracks through isolated decoupling singularities with bounded control. The method recasts tracking as real-time arc-length homotopy continuation, equivalently a continuous-time Newton/Davidenko flow, and replaces the inverse decoupling matrix by the least-norm Moore--Penrose solution of an augmented matrix $A=[\Lambda\mid b]$, where $b$ is the homotopy direction. A transversality condition $w^T b \ne 0$, with $w$ in the left null space of the decoupling matrix, keeps the augmented matrix full row rank through a generic rank-one loss. The resulting flow agrees with feedback linearization away from the singular set, tracks with $O(1/k)$ error, and re-locks after each crossing. The theory also characterizes the reflection-versus-branch-crossing dichotomy at Whitney folds and relates the reflection case to a Filippov sliding mode. Extensions cover dynamic relative-degree-one minimum-phase systems and arbitrary relative degree via filtered-error reduction. The core results are formalized and machine-checked in the Lean~4 proof assistant, including the multi-parameter (higher-corank) repair criterion, the uniqueness of the sliding motion at the fold, and the Filippov (differential-inclusion) analysis of the fold model. Simulations include a redundant 2-DOF manipulator, relative-degree-one and relative-degree-two plants, and a dual-active-bridge series-resonant DC/DC converter, where the method performs bounded inversion across buck/boost and resonance singularities while preserving zero-voltage soft switching.
\end{abstract}

\medskip
\noindent\textbf{Keywords:} feedback linearization; loss of relative degree; decoupling matrix; decoupling singularity; nonlinear trajectory tracking; homotopy continuation; arc-length continuation; continuous-time Newton (Davidenko) method; augmented Jacobian; least-norm inverse; transversality; singularity-robust control; damped least squares; singularity-robust inverse kinematics; kinematic singularity; redundant manipulators; resolved-rate control; Type-2 parallel-mechanism singularity crossing; Filippov sliding mode; Whitney fold; monodromy; minimum phase; series resonant converter; dual-active-bridge; DC--DC converter; zero-voltage switching (ZVS); soft switching; formal verification; interactive theorem proving; Lean; differential inclusions.

\section{Introduction}

Feedback linearization is the standard route to tracking control for nonlinear systems with a well-defined relative degree \cite{Isidori1995,SlotineLi1991,Khalil2002}. For a square control-affine plant of relative degree one it inverts the decoupling matrix $\Lambda(x) = Dh(x)\,g(x)$ to cancel the nonlinearity and impose linear output dynamics. The construction is, however, intrinsically \emph{partial}: the linearizing law $u = \Lambda^{-1}(v - a)$ is defined only off the \emph{decoupling singularity} (equivalently, the locus of relative-degree loss) $\mathcal D = \{\det\Lambda(x) = 0\}$, and the commanded control grows without bound as the trajectory approaches $\mathcal D$. The same obstruction appears kinematically as the manipulator-Jacobian singularity of resolved-rate control, and dynamically as the Type-2 (parallel) singularity of closed-chain mechanisms.

This paper develops a controller that tracks a reference \emph{through} isolated decoupling singularities with bounded effort, using one formula and no case analysis. The idea is to embed the tracking problem in an arc-length homotopy continuation and to replace the partial inverse $\Lambda^{-1}$ by the least-norm (Moore--Penrose) solution of the \emph{augmented} matrix $A = [\Lambda \mid b]$, where $b$ is the homotopy direction. Augmenting $\Lambda$ with the extra column $b$ repairs an isolated rank drop precisely when a transversality condition holds, and the resulting flow crosses the singularity at finite velocity while retaining the accuracy of feedback linearization elsewhere.

\paragraph{Contributions.} 
\begin{enumerate}
\item A kernel structure lemma (Lemma~\ref{lem:kernel}) characterizing exactly when the augmentation repairs a rank drop, through a \emph{transversality} condition $w^Tb \ne 0$ on the homotopy direction, and quantifying the collapse of parameter authority $\rho \to 0$ at the singularity.
\item A tracking theorem (Theorem~\ref{thm:crossing}) proving bounded crossing, bounded excursion of the locking error during the crossing, and exponential re-locking afterwards, together with a sharp failure statement when transversality is lost (Proposition~\ref{prop:failure}).
\item A resolution of the reflection-versus-crossing dichotomy: the single kernel direction cannot both advance the homotopy parameter and change branch, and the orientation choice selects between them. At a generic fold we prove reflection is the canonical least-norm lift realized as a Filippov sliding mode, crossing its $\mathbb Z/2$ monodromy image (Theorem~\ref{thm:reflection}).
\item Extensions to dynamic relative-degree-one systems (minimum phase) and to arbitrary relative degree via filtered-error reduction, and a proof that the controller reduces to feedback linearization up to $O(1/k)$ off the singular set.
\item Validation on a 2-DOF arm, dynamic relative-degree one and two plants, and a dual-bridge series resonant DC/DC converter exhibiting both a crossing (buck/boost fold) and a reflection (resonance) singularity.
\item A machine-checked formalization (Lean~4 / Mathlib) of the core results --- the kernel lemma with its multi-parameter generalization, the tracking bounds of Theorem~\ref{thm:crossing}(2)--(4), and the sliding-mode analysis of the fold, including a constructive treatment of Filippov (differential-inclusion) solutions (Section~\ref{sec:lean}).
\end{enumerate}

The rest of the paper is organized as follows. Section~\ref{sec:related} positions the method against the prior art. Section~\ref{sec:setup} recalls the system class and feedback linearization and gives the motivating example. Sections~\ref{sec:wellbehaved}--\ref{sec:tracking} develop the continuation construction, the kernel lemma, and the tracking theorem; Section~\ref{sec:reflection} analyzes the reflection--crossing dichotomy and proves the reachability criterion that selects between them. Sections~\ref{sec:dynamic}--\ref{sec:reldeg} extend to dynamic and higher-relative-degree systems and relate the method to feedback linearization; Section~\ref{sec:adaptive} states the multi-parameter continuation results and their reading for adaptive control; Section~\ref{sec:lean} reports the machine-checked formalization of the core results; Sections~\ref{sec:sims}--\ref{sec:converter} present the simulations and the converter application, and Section~\ref{sec:conclusion} concludes.

\section{Related work}\label{sec:related}

\paragraph{Singularity-robust inverse kinematics.} The closest antecedent is the damped-least-squares (DLS), or singularity-robust, inverse of the manipulator Jacobian introduced by Nakamura and Hanafusa \cite{Nakamura1986} and Wampler \cite{Wampler1986}, and refined into task-priority and variable-damping schemes by Chiaverini and co-workers \cite{Chiaverini1997,ChiaveriniHandbook2008}; see \cite{Deo1995} for a review. DLS replaces $J^{-1}$ by $J^T(JJ^T + \eta^2 I)^{-1}$, trading tracking accuracy for boundedness in a neighbourhood of the singularity, with the damping $\eta$ scheduled on the smallest singular value. Our construction differs in three respects. First, boundedness comes from the \emph{geometry} of the augmented map $A = [\Lambda \mid b]$ --- the extra homotopy column, not an added damping term --- and is exact rather than approximate at the singular instant. Second, away from the singularity the controller equals feedback linearization up to $O(1/k)$ (Proposition~\ref{prop:consistency}), so unlike DLS it introduces no accuracy penalty in the regular region. Third, the augmentation supplies a checkable transversality criterion (Definition~\ref{def:transversal}) that decides in advance whether a given singularity is crossable, and a clean characterization of what the extra degree of freedom does (reflection versus crossing), which damping does not provide. The price is that the method addresses \emph{isolated} decoupling singularities along the path rather than dense singular regions. Beyond uniform damping, refinements include selectively damped least squares, which adapts the damping per singular direction \cite{BussKim2005}, and hierarchical / task-priority formulations solved as (inequality-constrained) quadratic programs \cite{Escande2014,Moe2016}; more recently, predictive redundancy-resolution schemes attain singularity-free tracking by embedding the kinematics in a receding-horizon program \cite{HIMPC2024}. All of these \emph{avoid} or \emph{regularize} the singularity --- keeping the trajectory away from it or damping the response near it --- whereas the present method drives the output \emph{through} an isolated decoupling singularity while coinciding with the exact inverse away from it.

\paragraph{Continuation and continuous-time Newton methods.} Writing the root-finding flow $\dot x = -[Df]^{-1}\dot y$ as a differential equation is the continuous Newton (Davidenko) method \cite{Davidenko1953}, and following a deformed solution curve is the subject of predictor--corrector and arc-length continuation \cite{AllgowerGeorg1990}. In numerical algebraic geometry, generic-parameter homotopies avoid the discriminant locus with probability one (the ``$\gamma$-trick''), meeting singularities only at isolated points \cite{SommeseWampler2005}. Our controller is an arc-length continuation in which the homotopy parameter is driven by real time and the augmentation plays exactly the discriminant-avoidance role, but with two control-specific additions: the transversality analysis of the augmented Jacobian at a genuine rank drop, and the locking feedback that regulates the parameter error $e = t-\mu$ to $O(1/k)$. When a path does approach a singular endpoint, numerical algebraic geometry handles it with \emph{endgames} --- power-series or Cauchy-integral extrapolation --- and with certified predictor--corrector tracking \cite{Bertini2013,BeltranLeykin2012}. Our transversality condition plays the complementary role of keeping the flow \emph{off} the discriminant in the first place, so that a single least-norm step suffices at the crossing and no endgame is required.

\paragraph{Continuation in control.} Continuation has been used in control most prominently in Ohtsuka's Continuation/GMRES method for nonlinear model-predictive control \cite{Ohtsuka2004}, where the KKT system is tracked in time by a continuation flow. The present homotopy-for-tracking construction and its setpoint predecessor were surveyed by the author \cite{Borisevich2013}; that note develops the ``auxiliary controller as a drift term'' idea for regulation but leaves the trajectory-tracking theory, the transversality/crossing analysis, and the singularity guarantees incomplete. The present paper completes and supersedes that line: it gives the augmented-Jacobian kernel lemma, the bounded-crossing theorem, the reflection/crossing dichotomy, and the dynamic and higher-relative-degree generalizations that were missing there.

\paragraph{Crossing Type-2 singularities of parallel mechanisms.} For closed-chain mechanisms it is known that a Type-2 (parallel) singularity can be crossed if the applied wrench is reciprocal to the twist along the uncontrollable direction --- a dynamic criterion developed by Arakelian, Briot and co-workers, first with pre-planned optimal trajectories and later with workspace-enlarging and virtual-constraint controllers \cite{Briot2008,Pagis2015,Briot2016}. Our transversality condition $w^Tb \ne 0$ is the continuation-flow counterpart of that reciprocity criterion: the homotopy direction $b$ must have a nonzero component along the uncontrollable mode $w$. The difference is that our criterion arises from a purely kinematic rank argument on the augmented map, requires no inverse dynamic model, and comes with an explicit tracking-error guarantee rather than only a feasibility condition. This line has since moved from pre-planned optimal trajectories to controllers that cross without an off-line trajectory, through virtual-constraint and multi-model computed-torque designs \cite{Pagis2015,Pagis2017}, while a parallel body of work instead \emph{avoids} Type-2 configurations, detecting the responsible limbs via the angle between output twist screws \cite{OTSAvoid2024}. The present contribution is on the crossing side but purely kinematic, and the same transversality test carries over verbatim to serial decoupling singularities and to the dynamic control-affine setting of Section~\ref{sec:dynamic}.

\paragraph{Control through loss of relative degree.} Within nonlinear control proper, the decoupling singularity $\det\Lambda = 0$ is precisely the loss of (vector) relative degree, at which exact input--output linearization is undefined. Two classical responses are \emph{approximate} input--output linearization, which discards the ill-defined part of the dynamics in a neighbourhood of the singularity \cite{HauserSastry1992}, and \emph{switching} between control laws valid on either side of it \cite{TomlinSastry1998}. More recent work retains a prioritized normal form so that a singular sub-block of the input-gain matrix does not propagate to higher-priority channels \cite{PriorityFBL2023}. The continuation controller is a third route: rather than approximating, switching, or re-prioritizing, it augments the decoupling matrix with the homotopy direction and takes the least-norm solution, which remains bounded \emph{at} the singularity under transversality and reduces to exact feedback linearization away from it (Proposition~\ref{prop:consistency}).

\paragraph{Feedback linearization and flatness.} The method sits inside the input--output linearization framework \cite{Isidori1995,SlotineLi1991,Khalil2002}: it presupposes a well-defined relative degree and minimum-phase zero dynamics, and does not manufacture either where absent. In particular it is orthogonal to dynamic feedback linearization and flatness \cite{Fliess1995}, which enlarge the linearizable class by dynamic extension; the continuation idea wraps a (possibly extended) system and repairs its \emph{decoupling} singularities, a distinct obstruction from non-flatness. The relation is made precise in Section~\ref{sec:reldeg}: the controller is a singularity-robust extension of feedback linearization, identical to it off $\mathcal D$ and bounded on it.

\section{Problem setup and feedback linearization}\label{sec:setup}

We consider square control-affine MIMO systems
\begin{equation}\label{eq:sys}
\dot x = f(x) + g(x)\,u, \qquad y = h(x),
\end{equation}
with
\begin{itemize}
\item state $x \in \R^n$, input $u \in \R^m$, and output $y \in \R^m$ (square: $\dim u = \dim y = m$);
\item drift $f:\R^n\to\R^n$ and input matrix $g:\R^n\to\R^{n\times m}$ with columns $g_1,\dots,g_m$;
\item output map $h:\R^n\to\R^m$,
\end{itemize}
all assumed smooth on the domain of interest. Writing $Dh = \partial h/\partial x$, the (vector) relative degree is one on a set $S$ if the \emph{decoupling matrix}
\begin{equation}\label{eq:Lambda}
\Lambda(x) := Dh(x)\,g(x) \in \R^{m\times m}
\end{equation}
is nonsingular for $x\in S$; such systems are common where fast feedback is required, as in force-controlled robots and power electronics. Differentiating the output once,
\begin{equation}\label{eq:ydot}
\dot y = Dh(x)\,[f(x)+g(x)u] = a(x) + \Lambda(x)\,u, \qquad a(x) := Dh(x)\,f(x),
\end{equation}
so the input already appears in $\dot y$. Where $\Lambda$ is invertible, the feedback-linearizing law
\begin{equation}\label{eq:fl}
u = \Lambda(x)^{-1}\bigl(v - a(x)\bigr)
\end{equation}
reduces the output dynamics to $\dot y = v$, after which any linear design applies to $v$; this is input--output feedback linearization \cite{Isidori1995,SlotineLi1991,Khalil2002}. The construction is defined only off the \emph{decoupling singularity}
\begin{equation}
\mathcal D = \{\, x : \det\Lambda(x) = 0 \,\},
\end{equation}
where $\Lambda^{-1}$ ceases to exist and $\|u\|\to\infty$. Repairing exactly this failure --- tracking through $\mathcal D$ with bounded control --- is the subject of the paper.

\section{Motivating example}\label{sec:motivating}

Before developing the general method, we expose the obstruction on the simplest possible system. Consider the scalar system
\begin{equation}\label{toy_example}
\dot x = u, \qquad y = h(x) = x\,(x^2 - 1),
\end{equation}
i.e. \eqref{eq:sys} with $m = n = 1$, $f = 0$, $g = 1$, so the decoupling matrix is the scalar $\Lambda(x) = Dh(x) = 3x^2 - 1$. The cubic $h$ models, for instance, the current--voltage characteristic of a tunnel diode (Fig.~\ref{fig:toy_cubic}).

\begin{figure}[h]
\centering
\begin{tikzpicture}
\begin{axis}[
    axis lines = center,
    xlabel = $x$,
    ylabel = {$h(x)$},
    width = 0.6\textwidth,
    height = 0.35\textwidth,
]
\addplot [domain=-1.5:1.5, samples=100, color=red]{x*(x*x - 1)};
\addlegendentry{$x(x^2 - 1)$}
\end{axis}
\end{tikzpicture}
\caption{The cubic output map $h(x) = x(x^2-1)$ of the motivating example. Its slope $Dh(x) = 3x^2 - 1$ vanishes at $x = \pm 1/\sqrt3$, the decoupling singularities.}
\label{fig:toy_cubic}
\end{figure}

The system is output controllable in the elementary sense that every $y \in \R$ is attained by some $x$. Feedback linearization nonetheless fails at the two points where $\Lambda(x) = 3x^2 - 1 = 0$, i.e. $x = \pm 1/\sqrt3 \approx \pm 0.577$: the slope of $h$ is negative between them and positive outside, so $\Lambda$ vanishes and the linearizing law \eqref{eq:fl} diverges as $x \to \pm 1/\sqrt3$. This is the simplest instance of a decoupling singularity, and we return to it numerically in Section~\ref{sec:sims}.

\section{Background: continuous Newton and homotopy continuation}\label{sec:wellbehaved}

The controller rests on two classical ideas --- the continuous-time Newton flow and arc-length homotopy continuation --- which we recall in a form tailored to tracking. Throughout this section and Sections~\ref{sec:augmented}--\ref{sec:tracking} we study the \emph{kinematic} (velocity-controlled) problem
\begin{equation}\label{eq:kinematic}
\dot x = u, \qquad y = f(x), \qquad f:\R^n\to\R^n,
\end{equation}
in which the input is the state velocity and, for these sections only, $f$ denotes the \emph{output map} (so $Df$ is its Jacobian). This isolates the singularity mechanism in its simplest form; the general control-affine plant \eqref{eq:sys}, with $Df$ replaced by the decoupling matrix $\Lambda = Dh\,g$, is recovered in Section~\ref{sec:dynamic}. We begin with the regular case $\rank Df(x) = n$.

\subsection{Root-finding}

If the problem is to find a root of this function such that

\begin{equation}
f(x^*) = 0
\end{equation}

then a time derivative of $y = f(x)$ could be found:

\begin{equation}
\dot y = D f(x) \cdot \dot x
\end{equation}

and then the trajectory of $y$ could be designed such that $y(t) \to 0$ for example:

\begin{equation}
\dot y = -y
\end{equation}

or solving for $\dot x$:

\begin{equation}
\dot x = \Bigl[ D f(x) \Bigr]^{-1} \cdot \dot y = -\Bigl[ D f(x) \Bigr]^{-1} y
\end{equation}

The above is just well-known Newton's method written in continuous time.

\subsection{Trajectory tracking}

Generalizing the above approach, it is possible to follow an arbitrary smooth trajectory $y^*(t)$ by manipulating $\dot x$ through the invertible Jacobian matrix $D f(x)$ as 

\begin{equation}\label{newton_track}
\quad \dot x = \Bigl[ D f(x) \Bigr]^{-1} \cdot \dot y^*
\end{equation}

assuming $y(0) = y^*(0)$ without loss of generality (the initial point $x_0$ solving $f(x_0) = y^*(0)$ is found by the root-finding flow above).

Both the root-finding flow and the tracking law \eqref{newton_track} invert $Df$ directly, so they diverge wherever $Df$ loses rank. Homotopy continuation removes this limitation by embedding the problem in a one-parameter family, which we recall next.

\subsection{Newton homotopy and arc-length continuation}

Homotopy continuation deforms a system with a known solution into the target system and tracks the solution along the deformation; the solution curve may pass through singularities at which a naive Newton step breaks down \cite{AllgowerGeorg1990,SommeseWampler2005}. Consider the equation
\begin{equation}
f(x) = y^*
\end{equation}
with $x \in \R^n$ and $f:\R^n\to\R^n$.

A particular form of Newton homotopy for a problem $f(x) = y^*$ can be written as:

\begin{align}\label{hom_func}
H(x,\lambda) &= (1 - \lambda) \cdot \Bigl[ f(x) - f(x_0) \Bigr] + \lambda \cdot \Bigl[ f(x) - y^* \Bigr] \\
&= f(x) - f(x_0) + \lambda \cdot \Bigl[ f(x_0) - y^* \Bigr] 
\end{align}

where $x(0) = x_0$, and $\lambda \in \R$, $\lambda(0) = 0$, $\lambda \in [0,1]$.

The system $H = 0$ depends on a parameter $\lambda$ that evolves in time and the solution $x$ varies correspondingly to trace a curve $(x, \lambda)$ defined by $H(x, \lambda) = 0$.
For initial moment of time $t = 0$, $\lambda(0) = 0$ and $x(0) = x_0$ and $H(x_0,0) = 0$.

Typical assumption for the family of homotopy continuation methods is that the Jacobian of $H$ is a full rank matrix

\begin{equation}\label{full_rank_H}
\rank D H = n
\end{equation}

This is weaker than $\rank Df(x) = n$: the augmented Jacobian $DH$ is $n\times(n+1)$, so the extra degree of freedom $\lambda$ (column $n+1$) can restore full row rank even where $Df$ drops rank.

Stepping $\lambda$ directly from $0$ to $1$ can fail if the solution curve folds in $\lambda$. Arc-length (pseudo-arclength) continuation instead introduces a parameter $s$, the arc length of the curve, which may be identified with time $t$ when the curve is traversed at unit speed, and normalizes the tangent to the solution curve.

In arc‐length form, the solution curve

\begin{equation}
c(t) \;=\; (x(t),\,\lambda(t))
\end{equation}

lives in $\mathbb{R}^{n+1}$. The $c(t)$ is a solution of initial value problem defined by the following conditions (see \cite[Sec.~2.1]{AllgowerGeorg1990}):

\begin{enumerate}
\item[(i)] $\dot H\bigl(c(t)\bigr)\cdot \dot c(t) = 0$ (path-following),
\item[(ii)] $\|\dot c(t)\| = 1$ (normalization),
\item[(iii)] $\det\begin{pmatrix} DH \\ \dot c^T \end{pmatrix} > 0$ (orientation).
\end{enumerate}

By construction $H(x_0,0) = 0$; imposing $\dot H = 0$ along $(x(t),\lambda(t))$ therefore keeps $H \equiv 0$.

Direct calculation of time derivative $\dot H$ of \eqref{hom_func} gives:

\begin{align}\label{diff_hom_func}
\dot H(x,\lambda) &= D f(x) \cdot \dot x + \bigl[ f(x_0) - y^* \bigr] \cdot \dot \lambda \\
&= \bigl[ D f(x) \; \big| \; f(x_0) - y^* \bigr] \cdot \begin{pmatrix} \dot x \\ \dot \lambda \end{pmatrix}
\end{align}

\subsection{The regular case}

Suppose $Df(x)$ is invertible, $\rank Df(x) = n$. By Sard's theorem this holds for almost every $x$, the critical set $\{\rank Df < n\}$ having measure-zero image. Multiplying \eqref{diff_hom_func} by $[Df(x)]^{-1}$ solves for $\dot x$ in terms of $\dot\lambda$, and imposing the normalization $\|(\dot x,\dot\lambda)\| = 1$ with the positive orientation gives the arc-length flow
\begin{gather}
\dot x = \alpha(x)\,[Df(x)]^{-1}\bigl[y^* - f(x_0)\bigr], \qquad \dot\lambda = \alpha(x),\\
\alpha(x) = \Bigl(\, \bigl\|[Df(x)]^{-1}\bigl(y^*-f(x_0)\bigr)\bigr\|^2 + 1 \,\Bigr)^{-1/2} \in (0,1].
\end{gather}
The induced output dynamics, from $\dot y = Df(x)\,\dot x$, is
\begin{equation}
\dot y = \alpha(x)\,\bigl[y^* - f(x_0)\bigr],
\end{equation}
so $y$ moves along the straight segment from $f(x_0)$ to $y^*$ at the positive rate $\alpha$. This is the well-defined behaviour the augmentation must preserve when $Df$ loses rank, which we analyze next.

\section{The augmented Jacobian at a singularity}\label{sec:augmented}

The differentiated homotopy \eqref{diff_hom_func} already exhibits the object the continuation flow must invert: the \emph{augmented Jacobian}
\begin{equation}\label{eq:augmented}
A = \bigl[\, D f(x) \;\big|\; b \,\bigr] \in \R^{n\times(n+1)}, \qquad b = f(x_0) - y^*,
\end{equation}
with homotopy direction $b$ (which becomes time-dependent, $b(t) = f(x_0) - y^*(t)$, once we track a reference in Section~\ref{sec:tracking}). Its rank structure governs the entire method. We now make precise \emph{when} the augmentation repairs a rank drop of $Df(x)$, and what happens to the extra ($\lambda$- or $\mu$-) direction at a singular point.

\begin{definition}[Transversal singularity]\label{def:transversal}
A point $x$ with $\rank D f(x) = n-1$ is a \emph{transversal singularity} of the homotopy if the homotopy direction satisfies
\begin{equation}\label{transversality}
w^T b \ne 0,
\end{equation}
where $w \ne 0$ spans the (one-dimensional) left null space of $D f(x)$, i.e. $w^T D f(x) = 0$. Equivalently, $b \notin \range D f(x)$.
\end{definition}

The vector $w$ is the direction in output space along which control is instantaneously lost (the uncontrollable mode); condition \eqref{transversality} asks the homotopy direction $b$ to have a nonzero component along that mode.

\begin{lemma}[Rank and kernel of the augmented Jacobian]\label{lem:kernel}
Let $A = \bigl[ D f(x) \; \big| \; b \bigr] \in \R^{n\times(n+1)}$.
\begin{enumerate}
\item[(a)] If $\rank D f(x) = n$, then $\rank A = n$ and
\[
\ker A = \spanop\Bigl\{ \bigl( -[Df(x)]^{-1} b, \; 1 \bigr) \Bigr\}.
\]
The unit kernel vector oriented to a nonnegative last component has last ($\mu$-)component
\[
\rho \;=\; \bigl( 1 + \| [Df(x)]^{-1} b \|^2 \bigr)^{-1/2} \in (0,1].
\]
\item[(b)] If $\rank D f(x) = n-1$ and $x$ is a transversal singularity, then $\rank A = n$ (so the continuation tangent stays well defined) and
\[
\ker A = \ker D f(x) \times \{0\}.
\]
Consequently the last component of every kernel vector vanishes: $\rho = 0$.
\end{enumerate}
\end{lemma}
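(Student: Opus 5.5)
The proof is elementary linear algebra. I would handle the two parts separately, writing a kernel vector of $A$ as $(\xi,\eta)\in\R^n\times\R$, so that $A(\xi,\eta)=Df(x)\,\xi+\eta\,b$.

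\emph{Part (a).} Since $Df(x)$ is an $n\times n$ block of $A$ with rank $n$, we get $\rank A\ge n$. The matrix has only $n$ rows, so $\rank A=n$. By rank--nullity, $\dim\ker A=(n+1)-n=1$. Direct substitution shows that $(-[Df(x)]^{-1}b,\,1)$ lies in the kernel, and it is nonzero, so it spans $\ker A$. Normalizing it to unit length gives last component $\pm(1+\|[Df(x)]^{-1}b\|^2)^{-1/2}$. Choosing the nonnegative sign yields $\rho$, which is strictly positive and at most $1$. This matches the arc-length factor $\alpha$ of the regular case.

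\emph{Part (b).} For the rank, I would use the column space. We have $\range A=\range Df(x)+\spanop\{b\}$. Here $\range Df(x)$ has dimension $n-1$ and equals $w^\perp$, because the left null space is spanned by $w$. Transversality $w^Tb\ne0$ says exactly that $b\notin w^\perp$. Hence $\range A$ has dimension $n$, i.e.\ $\rank A=n$. Rank--nullity then gives $\dim\ker A=1$.

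For the kernel, suppose $Df(x)\,\xi+\eta\,b=0$. Multiplying on the left by $w^T$ and using $w^TDf(x)=0$ gives $\eta\,(w^Tb)=0$, so $\eta=0$. Then $Df(x)\,\xi=0$, so $\ker A\subseteq\ker Df(x)\times\{0\}$. The reverse inclusion is immediate, and both sides are one-dimensional, since $\ker Df(x)$ is one-dimensional by rank--nullity. This proves equality, and every kernel vector, in particular the unit one, has last component $\rho=0$.

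The only step needing care is the identification $\range Df(x)=w^\perp$, i.e.\ the equivalence of $w^Tb\ne0$ with $b\notin\range Df(x)$ stated in Definition~\ref{def:transversal}. It follows from $\range M=(\ker M^T)^\perp$ together with $\dim\ker Df(x)^T=1$. Everything else is direct computation.
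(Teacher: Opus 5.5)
Your proposal is correct and follows essentially the same route as the paper. In part (a) you exhibit and normalize the kernel vector. In part (b) you show the parameter component of any kernel vector vanishes and compute $\range A=\range Df(x)+\spanop\{b\}$. You kill $\eta$ by left-multiplying by $w^T$, while the paper argues by contradiction that a nonzero last component would put $b$ in $\range Df(x)$; this is the same fact. Only the easy inclusion $\range Df(x)\subseteq w^\perp$ is actually needed, so the full identification $\range Df(x)=w^\perp$ you flag as delicate can be dropped.
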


\begin{proof}
(a) With $Df(x)$ invertible, $A$ has $n$ independent columns already, so $\rank A = n$. Solving $Df(x)\,v + b\,w_0 = 0$ gives $v = -[Df(x)]^{-1} b\, w_0$, a one-dimensional space spanned by $(-[Df(x)]^{-1}b, 1)$; normalizing gives $\rho$.

(b) Take $(v, w_0) \in \ker A$, so $Df(x)\,v + b\,w_0 = 0$. If $w_0 \ne 0$ then $b = -Df(x)\,v / w_0 \in \range Df(x)$, contradicting transversality (recall $w^T Df(x) = 0$ implies $w^T \range Df(x) = 0$, whereas $w^T b \ne 0$). Hence $w_0 = 0$ and $Df(x)\,v = 0$, i.e. $v \in \ker Df(x)$. Since $\dim \ker Df(x) = 1$, we get $\ker A = \ker Df(x) \times \{0\}$. Finally $\range A = \range Df(x) + \spanop(b)$ has dimension $(n-1) + 1 = n$ because $b \notin \range Df(x)$, so $\rank A = n$.
\end{proof}

\begin{figure}[h]
\centering
% Transversality / augmented-Jacobian TikZ (only the pictures).
% Wrapped in a figure environment (with caption/label) in the main file.
% Requires: tikz (main preamble) and the operators \range,\rank.
% ---------- panel (a): transversal ----------
\begin{tikzpicture}[scale=1.8,line join=round,font=\footnotesize]
  % range Df (line through x)
  \draw[gray!70,thick] (-1.368,-0.728) -- (1.368,0.728);
  \node[gray!55!black,rotate=28,anchor=south] at (-1.10,-0.585) {$\range Df$};
  % the point x
  \fill (0,0) circle (0.9pt);
  \node[anchor=north east] at (0.1,-0.05) {$x$};
  % left-null direction w (normal to range Df)
  \draw[red!75!black,thick,dashed,->] (0,0) -- (-0.563,1.059);
  \node[red!75!black,anchor=east] at (-0.60,0.62) {$w\ (w^{T}Df=0)$};
  % decomposition of b: projection on range Df + w-component
  \draw[gray,dotted] (0,0) -- (1.0155,0.5394);
  \draw[red!75!black,thick,dotted] (1.0155,0.5394) -- (0.570,1.378);
  \draw[gray,thin] (1.095,0.582) -- (1.053,0.661) -- (0.973,0.619); % right angle
  \node[red!75!black,anchor=west] at (0.86,0.95) {$|w^{T}b|>0$};
  % homotopy direction b (transversal: green)
  \draw[green!55!black,very thick,->] (0,0) -- (0.570,1.378);
  \node[blue!45!black,anchor=west] at (0.60,1.32) {$b=f(x_0)-y^{*}$};
  % title + consequence
  \node[anchor=south] at (0.1,1.62) {(a) transversal: $w^{T}b\neq 0$};
  %\node[draw=green!45!black,fill=green!8,rounded corners,align=center,anchor=north]
  %  at (0.15,-0.95) {$b\notin\range Df\ \Rightarrow\ \rank[\,Df\mid b\,]=m$\\
  %                   bounded crossing, $\rho\to0^{+}$ (Lem.~\ref{lem:kernel}(b), Thm.~\ref{thm:crossing})};
\end{tikzpicture}\hfill
% ---------- panel (b): non-transversal ----------
\begin{tikzpicture}[scale=1.8,line join=round,font=\footnotesize]
  \draw[gray!70,thick] (-1.368,-0.728) -- (1.368,0.728);
  \node[gray!55!black,rotate=28,anchor=south] at (-1.10,-0.585) {$\range Df$};
  \fill (0,0) circle (0.9pt);
  \node[anchor=north east] at (0.1,-0.05) {$x$};
  \draw[red!75!black,thick,dashed,->] (0,0) -- (-0.563,1.059);
  \node[red!75!black,anchor=east] at (-0.60,0.62) {$w\ (w^{T}Df=0)$};
  % b lies in range Df: no w-component
  \draw[red,very thick,->] (0,0) -- (1.280,0.681);
  \node[red!75!black,anchor=south] at (0.58,0.50) {$w^{T}b=0$};
  \node[blue!45!black,anchor=north] at (0.95,0.24) {$b=f(x_0)-y^{*}$};
  \node[anchor=south] at (0.1,1.62) {(b) non-transversal: $w^{T}b=0$};
  %\node[draw=red!70!black,fill=red!8,rounded corners,align=center,anchor=north]
  %  at (0.15,-0.95) {$b\in\range Df\ \Rightarrow\ \rank[\,Df\mid b\,]=m-1$\\
  %                   $A^{+}$ undefined, $\|\dot x\|\to\infty$ (Prop.~\ref{prop:failure})};
\end{tikzpicture}
\caption{The augmented Jacobian $A=[\,Df\mid b\,]$ at a rank-$(m{-}1)$ point, drawn in output space ($m=2$). At the singular point $x$, $\range Df$ is a line and the left-null vector $w$ is its normal ($w^TDf=0$). \textbf{(a)}~The singularity is \emph{transversal} when the homotopy direction $b=f(x_0)-y^*$ is not parallel to $\range Df$---it has nonzero projections on both $w$ and $\range Df$, so $w^Tb\neq0$; the extra column then reaches off the range, $\rank A=m$, and the crossing is bounded with $\rho\to0^{+}$ (Lemma~\ref{lem:kernel}(b), Theorem~\ref{thm:crossing}). \textbf{(b)}~When $b$ lies in $\range Df$ ($w^Tb=0$) the augmentation adds nothing, $\rank A=m-1$, and $A^{+}$ together with the continuation tangent diverge (Proposition~\ref{prop:failure}).}
\label{fig:transversality}
\end{figure}

Lemma~\ref{lem:kernel} is the crux. Part (b) says the augmentation \emph{does} keep the tangent well defined through a rank-$(n-1)$ point (Fig.~\ref{fig:transversality}), but at the price that the kernel direction — the only direction through which we can steer the continuation parameter — rotates until its parameter component $\rho$ collapses to zero exactly at the singularity. Section~\ref{sec:crossing} turns this into quantitative statements about tracking.

\section{Nonlinear trajectory tracking}\label{sec:tracking}

We now track a time-varying reference $y^*(t)$. Sampling $y^*$ at instants $t_k = k\,\Delta T$ turns tracking into a sequence of set-point problems, to each of which the Newton homotopy of Section~\ref{sec:wellbehaved} applies. Replacing the per-step parameter $\lambda$ by a continuously increasing $\mu$ through $\lambda = \mu - t + 1$ (so that $\mu$ is synchronized with time, $\mu(t_k) = t_k$, whenever a sample is reached) and letting $\Delta T \to 0$ yields the continuous-time \emph{tracking homotopy}
\begin{equation}\label{hom_tracking}
H(x,\mu, t) = f(x) - f(x_0) + (\mu - t + 1)\,\bigl[\, f(x_0) - y^*(t) \,\bigr],
\end{equation}
anchored at the initial state $x_0 = x(0)$. If $f(x_0)$ already lies on the reference then $\mu(0)=0$; otherwise $\mu(0) = -1$ absorbs the initial transient. Its time derivative is
\begin{equation}\label{diff_hom_tracking}
\dot H = D f(x)\,\dot x + (\dot\mu - 1)\bigl[ f(x_0) - y^*(t) \bigr] - (\mu - t + 1)\,\dot y^*(t).
\end{equation}
Exact tracking, $f(x) = y^*(t)$ together with $Df(x)\,\dot x = \dot y^*(t)$, requires
\begin{equation}\label{tracking_conditions}
\mu = t, \qquad \dot\mu = 1,
\end{equation}
so for a regular output map the parameter $\mu$ is simply time. In general $\mu$ is allowed to lag $t$ in order to negotiate singularities of $f$; the lag $e = t-\mu$ measures the tracking error, as made precise in Section~\ref{sec:accuracy}.

\subsection{Problem solution}

We now solve the tracking homotopy \eqref{hom_tracking}, following the augmented-continuation approach of \cite{Borisevich2013,Borisevich2015} extended to the time-driven setting.

Below is a block-structure viewpoint of the Jacobian $D H(x,\mu,t)$ for the particular homotopy function \eqref{hom_tracking} obtained by splitting the Jacobian into two main blocks:  

- A: the block corresponding to partial derivatives w.r.t. $(x,\mu)$, which is $n \times (n+1)$ matrix.  

- B: the block (column) corresponding to partial derivatives w.r.t. $t$, which is $n \times 1$ column-vector.

In symbols:

\[
\underbrace{A}_{n\times(n+1)}
\;=\;
\begin{pmatrix}
\dfrac{\partial H}{\partial x} & \dfrac{\partial H}{\partial \mu}
\end{pmatrix} = D_{x,\mu} H,
\quad
\underbrace{B}_{n\times1}
\;=\;
\dfrac{\partial H}{\partial t}.
\]

Hence,
\[
DH(x,\mu,t) 
\;=\;
\bigl[A \;\big|\; B\bigr]
\;\in\;\mathbb{R}^{n\times(n+2)}.
\]

We are going to impose a condition similar to the \eqref{full_rank_H} for invertibility of the block $A$ of the Jacobian

\begin{equation}\label{full_rank_A}
\rank A = \rank D_{x,\mu} H = n
\end{equation}

As for the \eqref{full_rank_H} this is indeed a weaker condition compared with $\rank D f = n$ as it could be that the $\rank D_{x,\mu} H = n$ even if the $\rank D f = n - 1$.

Geometrically or numerically \eqref{full_rank_A} implies that, at the point in question, one can (locally) solve

\[
H(x,\mu,t)\;=\;0
\quad\text{for $\,(x,\mu)$\, given a small change in $t$.}
\]

Because if $A$ has rank $n$, it is possible to invert the linear map from $(\delta x,\,\delta\mu)\mapsto A\,(\delta x,\,\delta\mu)$ at least in a one-dimensional sense (there is exactly one missing dimension corresponding to the fact that $A$ is $n\times(n+1)$). 

Concretely:

1. Dimensionality: we have $n$ equations in $\{H=0\}$ but $n+2$ unknowns $(x,\mu,t)$.
  
2. Rank Condition: if $\mathrm{rank}(A)=n$, we can (locally) treat $(x,\mu)$ as implicit functions of $t$.  

If that rank condition fails, we are at singular or critical point with respect to $(x,\mu)$, and special handling may be required to continue solutions in $(x,\mu,t)$-space.

The $B = \partial H/\partial t$ is the $n\times1$ partial derivative of the homotopy with respect to the time $t$.

In a continuation viewpoint, if one wants to track $\bigl(x(t),\,\mu(t)\bigr)$ as $t$ changes, the condition

\[
H\bigl(x(t),\mu(t),t\bigr)\;=\;0
\]

plus the chain rule leads to

\[
\dot H = A \cdot \dot c + B = 0.
\]

where

\[
\dot c(t) = \begin{pmatrix} \dot{x}(t) \\ \dot{\mu}(t) \end{pmatrix}
\]
  
If $\mathrm{rank}(A)=n$, one can solve for $(\dot{x}(t),\,\dot{\mu}(t))$ substituting $\dot{t} = 1$.  

Hence, in differential-algebraic terms

\[
\underbrace{\begin{pmatrix}
\frac{\partial H}{\partial x} & \frac{\partial H}{\partial \mu}
\end{pmatrix}}_{=A}
\begin{pmatrix}
\dot{x}(t)\\[4pt]
\dot{\mu}(t)
\end{pmatrix}
\;+\;
\underbrace{\frac{\partial H}{\partial t}}_{=B}\
\;=\;
0
\]

that gives a solution

\begin{equation}\label{solution_drift}
\dot c_p = -A^\dagger\,B\,
\end{equation}

where $A^\dagger$ is a right-inverse of $A$ (possible because $\rank A=n$).

The geometric interpretation of the result above is:

1. Constraint $H(x,\lambda,t)=0$ defines an $(n+1)$-dimensional manifold in $\mathbb{R}^{n+2}$ (assuming it’s regular). 
 
2. By setting $\dot{t}=1$ a particular slice in the parameter $(x,\lambda)$-space for each increment of $t$ is chosen.  

3. The drift term $\partial H/\partial t$ represents how $H$ changes with time $t$ alone. In order to remain on $H=0$, the $(x,\lambda)$ must counteract that drift. That’s precisely what solving $A \cdot \dot c = -B$ does. Hence, once $A$ has a left inverse, we can cancel out or neutralize the effect of $B$.

It is important to understand that the solution $\dot c_p$ per \eqref{solution_drift} is not unique as the $A$ is a non-square $ n \times (n+1) $ matrix.
Since $ A $ has full row rank by the assumption \eqref{full_rank_A} then the null space of $ A $ has dimension $ (n+1) - n = 1 $, meaning there is a single linearly independent vector in the null space. 

In this case, the general solution can be written as 

\begin{equation}\label{solution}
\dot c = -A^\dagger\,B + \gamma \cdot \ker(A) = \dot c_p + \gamma \cdot \dot c_h
\end{equation}

where:

- $\dot c_p$ is a particular solution of the $A \cdot \dot c = -B$ per \eqref{solution_drift}.

- $\dot c_h$ is a basis vector for the null space (kernel) of $A$ which can be obtained as the last column of the orthogonal matrix $Q$ from the QR decomposition of $A^T$.

- $\gamma$ is a free scalar parameter.

In order to select a unique solution for $\dot c$ we have to impose an additional constraint. Following the arc‐length continuation methods, the additional conditions to uniquely determine the curve $x(t),\,\mu(t))$ are normalization and orientation conditions:

\begin{equation}\label{conditions}
\begin{gathered}
\| \dot{c_h} \| = 1, \\
\det \begin{pmatrix} A \\ \dot{c_h}^T \end{pmatrix} > 0
\end{gathered}
\end{equation}

The solution \eqref{solution} together with conditions \eqref{conditions} determine unique solution trajectory of the homotopy equation $H(x(t),\mu(t),t)=0$ assuming $H(x(0),\mu(0),0)=0$. This result however doesn't guarantee tracking of a trajectory $y^*(t)$ for a particular homotopy function \eqref{hom_tracking}. The next section solves this.

\subsection{Accuracy of tracking}\label{sec:accuracy}
 
Generally, the conditinos \eqref{tracking_conditions} may not be satisfied everywhere since the reference trajectory $y^*(t)$ is not exactly tracked. Particularly in two cases: if the $f(x_0)$ doesn't lie on the $y^*(t)$ and there is an initial transient to reach the $y^*(t)$, or if the trajectory tracking requires overcoming singular points of $f(x)$.

It is expected to minimize a following type of criteria to ensure accuracy of the trajectory tracking
  
\begin{equation}\label{objective}
\int_0^T \| f(x(t)) - y^*(t) \|^2 dt \to \min
\end{equation}

where $\| v \|^2 = v^T v$.

Minimizing the error $e = t - \mu$ implies minimizing \eqref{objective}.

\begin{equation}
\int_0^T \bigl[ t - \mu(t) \bigr]^2 dt \to \min  \implies  \int_0^T \| f(x(t)) - y^*(t) \|^2 dt \to \min
\end{equation}

The \eqref{hom_tracking} can be rearranged as

\begin{equation}
H(x,\mu, t) = f(x) - y^*(t) + (\mu - t) \cdot \Bigl[ f(x_0) - y^*(t) \Bigr] 
\end{equation}

So satisfying $H = 0$ means

\begin{equation}
f(x) - y^*(t) = (t - \mu) \cdot \Bigl[ f(x_0) - y^*(t) \Bigr] 
\end{equation}

thus

\begin{equation}
\int_0^T \| f(x) - y^*(t) \|^2 dt = \int_0^T \bigl[ t - \mu(t) \bigr]^2 \cdot \| f(x_0) - y^*(t) \|^2 dt
\end{equation}

If $y^*(t)$ is bounded around $f(x_0)$ as it lies inside of a ball of radius $M$ centered at $f(x_0)$, i.e. $\|f(x_0) - y^*(t)\| < M$ then

\begin{gather}
\int_0^T \| f(x) - y^*(t) \|^2 dt = \int_0^T \bigl[ t - \mu(t) \bigr]^2 \cdot \| f(x_0) - y^*(t) \|^2 dt \\
< M^2 \int_0^T \bigl[ t - \mu(t) \bigr]^2 dt
\end{gather}

The solution \eqref{solution} has an additional degree of freedom $\gamma$ that can be used to force the trajectory $\mu(t)$ converge to $t$. 

If 

\begin{equation}
e(t) = t - \mu(t)
\end{equation}

and

\begin{equation}
\gamma = k \cdot e(t)
\end{equation}

then for a sufficiently large scalar constant $k > 0$ the error $e(t)$ is bounded $|e(t)| < \epsilon$.

We define the block matrices $A$ and $B$ for the homotopy function \eqref{diff_hom_tracking} as

\begin{equation}\label{AB_blocks}
\begin{gathered}
A = \begin{pmatrix}
D f(x) & f(x_0) - y^*(t)
\end{pmatrix} \\
B = y^*(t) - f(x_0) - (\mu - t + 1) \cdot \dot y^*(t)
\end{gathered}
\end{equation}

Consider the behaviour away from singularities, where $f(x)$ is regular in the sense $\rank D f(x)=n$ and $y^*(t)$ can be tracked exactly.
In this case a particular form of the right inverse $A^\dagger$ of the matrix $A$ is simply:

\begin{equation}
A^\dagger = \begin{pmatrix}
\Bigl[ D f(x) \Bigr]^{-1} \\ 0
\end{pmatrix}
\end{equation}

Thus the dynamics of $\mu(t)$ is fully driven by the homogeneous part solution:

\begin{equation}
\begin{pmatrix} \dot{x}(t) \\ \dot{\mu}(t) \end{pmatrix} = \gamma \cdot \ker(A)
\end{equation}

also subject to additional constraints \eqref{conditions}.

The kernel of the matrix $A$ has the following closed form as the $D f(x)$ is invertible:

\begin{equation}
\ker(A) = \begin{pmatrix}
\Bigl[ - D f(x) \Bigr]^{-1} \cdot (f(x_0) - y^*(t)) \\ 1
\end{pmatrix}
\end{equation}

So as expected, the solution trajectory by the \eqref{solution} for \eqref{diff_hom_tracking} with unity scaling $\gamma = 1$ produced $\dot \mu = 1$.
Now if $\gamma = k \cdot [t - \mu(t)]$ the final dynamic is

\begin{equation}
\dot \mu = k \cdot [t - \mu(t)]
\end{equation}

the solution with the initial condition is

\begin{equation}
\mu(t) = t - \frac{1}{k} + \Bigl(\mu(0)+\frac{1}{k}\Bigr) \cdot e^{-kt} \to t - \frac{1}{k}
\end{equation}

Thus for $t \to \infty$ the $\mu(t)$ follows $t$ up to to $-1/k$ and the asymptotic tracking error is

\begin{equation}
|e(t)| = |t - \mu(t)| \le 1/k
\end{equation}

\begin{remark}[Removing the $O(1/k)$ bias]\label{rem:exactlocking}
The residual $1/k$ is a property of the proportional choice $\gamma = k\,e$, not of the method: holding the steady-state value $\gamma_{ss} = (1-d)/\rho$ (where $d$ is the parameter component of the particular solution and $\rho$ the parameter component of the unit kernel, both computed by the controller at every step) requires $e \ne 0$. Two standard compensators remove the bias (Figure~\ref{fig:lockingarch}). \emph{Drift feedforward}, $\gamma = \operatorname{sat}_M\!\bigl((1-d)/\rho\bigr) + k e$, cancels the drift exactly: in the regular region the locking dynamics becomes $\dot e = -k\rho e$ and $e \to 0$ exponentially with no residual; the saturation (gated off as $\rho \to 0$) is mandatory, since $(1-d)/\rho$ diverges at the singularity --- the authority--boundedness tradeoff of Remark~\ref{rem:tradeoff} reasserting itself, and no compensator can do better there. Under a bounded mismatch $\Delta$ in the computed drift the residual degrades to $\Delta/(k\rho_0)$, mismatch level rather than drift level. \emph{Integral action}, $\gamma = \operatorname{sat}\bigl(k_p e + k_i \int e\bigr)$ with conditional-integration (clamping) anti-windup --- the integrator is frozen whenever the unclamped compensator output is saturated in the direction the error pushes, the industrial standard --- removes constant residuals with no model of $d$ at all; for constant drift and frozen authority the locking error converges to zero exponentially, by a strict quadratic Lyapunov function with cross term for the two-dimensional locking loop. In practice the two are best combined: PI feedback with the saturated drift feedforward added inside the same clamped regulator, so the feedforward carries the bulk of $\gamma_{ss}$, the integrator trims only the feedforward mismatch, and the conditional-integration logic protects both terms through the fold transits with a single saturation. Both statements are machine-checked (Section~\ref{sec:lean}), and both compensators are compared numerically in Section~\ref{sec:lockingsims}. At the singular instant itself all locking laws coincide in effect, since $\rho \to 0$ disables the feedback path; the guarantees of Theorem~\ref{thm:crossing} are unaffected.
\end{remark}

\subsection{Crossing singularities and recovery of tracking}\label{sec:crossing}

The regular-region analysis above breaks down precisely where the method is supposed to earn its keep: at points where $\rank D f(x) = n-1$. There the closed form of the kernel used above involves $[Df(x)]^{-1}$ and diverges. To analyze the crossing we use instead the quantities that stay finite through a transversal singularity, as identified in Lemma~\ref{lem:kernel}.

With $\dot t = 1$, write the continuation flow in the form
\begin{equation}\label{flow}
\dot c = -A^\dagger B + k\, e\, \hat n, \qquad e = t - \mu, \quad k > 0,
\end{equation}
where $A^\dagger = A^T (A A^T)^{-1}$ is the minimum-norm right inverse (well defined whenever $\rank A = n$), and $\hat n$ is the unit vector spanning $\ker A$, oriented so that its last component $\rho = \hat n_{n+1} \ge 0$.

\begin{definition}[Continuation control law]\label{def:controllaw}
The controller carries an internal scalar state $\mu$ and, at each time $t$ (plant state $x$, locking error $e=t-\mu$, gain $k>0$), computes:
\begin{enumerate}
\item[(i)] homotopy data: $b=f(x_0)-y^*(t)$, augmented matrix $A=[\,Df(x)\mid b\,]\in\R^{n\times(n+1)}$, and drift $B=-b-\dot y^*(t)(1-e)$;
\item[(ii)] least-norm particular solution $-A^\dagger B$, with $A^\dagger=A^T(AA^T)^{-1}$;
\item[(iii)] unit kernel $\hat n$: any vector with $A\hat n=0$, $\|\hat n\|=1$ (e.g.\ the last right-singular vector of $A$, or the normalized cross product of the two rows of $A$ when $n=2$), its \emph{sign} fixed by the orientation rule below;
\item[(iv)] flow \eqref{flow}: $\dot c=(\dot x,\dot\mu)=-A^\dagger B+k\,e\,\hat n$; the first $n$ components are the commanded velocity $\dot x$, and $\mu$ is integrated by $\dot\mu$.
\end{enumerate}
Because $\ker A$ is one-dimensional, $\hat n$ is unique up to sign; that sign --- the \emph{orientation} --- is the controller's single behavioural degree of freedom:
\begin{description}
\item[\normalfont(R)\ $\rho\ge0$ (reflecting):] fix the sign by $\rho:=\hat n_{n+1}\ge0$. Away from singularities this coincides with the continuous tangent, but where $\rho$ would change sign (at a fold) it overrides the arc-length orientation \eqref{conditions} and flips $\hat n$, so the tangent is \emph{discontinuous} there --- the reflection, realized as a Filippov switch. It keeps $\dot\mu$ oriented forward and the locking feedback dissipative (Theorem~\ref{thm:reflection}(i)); it is the default and the orientation assumed in the analysis of this section.
\item[\normalfont(C)\ continuity (crossing):] take $\hat n(t)$ to be the \emph{continuous} unit tangent field along the solution curve --- equivalently, fix its sign by the arc-length orientation \eqref{conditions}, $\det\!\begin{pmatrix}A\\ \hat n^{T}\end{pmatrix}>0$, held at constant sign as $t$ evolves --- so the tangent never reverses and $\rho$ passes smoothly through zero; at a fold the state crosses to the opposite sheet (Theorem~\ref{thm:reflection}(ii)).
\end{description}
For the dynamic plant \eqref{eq:sys} replace $Df\to\Lambda(x)$ and $B\to a(x)-b-\dot y^*(t)(1-e)$, and read $\dot x$ as the control $u$ (Section~\ref{sec:dynamic}).
\end{definition}

Splitting $\dot c = (\dot x, \dot \mu)$ and writing $d = \bigl( -A^\dagger B \bigr)_{n+1}$ for the parameter component of the particular solution, the locking error obeys the scalar equation
\begin{equation}\label{error_ode}
\dot e \;=\; 1 - d(x,t) - k\, \rho(x,t)\, e .
\end{equation}
Here $\rho$ is the feedback authority over $\mu$: by Lemma~\ref{lem:kernel} it equals $(1 + \|[Df]^{-1}b\|^2)^{-1/2} \in (0,1]$ away from singularities and drops to $0$ at a transversal singularity.

We make the following standing assumptions along the trajectory:

\begin{itemize}
\item[(A1)] $f \in C^2$ and $y^* \in C^1$ on a neighborhood of the trajectory, and $b(t), \dot y^*(t)$ are bounded.
\item[(A2)] (Global transversality.) Every instant with $\rank Df(x(t)) = n-1$ is a transversal singularity in the sense of Definition~\ref{def:transversal}. Hence $\rank A(t) = n$ for all $t$, and the flow \eqref{flow} has a finite right-hand side everywhere.
\item[(A3)] (Isolated singularity.) The singular instants are isolated at $t^*$: there exist $\delta > 0$ and $\rho_0 > 0$ with $\rho(x(t),t) \ge \rho_0$ for all $t \notin I_\delta = (t^*-\delta, t^*+\delta)$.
\end{itemize}

Under (A1)--(A2) the map $(AA^T)^{-1}$ is bounded along the trajectory (its smallest singular value is bounded below by transversality), so
\[
\bar D \;=\; \sup_t \bigl| 1 - d(x(t),t) \bigr| \;<\; \infty .
\]

\begin{theorem}[Bounded crossing and recovery of tracking]\label{thm:crossing}
Under assumptions (A1)--(A3), the continuation flow \eqref{flow} satisfies:
\begin{enumerate}
\item[1.] (No blow-up.) The right-hand side of \eqref{flow} is finite for all $t$, including at $t^*$; in particular $\|\dot x(t)\| \le \|A^\dagger B\| + k|e|$ stays bounded through the singularity, and near $t^*$ the injected motion $k e\, \hat n$ is directed along $\ker D f(x)$.
\item[2.] (Contraction away from the singularity.) On any interval where $\rho \ge \rho_0$,
\[
|e(t)| \;\le\; |e(t_1)|\, e^{-k \rho_0 (t - t_1)} \;+\; \frac{\bar D}{k \rho_0}.
\]
In particular the output error obeys $\| f(x) - y^* \| = |e|\, \|b\| = O(1/k)$ in the regular regime.
\item[3.] (Bounded excursion during the crossing.) For all $t \in \overline{I_\delta}$,
\[
|e(t)| \;\le\; |e(t^*-\delta)| \;+\; 2\delta\, \bar D ,
\]
so the transient growth of the tracking error while feedback authority is lost is at most $2\delta\bar D$ in magnitude, and the output error stays bounded by $\|b\|_\infty\bigl(|e(t^*-\delta)| + 2\delta\bar D\bigr)$.
\item[4.] (Re-locking.) For $t \ge t^* + \delta$, item~2 applies from the value $e(t^*+\delta)$; hence $e(t)$ re-converges exponentially to the $O(\bar D/(k\rho_0))$ band. The flow returns to $\mu \approx t$ and reference tracking is recovered after the crossing.
\end{enumerate}
\end{theorem}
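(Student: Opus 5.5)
The plan is to reduce everything to the scalar locking equation \eqref{error_ode}, $\dot e = 1-d-k\rho e$, together with the structural facts from Lemma~\ref{lem:kernel}. Items 2--4 then become one-dimensional comparison and Gr\"onwall estimates. Item 1 is a pointwise statement about the right-hand side of \eqref{flow}.

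\textbf{Item 1.} By (A2), $\rank A(t)=n$ for all $t$, so $AA^T$ is invertible and $A^\dagger=A^T(AA^T)^{-1}$ is defined. I would argue that the smallest singular value $\sigma_{\min}(A(t))$ is bounded below along the trajectory. Off $I_\delta$ this follows from $\rho\ge\rho_0$ and the closed form in Lemma~\ref{lem:kernel}(a). On the compact closure $\overline{I_\delta}$ it follows from continuity of $\sigma_{\min}$, which is positive at every point by (A2).

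With (A1) this gives boundedness of $\|A^\dagger B\|$. Since $\|\hat n\|=1$, we get $\|\dot x\|\le\|\dot c\|\le\|A^\dagger B\|+k|e|$, which is finite provided $e$ stays bounded; that is delivered by items 2--3, so the argument runs jointly. The direction claim comes from Lemma~\ref{lem:kernel}(b): at $t^*$ we have $\ker A=\ker Df\times\{0\}$, and since the unit kernel vector depends continuously on $A$ when the rank is full, $\hat n$ is close to $\ker Df\times\{0\}$ near $t^*$.

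\textbf{Item 2.} On an interval where $\rho\ge\rho_0$, I would use $V=|e|$, or $V=e^2$ to avoid nonsmoothness. This gives $\tfrac{d}{dt}\tfrac12 e^2 = e(1-d) - k\rho e^2\le |e|\bar D - k\rho_0 e^2$, hence $D^+|e|\le \bar D - k\rho_0|e|$. The comparison lemma with the linear ODE $\dot z=\bar D-k\rho_0 z$ then gives the stated bound, since $z(t)\le z(t_1)e^{-k\rho_0(t-t_1)}+\bar D/(k\rho_0)$. The output-error claim follows from the identity $f(x)-y^*=e\,b$, which holds because $H\equiv0$ along the flow (shown in Section~\ref{sec:accuracy}).

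\textbf{Item 3.} On $\overline{I_\delta}$ only $\rho\ge0$ is available, and the orientation rule (R) guarantees it. Hence $-k\rho e$ never pushes $|e|$ outward, and $D^+|e|\le|1-d|\le\bar D$. Integrating over an interval of length at most $2\delta$ gives $|e(t)|\le|e(t^*-\delta)|+2\delta\bar D$. Multiplying by $\|b\|_\infty$ gives the output bound.

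\textbf{Item 4.} This is item 2 restarted at $t_1=t^*+\delta$, using (A3) for $t\ge t^*+\delta$. The initial value is controlled by item 3.

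\textbf{Main obstacle.} I expect the hard part to be the claim $\bar D<\infty$, i.e.\ the uniform lower bound on $\sigma_{\min}(A)$ that item 1 and all the constants rely on. Pointwise full rank does not by itself give a uniform bound over an unbounded time horizon. I would first try to secure it on $\overline{I_\delta}$ by compactness, and off $I_\delta$ by $\rho\ge\rho_0$ together with boundedness of $b$. If that fails, I would add this as an explicit consequence of (A1)--(A2) on bounded trajectories.

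A secondary technical point concerns the sign flip of $\hat n$ under rule (R). Because of it, $\dot c$ is only piecewise continuous, so I would interpret solutions in the Carath\'eodory/Filippov sense. The comparison arguments are unaffected, since they use only $\rho\ge0$ and the bound $\bar D$, both of which hold for every selection.
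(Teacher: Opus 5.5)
Your argument for items 1--4 is the paper's proof. You use Lemma~\ref{lem:kernel}(b) and full rank of $A$ for item~1, and the scalar locking equation \eqref{error_ode} with a comparison estimate for item~2. Your one-sided inequality $D^+|e|\le\bar D-k\rho_0|e|$ replaces the paper's variation-of-constants formula, which is an inessential difference. Item~3 uses $\tfrac{d}{dt}\tfrac12 e^2\le|e|\bar D$ under $\rho\ge0$, and item~4 restarts at $t^*+\delta$. Granted $\bar D<\infty$, which the paper asserts before the theorem rather than proving inside it, this is complete. For item~1, your compactness argument on $\overline{I_\delta}$ already suffices.

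Where you go beyond the paper to secure $\bar D<\infty$, the argument fails in two places.

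First, $\rho\ge\rho_0$ gives no lower bound on $\sigma_{\min}(A)$ off $I_\delta$. If $s=\sigma_{\min}(Df)$ with left singular vector $u$, then
\[
\|A^Tu\|^2=s^2+(u^Tb)^2\le s^2\bigl(1+\|[Df]^{-1}b\|^2\bigr)=s^2/\rho^2 ,
\]
so $\sigma_{\min}(A)\le\sigma_{\min}(Df)/\rho$. Together with $AA^T\succeq Df\,Df^T$, this shows that off $I_\delta$ the conditioning of $A$ equals that of $Df$ up to the factor $1/\rho_0$. A large $\rho$ only forces $b$ to be small along the near-degenerate direction; it does not keep $A$ well conditioned. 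What you need is a uniform bound $\sigma_{\min}(Df(x(t)))\ge s_0$ for $t\notin I_\delta$, or compactness on a finite horizon. On $[0,\infty)$ this is an extra hypothesis, not a consequence of (A1)--(A3). The paper's phrase ``bounded below by transversality'' asserts it without argument.

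Second, $B=-b-\dot y^{*}(1-e)$ contains $e$. So (A1) alone does not bound $\|A^\dagger B\|$, and your joint bootstrap between item~1 and items~2--3 is circular. The constant $\bar D$ itself depends on $e$: $d=d_0-c\,e$ with $d_0=b^T(AA^T)^{-1}(b+\dot y^{*})$ and $c=b^T(AA^T)^{-1}\dot y^{*}$. There are two ways out:
\begin{enumerate}
\item[(a)] Read $\bar D$ as an a~posteriori supremum along the solution, as the paper's definition literally does.
\item[(b)] Absorb $c\,e$ into the rate, writing $\dot e=(1-d_0)-(k\rho-c)\,e$. Set $\bar D_0=\sup|1-d_0|$ and $\bar c=\sup|c|$. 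Item~2 then holds with $\bar D_0$ and $k\rho_0-\bar c$ in place of $\bar D$ and $k\rho_0$, for $k>\bar c/\rho_0$. Item~3 becomes $|e(t)|\le\bigl(|e(t^*-\delta)|+2\delta\bar D_0\bigr)e^{2\delta\bar c}$.
\end{enumerate}
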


\begin{proof}
\emph{Item 1.} By Lemma~\ref{lem:kernel}(b) and (A2), $\rank A = n$ at $t^*$, so $A^\dagger = A^T(AA^T)^{-1}$ is bounded and $\hat n$ is a unit vector; thus the right-hand side of \eqref{flow} is finite. As $\rho \to 0$, Lemma~\ref{lem:kernel}(b) gives $\hat n \to (v, 0)$ with $v \in \ker Df(x)$, so the injected term $ke\,\hat n$ aligns with $\ker Df(x) \times \{0\}$. The velocity bound is immediate from \eqref{flow}.

\emph{Item 2.} On the interval, \eqref{error_ode} reads $\dot e = -k\rho\, e + (1-d)$ with $k\rho \ge k\rho_0 > 0$ and $|1-d| \le \bar D$. The comparison principle for the scalar linear equation with nonnegative time-varying rate gives
\[
|e(t)| \le |e(t_1)| e^{-\int_{t_1}^t k\rho\, d\tau} + \int_{t_1}^t e^{-\int_\tau^t k\rho}\,\bar D\, d\tau \le |e(t_1)| e^{-k\rho_0 (t-t_1)} + \frac{\bar D}{k\rho_0}.
\]
The output identity $f(x) - y^* = e\, b$ (from $H = 0$) yields the norm bound.

\emph{Item 3.} Since the orientation is fixed so that $\rho \ge 0$, the feedback term is dissipative. Using \eqref{error_ode},
\[
\frac{d}{dt}\,\tfrac{1}{2} e^2 \;=\; e\,\dot e \;=\; e(1-d) - k\rho\, e^2 \;\le\; |e|\,\bar D ,
\]
hence $\frac{d}{dt}|e| \le \bar D$ wherever $e \ne 0$. Integrating over $\overline{I_\delta}$ (length $2\delta$) gives $|e(t)| \le |e(t^*-\delta)| + 2\delta\bar D$. The output bound follows from $f(x)-y^* = e\, b$.

\emph{Item 4.} Apply item~2 with $t_1 = t^*+\delta$.
\end{proof}

The theorem answers the practical question directly: as long as every singularity met along the way is \emph{transversal}, the parameter $\mu$ is allowed to fall behind time $t$ by a controlled amount during the crossing (at most $2\delta \bar D$), and it provably re-locks to $t$ afterwards, restoring tracking. The condition that decides success is transversality, and its failure is genuinely fatal.

\begin{proposition}[Loss of transversality]\label{prop:failure}
Suppose at some $t^*$ the singularity is non-transversal, i.e. $\rank Df(x) = n-1$ and $w^T b(t^*) = 0$. Then $\rank A(t^*) = n-1 < n$: the matrix $A A^T$ is singular, $A^\dagger$ and the continuation tangent are undefined, and the solution set of $H = 0$ has a fold or bifurcation in $(x,\mu,t)$. Forward-time integration with $\dot t = 1$ generically forces $\|\dot x\| \to \infty$ (the naive continuous-Newton blow-up reappears), so no finite-velocity crossing exists and tracking cannot be recovered by the flow alone.
\end{proposition}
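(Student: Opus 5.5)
The plan has two parts. The rank statement is pure linear algebra and mirrors Lemma~\ref{lem:kernel}(b). The blow-up statement needs a local asymptotic analysis of the flow near $t^*$.

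\emph{Step 1 (rank drop).} At $t^*$ we have $\range A(t^*) = \range Df(x) + \spanop\bigl(b(t^*)\bigr)$. Since $w^T b(t^*) = 0$ and $w$ spans the left null space, $b(t^*) \in (\spanop w)^\perp = \range Df(x)$. Hence $\range A(t^*) = \range Df(x)$, which has dimension $n-1$. It follows that $AA^T$ is singular, $A^\dagger = A^T(AA^T)^{-1}$ is undefined, and $\ker A$ is two-dimensional. The unit kernel $\hat n$ is therefore no longer determined up to sign, so the continuation tangent of Definition~\ref{def:controllaw} is undefined. For the geometric statement, I would note that the full Jacobian $DH = [A \mid B]$ has rank at most $n$. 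If $B(t^*) \notin \range Df$, the zero set of $H$ is still a manifold, but its projection to the $t$-axis is singular: a fold in $t$. If also $B(t^*) \in \range Df$, then $\rank DH < n$ and the zero set has a bifurcation point. Stating the dichotomy this way suffices for the qualitative claim.

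\emph{Step 2 (blow-up of the forward-time flow).} Near $t^*$, with $\dot t = 1$, I would decompose along $w$ and $\range Df$. Write $\sigma(t)$ for the smallest singular value of $A(t)$, which tends to $0$, and let $u_\sigma$ be the associated left singular vector, which tends to $w/\|w\|$. Then
\[
A^\dagger B = \sum_i \sigma_i^{-1} v_i\, u_i^T B,
\]
and the term $\sigma^{-1} v_\sigma\, u_\sigma^T B$ dominates. Genericity will be stated as $w^T B(t^*) \ne 0$, i.e.\ the drift has a component along the uncontrollable mode (the same $B \notin \range Df$ condition as in the fold case). Under this assumption $\|A^\dagger B\| \ge |u_\sigma^T B|/\sigma - O(1) \to \infty$ as $t \to t^*$. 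The term $ke\hat n$ cannot compensate, because it lies in $\ker A$ and is orthogonal to $\range A^T \ni A^\dagger B$. Hence $\|\dot c\| \to \infty$. Boundedness of $\dot\mu$ then forces $\|\dot x\| \to \infty$, or else $\dot\mu$ itself diverges. Either way the velocity command is unbounded.

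\emph{Step 3 (no finite-velocity crossing).} Suppose a $C^1$ solution $(x(t),\mu(t))$ with bounded velocity reached $t^*$. Differentiate $H = 0$ and project onto $w$:
\[
w^T Df\,\dot x + (\dot\mu - 1)\,w^T b - (\mu - t + 1)\,w^T \dot y^* = 0.
\]
At $t^*$ the first two terms vanish, so this forces $w^T B(t^*) = 0$, contradicting genericity. Therefore no bounded-velocity curve with $\dot t = 1$ passes through $t^*$. Since the flow only moves along $\ker A$ plus the particular solution, the locking term cannot restore it.

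The main obstacle is Step 2: the singular-value perturbation argument needs $\sigma(t) \to 0$ at a controlled rate, and the alignment $u_\sigma \to w$ must persist as $x(t)$ itself moves. I would handle this by a first-order expansion of $w(t)^T A(t)$ in $t$, so that $\sigma(t) = O(|t - t^*|)$. I would then make ``generically'' precise as the single nondegeneracy condition $w^T B(t^*) \ne 0$, and treat the alternative (a bifurcation point) only qualitatively.
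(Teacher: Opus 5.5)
Your proposal is correct and follows the paper's proof: $\range A(t^*)=\range Df(x)$ gives the rank drop, and your $w$-projection in Step~3 is exactly the paper's observation that $A\dot c=-B$ is solvable only if $B\in\range Df(x)$. The fold/bifurcation split is also the paper's, according to whether $w^TB(t^*)\ne 0$. The one thing to drop is the ``main obstacle'' in Step~2. No rate on $\sigma(t)$ is needed, and the first-order expansion you propose would presuppose a bounded $\dot x$, which makes it circular. Your Step~3 projection already gives $\|\dot c(t)\|\ge |w^TB(t)|/\|A(t)^Tw\|\to\infty$ for any velocity satisfying $A\dot c=-B$, locking term included.
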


\begin{proof}
If $w^T b = 0$ then $b \in \range Df(x)$, so $\range A = \range Df(x)$ has dimension $n-1$ and $A$ is row-rank deficient; $AA^T$ is singular. The constraint $A\dot c = -B$ is then solvable only if $B \in \range A = \range Df(x)$. When $B \notin \range Df(x)$ there is no finite tangent compatible with $\dot t = 1$, forcing $\|\dot c\| \to \infty$; when $B \in \range Df(x)$ the solution set gains a dimension, which is a bifurcation of the continuation curve.
\end{proof}

\begin{remark}[Reciprocity / crossing criterion]
The transversality condition \eqref{transversality}, $w^T b \ne 0$, is the continuation-flow counterpart of the dynamic reciprocity criterion for crossing Type~2 singularities of parallel mechanisms: the driving direction must have a nonzero component along the uncontrollable mode $w$. Here the driving direction is simply the homotopy direction $b = f(x_0) - y^*$, and the criterion emerges from a purely kinematic (rank) argument rather than from the inverse dynamic model.
\end{remark}

\begin{remark}[Authority--boundedness tradeoff]\label{rem:tradeoff}
The exact residual $1/k$ obtained in Section~\ref{sec:accuracy} corresponds to the \emph{scaled} kernel $(-[Df]^{-1}b, 1)$, whose parameter component is normalized to $1$: this preserves full feedback authority ($\dot e = 1 - ke$) but its state part $-[Df]^{-1}b$ diverges at the singularity. The \emph{unit} kernel used in Theorem~\ref{thm:crossing} keeps $\dot x$ bounded across the singularity at the cost of vanishing authority ($\rho \to 0$) exactly at $t^*$. One cannot have both: bounded state velocity and undiminished parameter authority are incompatible at a rank-$(n-1)$ point. This is the precise sense in which $\mu$ is ``poorly controlled'' near the singularity, and item~3 of the theorem bounds the resulting excursion.
\end{remark}

\begin{remark}[Why $n \ge 2$ is needed to see the effect]
In the scalar toy example ($n = 1$) a rank-$(n-1)$ point means $Df = 0$, the left null vector is $w = 1$, and transversality reads $f(x_0) - y^* \ne 0$, which holds whenever the target differs from the anchor. The null space $\ker Df = \R$ is the whole line, so there is no separate uncontrollable submanifold and the ``wandering'' motion of item~1 is invisible. The genuine difficulty — bounded but off-task motion along $\ker Df$ during the crossing — first appears for $n \ge 2$, e.g. the 2-DOF arm, where $\ker Df$ is a proper subspace.
\end{remark}

\section{Reachability and the reflection--crossing dichotomy}\label{sec:reflection}

Remark~\ref{rem:tradeoff} identifies a genuine choice at the singularity: the one-dimensional kernel of $A$ must serve two ends at once --- advancing the homotopy parameter $\mu$ and steering the state through $\ker Df$ --- and these compete. The choice is made by the \emph{orientation} of the unit kernel vector $\hat n$, and the two admissible orientations produce qualitatively different lifts: \emph{reflection} (the state turns back onto the sheet it arrived on) or \emph{crossing} (it passes to the opposite sheet). This section resolves three questions in order: (a) at a generic fold, what does each orientation do (Theorem~\ref{thm:reflection}); (b) \emph{which} orientation should be used, decided by a \emph{reachability} condition on the reference (Proposition~\ref{prop:reach}); and (c) how large is the transient excursion the state incurs while it negotiates the fold (Proposition~\ref{prop:excursion}). Throughout we take the kinematic case $\dot x = u$, $y = \varphi(x)$, $\varphi:\R^n\to\R^n$ (so $Df = D\varphi$), where the fold structure is cleanest; the dynamic case follows via the substitution $D\varphi \to \Lambda$ of Section~\ref{sec:dynamic}.

\paragraph{The reachability question.} Near a fold the two sheets $S_\pm$ both map onto the same reachable half-space of outputs (Proposition~\ref{prop:foldnf}); the reference is either \emph{feasible} (a preimage exists nearby) or, having crossed the fold's critical value, \emph{infeasible} (no nearby preimage). The engineering choice is then sharp: if the piece of reference beyond the fold is infeasible near the encountered fold, no controller reaches it and the state should ride the boundary and return (reflection); if instead that piece remains reachable but its continuous preimage lies on the opposite sheet, the state must pass through (crossing). We make this precise below, after fixing the fold model.

Before localizing we state precisely which singularities the analysis covers.

\begin{definition}[Fold singularity]\label{def:fold}
Let $\varphi:\R^n\to\R^n$ be $C^\infty$ and let $x_*$ be a critical point with a \emph{corank-one} rank drop, $\rank D\varphi(x_*)=n-1$; let $r$ span the kernel $\ker D\varphi(x_*)$ and $w$ the left null space, $w^TD\varphi(x_*)=0$. The point $x_*$ is a \emph{fold} if it is Whitney-nondegenerate,
\begin{equation}\label{eq:foldnondeg}
w^{T}\,D^{2}\varphi(x_*)(r,r)\ \ne\ 0,
\end{equation}
i.e.\ the second derivative of the lost output component $w^{T}\varphi$ along the uncontrollable direction $r$ does not vanish.
\end{definition}

\begin{proposition}[Whitney fold normal form and its scope]\label{prop:foldnf}
At a fold (Definition~\ref{def:fold}) there exist local coordinates $x\mapsto(\xi,z)\in\R\times\R^{n-1}$ about $x_*$ and $y\mapsto(Y,\zeta)$ about $\varphi(x_*)$ in which
\begin{equation}\label{eq:whitneynf}
Y=y_c+c\,\xi^{2},\qquad \zeta=z,\qquad c\ne0,
\end{equation}
so that, transverse to the critical set $\{\xi=0\}$, a single output coordinate folds quadratically while the remaining $n-1$ stay regular \cite{GolubitskyGuillemin1973}. Folds are the \emph{generic} corank-one singularities of maps $\R^n\to\R^n$: those maps whose singularities are all folds (and, in codimension two, cusps) form an open dense set, and a generic reference --- one outside a thin, perturbation-removable exceptional set --- crosses the fold hypersurface transversally, at isolated instants.
\end{proposition}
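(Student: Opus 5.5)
The plan is to derive the normal form \eqref{eq:whitneynf} in three successive reductions: split off the $n-1$ regular output components, straighten the resulting submersion, and then apply the one-variable Morse lemma to the single remaining component. The genericity assertions come from citing Thom--Whitney transversality.

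\emph{Step 1 (regular directions).} Since $\rank D\varphi(x_*)=n-1$, choose a linear output change $y\mapsto(Y_0,\eta)$ with $Y_0=w^Ty$ and $\eta=Py\in\R^{n-1}$, where $P$ is a complement of $w$ such that $P\,D\varphi(x_*)$ has rank $n-1$. The map $\eta\circ\varphi:\R^n\to\R^{n-1}$ is then a submersion at $x_*$. By the implicit function theorem (rank theorem) there are local coordinates $(\xi,z)$ about $x_*$ in which $\eta\circ\varphi=z$. We take $\xi$ along the fibre direction, which at $x_*$ is tangent to $r$.

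\emph{Step 2 (the folding component).} In these coordinates write $Y_0\circ\varphi=F(\xi,z)$. At $x_*$ we have $\partial_\xi F=w^TD\varphi(x_*)r=0$. Differentiating twice along $r$ gives $\partial^2_\xi F(0,0)=w^TD^2\varphi(x_*)(r,r)$ plus terms coming from the curvature of the fibre. Those terms pair $w^T D\varphi(x_*)$ with a second-order correction, so they vanish, and the nondegeneracy condition \eqref{eq:foldnondeg} gives $\partial_\xi^2F\neq0$.

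\emph{Step 3 (Morse lemma with parameters).} By the implicit function theorem, $\partial_\xi F(\xi,z)=0$ has a unique local solution $\xi=\sigma(z)$. Shift $\xi\mapsto\xi-\sigma(z)$ and apply Taylor's theorem with integral remainder:
\[
F=F(\sigma(z),z)+\xi^2 q(\xi,z),\qquad q(0,0)\ne0 .
\]
Set $\tilde\xi=\xi\sqrt{|q|/|c|}$ for a chosen constant $c$ with $\operatorname{sign} c=\operatorname{sign}q(0,0)$. Then absorb $F(\sigma(z),z)$ by the output change $Y=Y_0-G(\eta)$ with $G(z):=F(\sigma(z),z)-y_c$. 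This output change is legitimate because $\eta=z$ on the image. The result is $Y=y_c+c\tilde\xi^2$ and $\zeta=\eta=z$.

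The step I expect to be the main obstacle is Step 2: showing cleanly that the second derivative of $F$ in the adapted coordinates equals the intrinsic quantity $w^TD^2\varphi(x_*)(r,r)$ up to a nonzero factor, independently of the coordinate choice. I would handle it by noting that the critical point is a zero of $w^TD\varphi$ along the kernel. The intrinsic second derivative is therefore well defined on $\ker D\varphi(x_*)$ modulo $\range D\varphi(x_*)$, which is exactly what $w^T$ annihilates.

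The genericity statements are not proved from scratch. By the Thom transversality theorem, for an open dense set of $\varphi$ the $1$-jet is transverse to the corank-one stratum $\Sigma^1$, which makes the critical set a smooth hypersurface. Transversality of the $2$-jet to $\Sigma^{1,1}$ makes the points where \eqref{eq:foldnondeg} fails a codimension-one subset of the critical set, namely the cusps; corank $\ge2$ has codimension $4$ and is absent generically. This is \cite{GolubitskyGuillemin1973}, Ch.~VI. Finally, for a $C^1$ reference $x(t)$ and the fixed critical hypersurface $\{\det D\varphi=0\}$, parametric transversality applied to perturbations of the reference yields transverse intersections, and these occur at isolated instants on compact time intervals.
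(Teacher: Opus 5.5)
Your Steps 1--3 are correct. They are the classical splitting / parametric-Morse-lemma proof of the fold normal form. The paper gives no argument of its own here: it imports the result from \cite{GolubitskyGuillemin1973}, and its formalization takes it as a hypothesis. So this part of your proposal supplies what the paper leaves to the citation.

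The obstacle you expect in Step~2 is not real. The whole differential $D(w^T\varphi)(x_*)=w^TD\varphi(x_*)$ vanishes. Also $\partial_\xi x(0)=\lambda r$ with $\lambda\ne0$, because $\ker D(\eta\circ\varphi)(x_*)=\ker D\varphi(x_*)$. The chain rule therefore gives exactly $\partial_\xi^2F(0,0)=\lambda^2\,w^TD^2\varphi(x_*)(r,r)$, with no coordinate dependence to manage.

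Your genericity paragraph, however, contains a false step. The claim ``corank $\ge2$ has codimension $4$ and is absent generically'' fails for $n\ge4$. There a transverse $1$-jet meets $\Sigma^2$ in a codimension-four submanifold of the source, and such points persist under perturbation. Likewise, for $n\ge3$ the locus where \eqref{eq:foldnondeg} fails contains stable swallowtail points $\Sigma^{1,1,1}$ besides cusps.

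So your transversality argument proves the clause ``maps whose singularities are all folds and cusps form an open dense set'' only for $n\le2$, which is Whitney's theorem. No argument can do better, because read literally that clause is false for $n\ge3$. What Thom transversality does give for every $n$ --- and all the later fold analysis actually uses --- is weaker. For generic $\varphi$, the fold points are open and dense in the critical set. The non-fold critical points form a finite union of submanifolds of codimension at least two in the source (the Boardman strata $\Sigma^{1,1}(\varphi)$, $\Sigma^{2}(\varphi)$, \dots).

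Your reference-transversality argument is also posed in the wrong space. The reference is the output curve $y^*(t)$, not a state curve $x(t)$. The state trajectory is generated by the closed loop and cannot be perturbed independently. Under the $\rho\ge0$ orientation it even slides along the critical set for a whole interval, so it does not meet that set at isolated instants.

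The right object is the set of critical values. Near a fold it is the hyperplane $\{Y=y_c\}$ of your normal form. Equivalently, $\varphi$ restricted to the critical hypersurface is an immersion there. Indeed, $\operatorname{adj}D\varphi(x_*)$ is a nonzero multiple of $rw^T$, so by Jacobi's formula $D(\det D\varphi)(x_*)\,r$ is a nonzero multiple of $w^TD^2\varphi(x_*)(r,r)$. Hence $r$ is transverse to $\{\det D\varphi=0\}$.

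The family $y^*(t)+\epsilon$ is a submersion in the constant shift $\epsilon$. Parametric transversality then gives, for almost every $\epsilon$, a shifted reference that crosses this embedded hypersurface transversally, hence at isolated instants on compact intervals. The same shifted reference misses the images of the non-fold critical points, which have dimension at most $n-2$.
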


\begin{remark}[Scope]
The condition applies verbatim with $D\varphi$ replaced by the decoupling matrix $\Lambda$ (Section~\ref{sec:dynamic}); the arm at full extension and the cubic $x^3-x$ at $\pm1/\sqrt3$ are folds. Degeneracies where \eqref{eq:foldnondeg} fails (the cusp $Y=\xi^3$, or corank-$\ge2$ rank drops) are non-generic and lie beyond the present fold analysis; their treatment is left to future work (Section~\ref{sec:future}).
\end{remark}

Figure~\ref{fig:foldcover} depicts the two orientations: reflection on the incoming sheet $S_-$ under $\rho\ge0$, and crossing to the opposite sheet $S_+$ under the continuity orientation.

\begin{figure}[t]
\centering
% Fold cover TikZ (only the pictures). Wrapped in a figure environment
% (with caption/label) in the main file. Requires: tikz (main preamble).
% ================= (a) reflection =================
\begin{tikzpicture}[x=2.72cm,y=1.7cm,font=\footnotesize,line join=round]
  % background: infeasible strip and critical-value line
  \fill[black!6] (-0.6,-1.9) rectangle (0,1.72);
  \draw[black!30,dotted] (0,-1.9) -- (0,1.72);
  \node[black!45,anchor=south] at (0.1,1.4) {$y=y_c$};
  \node[black!55,align=center] at (-0.33,1.05) {infeasible\\[-1pt]$y<y_c$};
  % the cover (parabola y = xi^2, drawn as (xi^2, xi))
  \draw[gray,thick] plot[variable=\t,domain=-1.2:1.2,samples=90] ({\t*\t},{\t});
  % reflection: two lines parallel to the incoming branch S_-  (in and back)
  \draw[blue!70!black,thick] plot[variable=\t,domain=-0.95:-0.06,samples=40] ({\t*\t+0.05},{\t});
  \draw[blue!70!black,thick] plot[variable=\t,domain=-0.95:-0.06,samples=40] ({\t*\t-0.05},{\t});
  \draw[blue!70!black,->] (0.3525,-0.55) -- (0.2709,-0.47);   % in, toward Sigma_s
  \draw[blue!70!black,->] (0.1709,-0.47) -- (0.2525,-0.55);   % back, away
  \draw[blue!70!black,thick] (0.0536,-0.06) to[out=110,in=70] (-0.0464,-0.06); % U-turn
  % one-sided Filippov fields near Sigma_s (both toward xi=0)
  \draw[red!75!black,->] (0.14,0.30) -- (0.14,0.08);
  \draw[red!75!black,->] (0.14,-0.30) -- (0.14,-0.08);
  \node[red!75!black,anchor=west] at (0.15,0.24) {$\dot\xi|_{0^{+}}<0$};
  \node[red!75!black,anchor=west] at (0.15,-0.24) {$\dot\xi|_{0^{-}}>0$};
  % fold point + labels
  \fill (0,0) circle (1.3pt);
  \node[anchor=east] at (-0.07,0) {$\Sigma_s$ (fold)};
  \node[blue!70!black,anchor=east] at (-0.07,-0.52) {reflect};
  \node[blue!70!black] at (0.98,-0.52) {$S_-$ (incoming)};
  \node[black!45] at (0.98,0.52) {$S_+$ (not entered)};
  % reference: in (v<0) and back
  \draw[black!55,->] (1.05,-1.34) -- (-0.25,-1.34);
  \draw[black!55,->] (-0.25,-1.52) -- (1.05,-1.52);
  \node[black!55,anchor=south] at (0.4,-1.32) {in ($v<0$)};
  \node[black!55,anchor=north] at (0.4,-1.54) {back};
  \node[black!55,anchor=east] at (-0.3,-1.43) {$y^{*}(t)$};
  % axis labels
  \node[anchor=north] at (0.5,-1.98) {output $y=y_c+c\,\xi^{2}$};
  \node[rotate=90,anchor=south] at (-0.76,-0.1) {argument $\xi$};
  \node[anchor=south] at (0.45,1.78) {(a) $\rho\ge0$: reflection (Filippov sliding)};
\end{tikzpicture}\hfill
% ================= (b) crossing =================
\begin{tikzpicture}[x=2.72cm,y=1.7cm,font=\footnotesize,line join=round]
  \fill[black!6] (-0.6,-1.9) rectangle (0,1.72);
  \draw[black!30,dotted] (0,-1.9) -- (0,1.72);
  \node[black!45,anchor=south] at (0.1,1.4) {$y=y_c$};
  \node[black!55,align=center] at (-0.33,1.05) {infeasible\\[-1pt]$y<y_c$};
  % cover = trajectory (green)
  \draw[green!45!black,very thick] plot[variable=\t,domain=-1.2:1.2,samples=90] ({\t*\t},{\t});
  % kernel direction r
  \draw[red!75!black,<->] (0,-0.32) -- (0,0.32);
  \node[red!75!black,anchor=south west] at (-0.45,0.33) {$\ker D\varphi=r$};
  % crossing arrows: in on S_-, out on S_+
  \draw[green!45!black,->] (0.3844,-0.62) -- (0.2304,-0.48);
  \draw[green!45!black,->] (0.2304,0.48) -- (0.3844,0.62);
  \fill (0,0) circle (1.3pt);
  \node[anchor=east] at (-0.07,0) {$\Sigma_s$ (fold)};
  \node[green!45!black,anchor=east] at (-0.07,-0.52) {cross};
  \node[green!45!black] at (0.98,-0.52) {$S_-$ (incoming)};
  \node[green!45!black] at (0.98,0.52) {$S_+$ (exit)};
  \draw[black!55,->] (1.05,-1.34) -- (-0.25,-1.34);
  \draw[black!55,->] (-0.25,-1.52) -- (1.05,-1.52);
  \node[black!55,anchor=south] at (0.4,-1.32) {in ($v<0$)};
  \node[black!55,anchor=north] at (0.4,-1.54) {back};
  \node[black!55,anchor=east] at (-0.3,-1.43) {$y^{*}(t)$};
  \node[anchor=north] at (0.5,-1.98) {output $y=y_c+c\,\xi^{2}$};
  \node[rotate=90,anchor=south] at (-0.76,-0.1) {argument $\xi$};
  \node[anchor=south] at (0.45,1.78) {(b) continuity: crossing ($\mathbb{Z}/2$ deck $\tau:\xi\mapsto-\xi$)};
\end{tikzpicture}
\caption{Fold cover of a min-type fold in Whitney normal form $y=y_c+c\,\xi^2$
(Proposition~\ref{prop:foldnf}): horizontal axis is the output $y$, vertical
axis the fold coordinate $\xi$. The two sheets $S_-=\{\xi<0\}$ and
$S_+=\{\xi>0\}$ meet on the switching surface $\Sigma_s=\{\xi=0\}$ (the fold),
and only $y\ge y_c$ is reachable; the reference dips into the infeasible region
$y<y_c$ ($v<0$) and returns. \textbf{(a)} Under the $\rho\ge0$ orientation the
state slides in and back along the incoming sheet $S_-$ (blue): the one-sided
fields $\dot\xi|_{0^\pm}$ both point toward $\Sigma_s$, so it is attracting and
the state reflects (Lemma~\ref{lem:filippov}, Theorem~\ref{thm:reflection}(i)).
\textbf{(b)} Under the continuity orientation the state crosses $S_-\to S_+$
through the fold, the sheet exchange $\tau:\xi\mapsto-\xi$ generating the
$\mathbb{Z}/2$ deck transformation of $\xi\mapsto\xi^2$
(Theorem~\ref{thm:reflection}(ii,iii)).}
\label{fig:foldcover}
\end{figure}

\paragraph{Local model.} By transversality the augmented map is regular and the non-fold coordinates decouple, so the question is one-dimensional transverse to the fold. By Proposition~\ref{prop:foldnf} the output map then takes the Whitney normal form $y = y_c + c\,\xi^2$ (we fix $c>0$, a min-type fold), with critical set $\{\xi=0\}$ and sheets $S_-=\{\xi<0\}$, $S_+=\{\xi>0\}$; this is the local picture of the cubic $x^3-x$ at $\pm 1/\sqrt3$. The reference crosses transversally, $y^*(t)=y_c+v\,(t-t^*)$, and it suffices to analyze the scalar closed loop in $(\xi,e)$, $e=t-\mu$, with $b(t)=\varphi(\xi_0)-y^*(t)$ frozen at $b_*$ near $t^*$. With $g = 2c\xi$ (so the rank drop is $g=0$), $D^2 = g^2 + b^2$, and, in the kinematic case, $B = -b - v(1-e)$, the $\rho\ge0$ orientation gives
\begin{equation}\label{eq:xe_system}
\dot\xi = -\frac{gB}{D^2} - k\,e\,s\,\sign(b), \qquad
\dot e = 1 + \frac{bB}{D^2} - k\,e\,\frac{|g|}{D}, \qquad s=\sign(\xi).
\end{equation}
Since the kernel state component $\hat n_1 = \pm(-b)/D$ has its sign fixed by $\rho = \hat n_2 = |g|/D \ge 0$, and $\rho \propto |g|$ changes character across $\xi=0$, the surface $\{\xi=0\}$ is a switching surface and the closed loop is a Filippov system there.

\begin{proposition}[Sliding sign, leading order]\label{prop:sign}
For the reduced fold model \eqref{eq:xe_system}, the quasi-steady locking error on approach to the singularity ($\dot e = 0$, $g\to0$) satisfies
\begin{equation}\label{eq:esign}
\sign(e) = \sign(g^2 - b\,v) \;\xrightarrow{\,g\to0\,}\; -\sign(b\,v),
\qquad\text{hence}\qquad \sign(e\,b_*) = -\sign(v).
\end{equation}
Consequently the sliding condition $e\,b_* > 0$ holds if and only if $v<0$, i.e. the reference is decreasing through the critical value of the min-type fold --- ``pushing toward the infeasible side''. (For a max-type fold the reachable side and the sign of $v$ both reverse, and the invariant statement is unchanged.)
\end{proposition}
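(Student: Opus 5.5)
The plan is a direct computation on the $e$-equation of the reduced fold model \eqref{eq:xe_system}, followed by a sign count. First I will simplify the drift part of $\dot e$. Substituting $B=-b-v(1-e)$ and $D^2=g^2+b^2$ gives
\[
1+\frac{bB}{D^2}=\frac{D^2-b^2-bv(1-e)}{D^2}=\frac{g^2-bv(1-e)}{D^2}.
\]
So the quasi-steady condition $\dot e=0$ reads
\[
k\,e\,\frac{|g|}{D}=\frac{g^2-b\,v\,(1-e)}{D^2},
\qquad\text{i.e.}\qquad
e=\frac{g^2-b\,v\,(1-e)}{k\,|g|\,D}.
\]
Since $k>0$, $|g|>0$ off the fold and $D>0$ by transversality (Lemma~\ref{lem:kernel}(b), $b_*\neq0$), the sign of $e$ is the sign of the numerator $g^2-bv(1-e)$.

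Second, I will justify dropping the factor $(1-e)$ to leading order. On approach to $t^*$ the loop is still in the locked regime of Theorem~\ref{thm:crossing}(2), so $|e|=O(\bar D/(k\rho_0))\ll1$, and by Theorem~\ref{thm:crossing}(3) the excursion is at most $2\delta\bar D$, which is small for a short crossing window. Hence $1-e>0$ and $\sign\bigl(g^2-bv(1-e)\bigr)=\sign(g^2-bv)$, which is the first equality in \eqref{eq:esign}. Letting $g\to0$ (that is, $\xi\to0$) kills the $g^2$ term. With $b\to b_*\ne0$ and $v\ne0$ (transversal crossing of the reference), the limit is $-\sign(b_*v)$. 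Multiplying by $\sign(b_*)$ then gives $\sign(e\,b_*)=-\sign(v)$. The equivalence ``$e\,b_*>0 \iff v<0$'' is immediate. For the min-type fold with $c>0$, the reachable side is $y\ge y_c$, so $v<0$ means the reference is heading into the infeasible region.

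The step I expect to be delicate is the order of the two limits. If one solves the quasi-steady relation exactly for $e$, it becomes
\[
e\,(k|g|D-bv)=g^2-bv .
\]
The denominator changes sign once $k|g|D<|b v|$. Very close to $\xi=0$ the true quasi-steady value therefore is not small, and it is not governed by the leading-order formula. I will make ``leading order'' precise by restricting to the approach layer $k|g|D\gg|b_*v|$, where the denominator is positive and the quasi-steady $e$ is $O(1/k)$, and by taking $g\to0$ only formally within that regime (equivalently, $k\to\infty$ first). Inside the inner layer $|g|\lesssim |b_*v|/(kD)$, the sign of $e$ is inherited by continuity from the outer layer, because $\dot e$ is bounded there and the layer is traversed in time $O(1/k)$. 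This is the part where I would spend the most care.

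Finally, for the max-type remark I will note the symmetries: replacing $c\to-c$ flips the reachable half-line to $y\le y_c$, and the reflection $y\mapsto 2y_c-y$ sends $v\to-v$ and $b\to-b$. Under these, $\sign(e\,b_*)=-\sign(v)$ maps to itself with the roles of feasible and infeasible swapped. Hence the statement ``sliding iff the reference pushes toward the infeasible side'' holds for both fold types.
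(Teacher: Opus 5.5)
Your leading-order computation is the paper's: impose $\dot e=0$ in \eqref{eq:xe_system}, reduce $1+bB/D^2$ to $(g^2-bv)/D^2$, and let $g\to0$. Treating $(1-e)$ as a positive factor is slightly cleaner than the paper's ``drop the $ve$ term'', since only $1-e>0$ matters in the limit.

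The two proofs diverge at the boundary layer, and your version of that step does not close.
\begin{itemize}
\item In the case the proposition is about ($v<0$, so $e\,b_*>0$), the state does not traverse the inner layer in time $O(1/k)$. Under $\rho\ge0$ it reaches $\Sigma_s$ and stays there for the whole infeasible interval (Lemma~\ref{lem:filippov}), which is an $O(1)$ dwell. So ``bounded $\dot e$ over a short window'' gives no sign control on the surface, and that is exactly where the sliding condition must hold.
\item Even for a genuine traversal, an $O(1)$ rate over an $O(1/k)$ time moves $e$ by $O(1/k)$. That is the same order you assign to $e$ at entry, so the sign is not inherited.
\item A minor point: the denominator $k|g|D-bv$ changes sign only when $bv>0$, i.e.\ in the non-sliding case. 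For $bv<0$ it equals $k|g|D+|bv|>0$.
\end{itemize}

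What is missing is an exact, not asymptotic, handle on $e$ at $\xi=0$. The paper takes it from the constraint. Along the closed loop $H=0$, i.e.\ $e\,b=y-y^*$, so on $\Sigma_s$ one has $e\,b=y_c-y^*$ exactly. This is positive precisely while the reference sits on the infeasible side, and no layer matching is required.

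If you prefer to keep your continuity idea, replace ``$\dot e$ is bounded'' by the sign of $\dot e$ on $\{e=0\}$. Evaluating \eqref{eq:xe_system} there gives exactly $\dot e|_{e=0}=(g^2-bv)/D^2$, with no approximation, since $B=-b-v$ at $e=0$. The $e$-equation is also continuous across $\Sigma_s$, because it depends on $\xi$ only through $|g|$ and $D$. Hence, whenever $g^2<|bv|$, the set $\{\sign(e)=-\sign(bv)\}$ is forward invariant, regardless of how long the state dwells at the fold.

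Your max-type symmetry argument is fine.
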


\begin{proof}
Setting $\dot e = 0$ in \eqref{eq:xe_system} away from the fold gives $e = (1 + bB/D^2)\,D/(k|g|)$. Substituting $B \approx -b - v$ (the $ve$ term is $O(1/k)$) and simplifying, $1 + bB/D^2 = (g^2 - bv)/(g^2+b^2)$, whence $\sign(e) = \sign(g^2 - bv)$; letting $g\to0$ yields \eqref{eq:esign}. Moreover on the surface the constraint $H=0$ slaves the error exactly, $e = (y_c - y^*)/b$, which has the same sign, so the conclusion survives the boundary layer.
\end{proof}

\begin{lemma}[Sliding and reflection, Filippov]\label{lem:filippov}
Regard the reduced fold model under the $\rho\ge0$ orientation as a Filippov system with switching surface $\Sigma_s=\{\xi=0\}$, and suppose the reference crosses the critical value $y_c$ of a min-type fold from the reachable side ($v<0$ on approach). Then:
\begin{enumerate}
\item[(a)] the one-sided fields satisfy $\dot\xi|_{0^-} = k\,e\,\sign(b) > 0 > -k\,e\,\sign(b) = \dot\xi|_{0^+}$ on the sliding region $\{e\,b>0\}$ (nonempty by Proposition~\ref{prop:sign}), so $\Sigma_s$ is attracting and reached in finite time --- precisely when $y^*$ reaches $y_c$;
\item[(b)] while $y^*<y_c$ one has $e=(y_c-y^*)/b>0$, so $\dot\xi|_{0^+}<0$ and the state is confined to $\xi\le0$;
\item[(c)] on $\Sigma_s$ the constraint $H=0$ pins $e=(y_c-y^*)/b$ and $y=y_c$: the state rides the singular set;
\item[(d)] sliding terminates when $y^*$ recrosses $y_c$ ($e=0$); for $y^*>y_c$, $\dot\xi|_{0^-}<0$ releases the state into $S_-$. Hence the state returns to the incoming sheet (reflection).
\end{enumerate}
Under the continuity orientation ($\hat n$ aligned with its predecessor) $\Sigma_s$ is not a switching surface, the field is transversal, and the state passes to $S_+$ (crossing).
\end{lemma}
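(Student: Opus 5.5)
The plan is to work directly with the reduced closed loop \eqref{eq:xe_system}, evaluated in the one-sided limits $\xi\to0^\pm$, together with the algebraic constraint $H=0$, which in the normal form reads $y_c+c\xi^2-y^*(t)=e\,b$. Throughout, $b\equiv b_*\ne0$ is frozen near $t^*$; transversality is exactly $b_*\ne0$ here, since $w=1$ in the scalar model. Both one-sided fields are then regular, so the Filippov convex combination on $\Sigma_s$ is well defined.

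\textbf{Step 1: the one-sided fields, item (a).} At $\xi=0$ we have $g=0$, so the drift contribution $-gB/D^2$ to $\dot\xi$ vanishes and only the injected term remains: $\dot\xi|_{0^\pm}=\mp k\,e\,\sign(b)$. On $\{e\,b>0\}$ this gives $\dot\xi|_{0^-}>0>\dot\xi|_{0^+}$. This is the standard Filippov sliding (attractivity) condition. Proposition~\ref{prop:sign} supplies $e\,b_*>0$ on approach when $v<0$, so the sliding region is nonempty.

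\textbf{Step 2: finite-time reaching.} I would use $H=0$ on $S_-$: $c\xi^2=y^*-y_c+e\,b$. In quasi-steady locking $e=O(1/k)$, and with $y^*\downarrow y_c$ at rate $|v|$ the right-hand side reaches zero. More robustly, near $\xi=0$ the state moves with $\dot\xi\ge k|e|/2>0$, bounded away from zero, which forces arrival in finite time. Identifying the arrival instant with $y^*=y_c$ up to the $O(1/k)$ band is where care is needed. I would state it at the quasi-steady level of Proposition~\ref{prop:sign}, consistent with ``precisely when'' in the leading-order sense.

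\textbf{Step 3: items (b) and (c).} On $\Sigma_s$ we have $\xi=0$, so $y=y_c$, and $H=0$ gives $e=(y_c-y^*)/b$ exactly, which is (c). While $y^*<y_c$ the quantity $e\,b=y_c-y^*>0$, so Step 1 keeps both fields pointing toward $\Sigma_s$. The Filippov sliding velocity is then the convex combination with $\dot\xi=0$. Any excursion into $\xi>0$ is impossible, since $\dot\xi|_{0^+}<0$, and the statement $e>0$ is read with the normalization $b>0$. This gives (b). Consistency of the sliding $\dot e$ with $\frac{d}{dt}(y_c-y^*)/b=-v/b$ would be checked by computing the equivalent control; I expect it to come out matching, since $H=0$ is preserved by construction.

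\textbf{Step 4: item (d) and the continuity case.} When $y^*$ recrosses $y_c$, $e$ passes through $0$ and $e\,b$ becomes negative. Then $\dot\xi|_{0^-}=k e\sign(b)<0$ and $\dot\xi|_{0^+}>0$, so $\Sigma_s$ is repelling. The sign of $\xi$ on exit is fixed by continuity of the Filippov solution from the sliding side. I would argue that the state exits into $S_-$ because the $\rho\ge0$ orientation evaluated at $0^-$ is the one active on the incoming branch. This is the main obstacle: at a repelling surface Filippov solutions are nonunique. I would resolve it by a selection argument, either taking the limit of solutions with $\xi$ slightly negative on release (the left limit of the sliding motion), or by adding a small regularization and showing that the $S_-$ exit is the one selected.

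For the continuity orientation, $\hat n$ does not flip. Its state component $-b/D$ is continuous through $\xi=0$, so $\dot\xi$ is continuous and equal to $-k e\,b/|b|\ne0$ at the fold. The field is therefore transversal, and $\xi$ changes sign: the state crosses into $S_+$.
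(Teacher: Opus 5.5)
Your Steps 1--3 are the paper's argument: one-sided normal velocities at $g=0$, Proposition~\ref{prop:sign} for entry into $\{eb>0\}$, and $H=0$ for the pinned value on $\Sigma_s$. Your caveats are also correct: the ``$e>0$'' in (b) needs the normalization $b>0$, and ``precisely when'' holds only in the ideal limit.

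The genuine gap is in Step 4, and the remedy you propose cannot close it. The reduced loop \eqref{eq:xe_system} is equivariant under $\tau:(\xi,e)\mapsto(-\xi,e)$. Indeed, $B$ does not depend on $\xi$; $g$ and $s$ both flip, so $\dot\xi$ is odd in $\xi$; and $\dot e$ depends on $\xi$ only through $g^2$ and $|g|$. Hence the set of Filippov solutions leaving $\Sigma_s$ at the release instant is $\tau$-invariant. If an exit into $S_-$ exists, its mirror image exiting into $S_+$ exists too, with the same $e(t)$ and the same output. Your two selection devices do not break this symmetry:
\begin{enumerate}
\item The ``left limit of the sliding motion'' is just $\xi=0$, so it carries no information about the incoming side.
\item Any smoothing of $\sign(\xi)$ by an odd function keeps $\xi\equiv0$ invariant, so regularization selects nothing; a perturbation argument selects whichever side you perturb to.
\end{enumerate}
The $\rho\ge0$ rule therefore does not by itself determine the release. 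What is needed is an explicit selection, stated as a hypothesis: the exit branch is the one continuous with the incoming sheet (memory of $S_{\mathrm{in}}$). The paper does not derive this either. Its proof reads (d) off the sign of $\dot\xi|_{0^-}$ alone, which makes that selection implicitly. Remark~\ref{rem:leanstrength} also records that the convexified model admits two $\tau$-related continuations at release. So do not try to prove the $S_-$ release from Filippov theory; state it as the selection it is.

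There are two smaller points. First, in the continuity case you have the sign of $\hat n$ backwards. Arriving from $S_-$ (where $g<0$) with $\rho\ge0$, the inherited kernel is $\sign(g)(-b,g)/D=(b,-g)/D$. Its state component is therefore $+b/D$, and at the fold $\dot\xi\to k\,e\,\sign(b)=\dot\xi|_{0^-}>0$. With your $-b/D$, the field at $\xi=0$ would be $-k\,e\,\sign(b)<0$ on $\{eb>0\}$, pushing the state back into $S_-$ rather than across. With the sign corrected, continuity of $\hat n$ gives a continuous field with nonzero normal component, and the crossing follows as you say (with $\rho=-|g|/D<0$ afterwards).

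Second, the equivalent-control check you defer in Step 3 does not match exactly with $b$ frozen. On $\Sigma_s$ both one-sided $e$-fields equal $1+B/b=-v(1-e)/b$, which differs from $-v/b$ by $ve/b$. It matches only if you keep $b(t)=\varphi(\xi_0)-y^*(t)$ time-varying, since then $\tfrac{d}{dt}\bigl[(y_c-y^*)/b\bigr]=-v(1-e)/b$.
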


\begin{proof}
Filippov's construction \cite{Filippov1988} gives existence of solutions and, where both one-sided fields point strictly toward $\Sigma_s$, a unique sliding solution. The strict inequalities in (a) hold on $\{eb>0\}$, entered on the reachable-side approach by Proposition~\ref{prop:sign}; finite-time reachability follows because $y^*$ reaches $y_c$ in finite time and, off the singularity, $y=y^*$ forces $\xi\to0$. The value on $\Sigma_s$ in (c) is $H=0$, solvable because transversality keeps $A$ full row rank. Parts (b),(d) are the signs of the one-sided normal velocities under $e>0$, $e<0$.
\end{proof}

\begin{proposition}[Reduction at a generic fold in $\R^n$]\label{prop:reduction}
Let $\varphi:\R^n\to\R^n$ have a generic corank-one fold at $x_*$ (Whitney nondegeneracy $w^TD^2\varphi(r,r)\ne0$, with $r$ spanning $\ker D\varphi$ and $w$ the left null vector). In Whitney coordinates $\varphi(x)=(x_1^2,x_2,\dots,x_n)$, with $A=[D\varphi\mid b]$ and transversality $w^Tb = b_1 \ne 0$, the closed loop splits as a singular block in $(x_1,\mu)$ --- identical to the scalar fold model --- and $n-1$ regular equations $\dot x_i + b_i\dot\mu = -B_i$ ($i\ge2$) that are smooth affine functions of $\dot\mu$. The kernel at $x_1=0$ is $(r,0)$, so the reflection/crossing degree of freedom lies entirely in the $(x_1,\mu)$ block. Proposition~\ref{prop:sign} and Lemma~\ref{lem:filippov} therefore apply verbatim along $r$.
\end{proposition}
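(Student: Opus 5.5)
We work directly in the Whitney coordinates supplied by Proposition~\ref{prop:foldnf}. There $\varphi(x)=(x_1^2,x_2,\dots,x_n)$ (absorbing $c$ and $y_c$ into the output chart), so
\[
D\varphi(x)=\operatorname{diag}(2x_1,1,\dots,1),
\]
and $A=[D\varphi\mid b]$ has $b=(b_1,\dots,b_n)$. At $x_1=0$ the left null vector is $w=e_1$ and the kernel direction is $r=e_1$. Hence the transversality condition of Definition~\ref{def:transversal} reads $w^Tb=b_1\ne0$. We should check that the coordinate change preserves the structure. The input chart multiplies $\dot x$ by an invertible Jacobian, and the output chart multiplies $A\dot c=-B$ on the left by an invertible matrix. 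Neither alters the kernel of $A$ nor the rank condition. The least-norm particular solution is not exactly invariant under these changes, only up to a smooth bounded distortion. Sign-level conclusions (sliding sign, attractivity of $\Sigma_s$) are invariant, so we state the result at that level.

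The next step writes the $n$ rows of $A\dot c=-B$ explicitly. The first row is
\[
2x_1\dot x_1+b_1\dot\mu=-B_1 .
\]
It involves only $(x_1,\mu)$ and is exactly the scalar fold relation with $g=2x_1$. The remaining rows are $\dot x_i+b_i\dot\mu=-B_i$ for $i\ge2$. Each is uniquely solvable for $\dot x_i$ as a smooth affine function of $\dot\mu$, with no singular coefficient, which is the claimed splitting. For the kernel, we solve $A(v,\nu)=0$. The rows $i\ge2$ give $v_i=-b_i\nu$. The first row gives $2x_1v_1+b_1\nu=0$. At $x_1=0$ this forces $\nu=0$ because $b_1\ne0$, so $v_i=0$ as well and $\ker A=\spanop\{(e_1,0)\}=(r,0)$. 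This agrees with Lemma~\ref{lem:kernel}(b). Near $x_1=0$ the unit kernel is $\hat n\propto(-b_1,\,2x_1\hat\imath,\,\dots)$ after normalisation, with $\rho\propto|x_1|$. The orientation rule (R) therefore flips sign exactly across $\{x_1=0\}$, and the switching surface is $\Sigma_s=\{x_1=0\}\times\R^{n-1}$.

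The main obstacle is showing that the $(x_1,\mu)$ dynamics is \emph{identical} to \eqref{eq:xe_system} and not merely analogous. The least-norm solution $A^\dagger$ couples all rows through $(AA^T)^{-1}$. To handle this, I would avoid computing $A^\dagger$ componentwise. Instead I would use the fact that the decomposition \eqref{flow} is just a parametrisation of the affine solution set. Every admissible $\dot c$ satisfies row $1$ exactly, and the scalar error equation comes from the $\mu$-component together with $\dot e=1-\dot\mu$. The relevant quantities are the sign of $\dot x_1$ on each side of $\Sigma_s$ and the sign of $e$.

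\begin{enumerate}
\item[(i)] On $\Sigma_s$, row $1$ reduces to $b_1\dot\mu=-B_1$, and the constraint $H=0$ pins $e=(y_c-y_1^*)/b_1$ exactly as in Lemma~\ref{lem:filippov}(c).
\item[(ii)] The one-sided normal velocities $\dot x_1|_{0^\pm}$ equal $\pm k e\,\hat n_1$, with $\hat n_1\to\mp\sign(b_1)$. The scalar computation uses the effective $D^2=g^2+b_1^2+\dots$. Its positive scalings do not affect signs.
\end{enumerate}

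With $b_1$ playing the role of $b_*$, the quasi-steady sign computation of Proposition~\ref{prop:sign} then goes through unchanged. The regular coordinates $x_i$ are smooth slaves of $\mu$ and cannot create or destroy sliding. Hence Lemma~\ref{lem:filippov} applies along $r$, with Filippov's theorem invoked on the product surface $\Sigma_s$.
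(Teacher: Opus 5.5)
Your proposal is correct and is the paper's argument: the paper's proof reads $D\varphi=\operatorname{diag}(2x_1,1,\dots,1)$ off the Whitney form, splits the rows of $A\dot c=-B$ into the fold row $2x_1\dot x_1+b_1\dot\mu=-B_1$ and the $n-1$ unit-coefficient rows, and checks that $\ker A=\spanop\{(e_1,0)\}$ at $x_1=0$; your caveat about coordinate changes is the content of Remark~\ref{rem:nonequivariance}. Your treatment of the coupling through $(AA^T)^{-1}$ is more careful than the paper, but the claim that Proposition~\ref{prop:sign} goes through unchanged needs one line, because $d=(-A^\dagger B)_{n+1}$ now depends on the regular rows: with $v=\dot y^*(1-e)$ one finds $1-d=\bigl[4x_1^2(1-\langle b_{2:n},v_{2:n}\rangle)-b_1v_1\bigr]/\bigl[b_1^2+4x_1^2(1+\|b_{2:n}\|^2)\bigr]$, so the intermediate identity $\sign(e)=\sign(g^2-bv)$ is modified, but its $x_1\to0$ limit $\sign(e\,b_1)=-\sign(v_1)$ is not, and that limit is all Lemma~\ref{lem:filippov} uses (also, in (ii) the sign is doubled: since $(-A^\dagger B)_1=O(x_1)$, the correct statement is $\dot x_1|_{0^\pm}=k e\,\hat n_1|_{0^\pm}$ with $\hat n_1|_{0^\pm}\to\mp\sign(b_1)$).
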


\begin{proof}
In the stated coordinates $D\varphi=\operatorname{diag}(2x_1,1,\dots,1)$, so the rows for $i\ge2$ read $\dot x_i + b_i\dot\mu = -B_i$ with unit coefficient on $\dot x_i$, regular in $\dot\mu$; the first row is $2x_1\dot x_1 + b_1\dot\mu = -B_1$, the scalar fold block. At $x_1=0$ the augmented kernel is spanned by $(e_1,0)$, i.e. $r\times\{0\}$, so the orientation degree of freedom acts only on $(x_1,\mu)$.
\end{proof}

\begin{remark}[Non-equivariance of the continuation law]\label{rem:nonequivariance}
The following caveat sharpens how Proposition~\ref{prop:reduction} must be read. The continuation law \eqref{flow} is built from the least-norm inverse $A^\dagger=A^T(AA^T)^{-1}$ and the \emph{unit} kernel vector $\hat n$: both depend on the Euclidean structure of the chosen coordinates and are therefore not natural under diffeomorphism. The Whitney change of coordinates of Proposition~\ref{prop:foldnf} consequently does not conjugate the closed loop into the continuation closed loop of the normal form; it conjugates it into a flow whose particular solution and kernel normalization are distorted by the bounded, invertible Jacobian of the change. The reduction remains valid because everything that decides the dichotomy is invariant: the kernel \emph{direction} $\ker A$ (a line, independent of normalization), transversality, and the sliding conditions of Proposition~\ref{prop:sign} and Lemma~\ref{lem:filippov} are sign conditions, and signs are preserved by diffeomorphism. The \emph{quantitative} constants ($\bar D$, $\rho_0$, push margins), however, transform by the conjugation and are not those of the original coordinates: quantitative bounds should be stated in the original coordinates, as Theorem~\ref{thm:crossing} does, or transported with the conjugation constants.
\end{remark}

\begin{theorem}[Reflection at a generic fold]\label{thm:reflection}
Let $\varphi:\R^n\to\R^n$ have a generic corank-one fold, crossed transversally by the reference from the reachable sheet, and consider the continuation controller \eqref{flow} in the Filippov (ideal, $k\to\infty$) sense. Then
\begin{enumerate}
\item[(i)] the $\rho\ge0$ orientation reflects: the state reaches the singular set, rides it while the reference is infeasible, and returns to the incoming sheet, with bounded control throughout;
\item[(ii)] the continuity orientation crosses to the opposite sheet;
\item[(iii)] the two lifts differ exactly by the sheet-exchange $\tau:\xi\mapsto-\xi$, the generator of the $\mathbb Z/2$ deck transformation of the fold cover $\xi\mapsto\xi^2$.
\end{enumerate}
For finite $k$ the same holds up to an $O(1/k)$ perturbation localized at the singular instant.
\end{theorem}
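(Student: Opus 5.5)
The plan is to reduce to the scalar fold block and then read off each item from results already stated. By Proposition~\ref{prop:foldnf} we pass to Whitney coordinates. By Proposition~\ref{prop:reduction} the closed loop then splits into the singular $(x_1,\mu)$ block, which is exactly the reduced model \eqref{eq:xe_system}, plus $n-1$ regular equations $\dot x_i=-B_i-b_i\dot\mu$. These regular equations are smooth and affine in $\dot\mu$, so they stay bounded once the $(\xi,e)$ block is bounded. Remark~\ref{rem:nonequivariance} guarantees that the sign conditions deciding the dichotomy survive this non-equivariant change of coordinates. Only the constants are distorted, and those are not claimed in the theorem. Everything therefore reduces to the scalar model with $y=y_c+c\,\xi^2$, $c>0$, and reference $y^*=y_c+v(t-t^*)$, $v<0$, on approach from $S_-$.

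For item (i) we chain Proposition~\ref{prop:sign} and Lemma~\ref{lem:filippov}. Proposition~\ref{prop:sign} gives $e\,b_*>0$ on approach, so the state lies in the sliding region. Lemma~\ref{lem:filippov}(a) then shows that $\Sigma_s$ is attracting and is reached when $y^*$ reaches $y_c$. Parts (b) and (c) give confinement to $\xi\le0$ and riding of $\Sigma_s$ with $e=(y_c-y^*)/b$. Part (d) gives release into $S_-$ once $y^*>y_c$ again. Bounded control holds for two reasons. First, the one-sided fields have magnitude $\le \|A^\dagger B\|+k|e|$, which is finite by Theorem~\ref{thm:crossing}, item~1, using transversality $b_*\neq0$. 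Second, $e$ stays bounded while the state is pinned, because $|y_c-y^*|/|b_*|$ is bounded over the finite sliding interval. A Filippov convex combination of bounded fields is bounded.

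For item (ii), under the continuity orientation $\hat n$ is the continuous tangent to the smooth curve $H=0$. This curve is a regular one-manifold in $(\xi,\mu,t)$ because $\rank A=n$ (Lemma~\ref{lem:kernel}(b)). In the scalar model the curve is $c\xi^2=y_c-y^*+\dots$, parametrised through $\xi=0$ by $\xi$ itself. At $\xi=0$ the kernel is $(1,0)$, and $\hat n$ does not flip, so $\dot\xi$ keeps its sign and the state enters $S_+$. This is the last clause of Lemma~\ref{lem:filippov}.

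For item (iii) we compare the two lifts of the same output trajectory $y(t)$ after the singular interval. Both satisfy $c\xi^2=y-y_c$, so $\xi=\pm\sqrt{(y-y_c)/c}$. Reflection selects the $-$ root and crossing the $+$ root, hence $\xi_{\rm cross}=\tau(\xi_{\rm refl})$. Here $\tau$ is the unique nontrivial deck transformation of $\xi\mapsto\xi^2$ on $\xi\neq0$.

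The finite-$k$ statement is the main obstacle. The plan is to treat \eqref{eq:xe_system} as a singular perturbation with fast variable $e$. Off the fold, Theorem~\ref{thm:crossing}, item~2, keeps $e$ within $O(1/k)$ of its quasi-steady value. On $I_\delta$, the excursion bound of item~3 controls $e$. We would then use the sign persistence of Proposition~\ref{prop:sign}, whose proof already notes that the sign survives the boundary layer, to show that the one-sided inequalities remain strict up to $O(1/k)$. Consequently the finite-$k$ trajectory enters the same sheet as the ideal Filippov solution. Uniqueness of the sliding motion, from Filippov's theorem with strict inequalities, then gives $O(1/k)$ closeness by upper semicontinuity of solution sets. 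We expect the delicate part to be quantifying the chattering and boundary-layer width uniformly as $\rho\to0$.
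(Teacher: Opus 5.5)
For items (i)--(iii) you follow the paper's proof step for step. You use Whitney coordinates and Proposition~\ref{prop:reduction} to reduce to the scalar block, and Proposition~\ref{prop:sign} to enter $\{e\,b>0\}$. Lemma~\ref{lem:filippov}(a)--(d) gives attraction, confinement, pinning and release, and its last clause gives the crossing. Theorem~\ref{thm:crossing}(1) gives bounded control, and $\tau$ is the sheet exchange of $\xi\mapsto\xi^2$. You only write the steps out more fully, and in doing so you inherit four points the paper also glosses over.

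\emph{Release.} After release $e\,b<0$, so both one-sided fields $\dot\xi|_{0^\mp}=\pm k\,e\,\sign(b)$ point away from $\Sigma_s$. The Filippov continuation is therefore non-unique there: the state may remain on $\Sigma_s$, exit to $S_-$, or exit to $S_+$. The return to $S_-$ in (i) is a branch selection, not a consequence of Lemma~\ref{lem:filippov}(d) alone.

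\emph{The two lifts.} The lifts in (iii) are exactly these two exits at the release instant (Remark~\ref{rem:leanstrength}). The continuity run from the approach does not project onto the same output path: under the theorem's hypotheses it diverges past the fold (Proposition~\ref{prop:reach}(i)).

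\emph{Bounded control.} In the ideal limit your bound $\|A^\dagger B\|+k|e|$ is not uniform, because $e=(y_c-y^*)/b$ is $O(1)$ while sliding. What stays bounded is the Filippov sliding velocity, $\dot\xi=0$ on $\Sigma_s$.

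\emph{The one-manifold in (ii).} The regular one-manifold is the frozen-time slice $\{(\xi,\mu):H(\xi,\mu,t)=0\}$. The set $\{H=0\}$ in $(\xi,\mu,t)$ is a surface.

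The genuine gap is the finite-$k$ clause. The paper's proof does not argue it at all, and your sketch does not close it.

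\emph{No rate from upper semicontinuity.} Upper semicontinuity of solution sets gives convergence as $k\to\infty$ but no rate. At release, where the limiting inclusion has several solutions, it cannot even select the exit sheet.

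\emph{$e$ is not uniformly fast.} Its relaxation rate $k\rho\propto k|g|$ vanishes on $\Sigma_s$, and there it is $\xi$ that moves fast. Proposition~\ref{prop:sign}'s own quasi-steady value $e\approx|v|/(k|g|)$ blows up exactly there.

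\emph{The actual scaling.} Take $b>0$, $v<0$. The leading-order error equation is $b\,\dot e\approx|v|-k\,e\,|g|$, and the exact constraint is $c\,\xi^2=|v|(t^*-t)+e\,b$, so $\tfrac{d}{dt}\,c\,\xi^2\approx-k\,e\,|g|$ with $|g|=2c|\xi|$. These balance only at $e\sim k^{-2/3}$, $|\xi|\sim k^{-1/3}$, $|t-t^*|\sim k^{-2/3}$. This is the $\varepsilon^{2/3}$ law for passage through a fold, with $\varepsilon=1/k$, and entry into $\Sigma_s$ is delayed by a time of order $k^{-2/3}$. Near the singular instant the deviation from the ideal solution is therefore of order $k^{-2/3}$ in $e$ and in the output, and of order $k^{-1/3}$ in the state. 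An $O(1/k)$ bound there should not be expected.

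\emph{A workable route.} Use that $H\equiv0$ holds exactly for every $k$. This is true also along Filippov solutions, since $A$ and $B$ do not depend on the orientation of $\hat n$. It makes three things independent of $k$: the identity $y-y^*=e\,b$, the pinning $e=(y_c-y^*)/b$ on $\Sigma_s$, and the release condition $y^*=y_c$. What remains is to analyse the $k$-independent rescaled inner problem of the entry layer.
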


\begin{proof}
By Proposition~\ref{prop:reduction} the Whitney normal form reduces the singular dynamics to the scalar fold model in $(x_1,\mu)$ along $\ker D\varphi$, the transverse velocities being regular in $\dot\mu$. Proposition~\ref{prop:sign} shows the reachable-side approach enters the sliding region $\{eb>0\}$; Lemma~\ref{lem:filippov} then gives finite-time attraction, confinement, the pinned exit at $y^*=y_c$, and release onto the incoming sheet under $\rho\ge0$ (reflection), and transversal passage under continuity (crossing). Boundedness of the control is Theorem~\ref{thm:crossing}(1). The deck action $\tau$ is the sheet exchange of $\xi\mapsto\xi^2$, applied by continuity and not by $\rho\ge0$.
\end{proof}

\begin{remark}[Uniqueness of sliding; the inclusion branches]\label{rem:leanstrength}
Two parts of Theorem~\ref{thm:reflection} hold in a stronger, machine-checked form (Section~\ref{sec:lean}). First, part~(i) is a uniqueness statement: while the one-sided fields straddle the switching surface, \emph{every} Filippov solution of the convexified closed loop starting on the singular set remains on it --- the state does not merely admit the sliding motion, it is forced into it. Second, part~(iii) is a statement about the inclusion itself: at the release instant the convexified fold model admits precisely two continuations, exit into the incoming sheet or into the opposite sheet, related by the deck exchange $\tau$ and projecting to the same output path. The kernel orientation of Definition~\ref{def:controllaw} is therefore exactly a \emph{selection rule} among branches of the Filippov inclusion, not merely a numerical convention.
\end{remark}

\subsection{Which orientation: a reachability criterion}\label{subsec:reach}

Theorem~\ref{thm:reflection} says what each orientation \emph{does}; it does not say which one should be used. That is decided by whether the reference beyond the fold is reachable, and on which sheet. We keep the reduced min-type model $Y=y_c+c\,\xi^2$ (reachable set $\{Y\ge y_c\}$, sheets $S_\pm$), with the state arriving on the incoming sheet $S_{\mathrm{in}}$ and the reference crossing the critical value transversally at rate $v=\dot Y^*(t^*)$. Recall (Proposition~\ref{prop:sign}) that the sliding/reflection region $\{e\,b>0\}$ is entered exactly when $v<0$, i.e.\ when the reference heads to the \emph{infeasible} side $Y<y_c$.

\begin{proposition}[Reachability and orientation]\label{prop:reach}
Consider the continuation controller at a transversally crossed fold, the reference feasible on approach, in the ideal ($k\to\infty$) sense.
\begin{enumerate}
\item[(i)] \emph{(Infeasible continuation $\Rightarrow$ reflect.)} If the reference continues to the infeasible side of the encountered fold ($v<0$ for the min-type model), then no state realizes it there. The $\rho\ge0$ orientation reflects: it rides $\Sigma_s$ with the output pinned at $y_c$ and output error $|Y^*-y_c|$ (the \emph{fold gap}), returning to $S_{\mathrm{in}}$. Forcing the continuity orientation cannot lower this error --- both sheets lie on the reachable side and neither reaches $Y^*<y_c$ --- and, having no reachable branch to follow, it drives the state off the reachable set with $\rho<0$; the locking equation $\dot e=(1-d)+k|\rho|e$ is then anti-dissipative and the flow diverges (the dynamic form of Proposition~\ref{prop:failure}).
\item[(ii)] \emph{(Reachable continuation on the far sheet $\Rightarrow$ cross.)} If the reference stays feasible through the encounter, both orientations are bounded; reflection keeps $S_{\mathrm{in}}$, crossing moves to the opposite sheet $S_{\mathrm{out}}$. The crossing is \emph{forced} --- reflection saturates and loses tracking --- if and only if the reference's continuation is not reachable on $S_{\mathrm{in}}$ (the incoming sheet's output range terminates before the reference, as at a neighbouring fold); otherwise both sheets track $Y^*$ and the choice is \emph{discretionary}, to be resolved by a secondary criterion, not by the tracking objective.
\end{enumerate}
\end{proposition}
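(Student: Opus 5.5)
We work throughout in the reduced min-type model of Proposition~\ref{prop:reduction}: $Y=y_c+c\,\xi^2$ with $c>0$ and the transverse coordinates regular. Because Remark~\ref{rem:nonequivariance} says every sign condition survives the Whitney conjugation, it suffices to argue in the scalar $(\xi,e)$ system \eqref{eq:xe_system}. The organizing fact is that the image of a neighbourhood of $x_*$ is $\{Y\ge y_c\}$. Both sheets $S_\pm$ map onto this same half-line, and the deck map $\tau$ exchanges them without changing the output.

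\emph{Part (i).} The first step is to note that infeasibility is immediate: if $Y^*<y_c$ then $c\,\xi^2=Y^*-y_c<0$ has no solution. Hence every state near $x_*$ satisfies $|Y-Y^*|\ge y_c-Y^*$, the fold gap.

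For the $\rho\ge0$ orientation we invoke Proposition~\ref{prop:sign}. With $v<0$ it places the approach in the sliding region $\{eb>0\}$. Lemma~\ref{lem:filippov}(a)--(d) then gives attraction to $\Sigma_s$, pinning $Y=y_c$ with $e=(y_c-Y^*)/b$, and release onto $S_{\mathrm{in}}$. The output error while sliding is therefore exactly $|Y^*-y_c|$. This attains the lower bound, which shows that no orientation, continuity included, can do better.

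For the continuity orientation the tangent is kept continuous, so $\hat n$ does not flip at $\Sigma_s$. At a min-type fold $\rho=|g|/D$ would change sign, so under continuity $\rho$ passes through zero into $\rho<0$ as $\xi$ crosses over. Substituting $\rho=-|\rho|$ into \eqref{error_ode} gives $\dot e=(1-d)+k|\rho|e$. The reference keeps $e\,b>0$ and growing, since $e=(y_c-Y^*)/b$ would be forced but cannot be met on a sheet. The $e$-equation is therefore linear with positive rate, and $e$ grows exponentially. The constraint $H=0$ then forces $|Y-Y^*|=|e||b|$ to grow, which drives the state along $\ker D\varphi$ away from the fold, i.e.\ $|\xi|$ increases. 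This reproduces the blow-up mechanism of Proposition~\ref{prop:failure} in dynamic form.

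\emph{Part (ii).} When $Y^*$ stays feasible, the Filippov analysis gives bounded motion for both lifts: reflection through Lemma~\ref{lem:filippov}, and crossing through Theorem~\ref{thm:reflection}(ii). Theorem~\ref{thm:crossing}(1,4) supplies the bounded control and the re-locking. By Theorem~\ref{thm:reflection}(iii) the two lifts differ by $\tau$. Reflection therefore tracks only if $Y^*$ lies in the output range of $S_{\mathrm{in}}$ over the horizon, and otherwise it saturates at that sheet's boundary value. This is the ``only if'' half of forcing. Conversely, if $Y^*$ is in the range of both sheets, each lift attains zero steady error up to $O(1/k)$ by Theorem~\ref{thm:crossing}(2), so the tracking objective does not discriminate between them.

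The step I expect to be the main obstacle is the divergence claim under continuity in (i), specifically making rigorous that $\rho$ stays negative long enough. The difficulty is that $\rho$ returns to zero if $\xi$ comes back to $\Sigma_s$. The first approach I will try is a sign-invariance argument. Once $\rho<0$ and $e\,b>0$, I expect $\dot\xi$ to push $|\xi|$ upward, which would make the region self-sustaining. I will combine this with a Gr\"onwall lower bound on $|e|$ while the reference remains on the infeasible side.
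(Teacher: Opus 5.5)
Your part (i) follows the paper's proof step for step: the reachable set $\{Y\ge y_c\}$, then Proposition~\ref{prop:sign} with Lemma~\ref{lem:filippov} for reflection, then the anti-dissipative locking equation under continuity. The obstacle you flag is handled exactly as you propose. Under continuity the state component $\hat n_1=b/D$ never changes sign, so the injected term $ke\hat n_1$ carries the sign of $eb$. By $H=0$ and infeasibility, $eb=Y-Y^*\ge y_c-Y^*>0$. Hence $\xi$ is driven monotonically away from $\Sigma_s$, $\rho<0$ persists, and $eb=c\xi^2+y_c-Y^*$ grows with $\xi$.

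The gap is in part (ii). There you bound the crossing lift by Theorem~\ref{thm:crossing}(1),(4), and claim via Theorem~\ref{thm:crossing}(2) that it re-locks to $O(1/k)$ on $S_{\mathrm{out}}$. Those items are proved only for the $\rho\ge0$ orientation:
\begin{itemize}
\item item~3 uses dissipativity;
\item items~2 and~4 need $\rho\ge\rho_0>0$;
\item item~1's velocity bound is uniform in time only once $|e|$ is known to be bounded.
\end{itemize}
Under continuity, the tangent normalized to $\rho>0$ on $S_{\mathrm{in}}$ is $\hat n=(b,-g)/D$ in the notation of \eqref{eq:xe_system}. So $\rho=-g/D<0$ on the entire far sheet of the encountered fold. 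That is the same anti-dissipative regime $\dot e=(1-d)+k|\rho|e$ you exploited in (i) to prove divergence, so Theorem~\ref{thm:crossing} cannot supply boundedness or re-locking for that lift.

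The paper bounds the crossing instead through the multiplicative estimate of Proposition~\ref{prop:excursion}(b). That estimate is finite only when the $\rho<0$ window is short, $k\delta_c=O(1)$, i.e.\ for a crossing-compatible reference. Re-locking additionally needs $\rho$ to turn positive again after the transit: at the next fold of an S-shaped solution curve, as in the cubic, or by reverting to $\rho\ge0$ once on the new sheet. Your argument needs that ingredient.

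A secondary problem is the step ``by Theorem~\ref{thm:reflection}(iii) the two lifts differ by $\tau$; reflection therefore tracks only if \dots'', which is a non sequitur. The deck map $\tau:\xi\mapsto-\xi$ preserves the output, so lifts related by $\tau$ have identical output paths. If that relation governed here, reflection would track exactly when crossing does, and a forced crossing could never occur.

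Indeed, in the reduced min-type model you said you would use throughout, both sheets map onto the same set $\{Y\ge y_c\}$. So the condition ``the incoming sheet's output range terminates before the reference'' is not even expressible there. It concerns the global output ranges of the sheets, as at a neighbouring fold.

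Drop the $\tau$ step and argue as the paper does. The reflecting lift is confined to $S_{\mathrm{in}}$ (Lemma~\ref{lem:filippov}(b),(d)), so it saturates exactly when $Y^*$ leaves the global output range of $S_{\mathrm{in}}$. In the discretionary case, claim only what is true: $H=0$ is solvable on either sheet, so the tracking objective is indifferent. Do not claim that the continuity lift re-locks by Theorem~\ref{thm:crossing}(2).
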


\begin{proof}
(i) The reachable set of the fold is $\{Y\ge y_c\}$ and both sheets map onto it, so $Y^*<y_c$ has no preimage on either sheet; the smallest achievable output error is attained on $\Sigma_s$, where $y=y_c$, giving $|Y^*-y_c|$. By Proposition~\ref{prop:sign}, $v<0$ places the flow in the sliding region $\{eb>0\}$, so Lemma~\ref{lem:filippov} yields reflection under $\rho\ge0$. Under continuity $\Sigma_s$ is not attracting and $\xi$ is driven monotonically; but with the target infeasible on both sheets the constraint $H=0$ cannot be met with $\xi$ bounded, so $\rho<0$ persists and $\dot e=(1-d)+k|\rho|e$ grows without bound. (ii) If a preimage exists on $S_{\mathrm{out}}$, the final clause of Lemma~\ref{lem:filippov} carries the tangent across, giving $y=Y^*$ up to the crossing transient of Proposition~\ref{prop:excursion}; if additionally $Y^*$ leaves the output range of $S_{\mathrm{in}}$ then reflection, confined to $S_{\mathrm{in}}$, cannot realize it and saturates, so crossing is forced; if $Y^*$ is reachable on both sheets, $H=0$ holds on either and the tracking objective is indifferent.
\end{proof}

\begin{proposition}[Excursion during the encounter]\label{prop:excursion}
Let the fold be negotiated on a window $I_\delta=(t^*-\delta,t^*+\delta)$, and recall the output identity $\|y-y^*\|=|e|\,\|b\|$.
\begin{enumerate}
\item[(a)] \emph{(Reflection, additive.)} Under $\rho\ge0$ the feedback is dissipative and Theorem~\ref{thm:crossing}(3) gives $|e(t)|\le|e(t^*-\delta)|+2\delta\bar D$; hence the output excursion is $O(1/k)+O(\delta)$.
\item[(b)] \emph{(Crossing, multiplicative.)} Under continuity let $2\delta_c$ be the sub-window on which $\rho<0$ and $\bar\rho=\max|\rho|$. Then
\begin{equation}\label{eq:excursion}
|e(t)|\ \le\ \bigl(|e(t^*-\delta_c)|+2\delta_c\bar D\bigr)\,e^{\,2k\bar\rho\,\delta_c},
\end{equation}
so the crossing excursion is bounded iff $k\delta_c$ is bounded. For a \emph{crossing-compatible} reference (transversality margin bounded away from zero, so $\xi$ traverses $\Sigma_s$ at the injected speed $\sim k|e|$ and $\delta_c=O(1/k)$) the factor is $e^{O(1)}$ and the excursion is $O(1)$; as the margin $w^Tb\to0$, $\delta_c\to O(1)$ and the bound blows up, recovering the divergence of Proposition~\ref{prop:reach}(i).
\end{enumerate}
\end{proposition}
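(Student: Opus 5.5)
Both parts rest on the scalar locking equation \eqref{error_ode}, $\dot e = (1-d) - k\rho e$, with $|1-d|\le\bar D$ along the trajectory, together with the output identity $y-y^* = e\,b$ obtained from $H=0$. The two orientations differ only in the sign of $\rho$ on the window, and that sign is the whole mechanism.

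\textbf{Part (a).} Under the $\rho\ge0$ orientation this is a restatement of Theorem~\ref{thm:crossing}(3). From $\frac{d}{dt}\tfrac12 e^2 = e(1-d)-k\rho e^2 \le |e|\bar D$ we get $|e(t)|\le|e(t^*-\delta)|+2\delta\bar D$ on $\overline{I_\delta}$. The entry value $|e(t^*-\delta)|$ lies in the $O(\bar D/(k\rho_0))=O(1/k)$ band by Theorem~\ref{thm:crossing}(2), after the initial transient. Multiplying by $\|b\|_\infty$ then gives the output excursion $O(1/k)+O(\delta)$.

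\textbf{Part (b).} Split $I_\delta$ into the part where $\rho\ge0$ and the sub-window $J$ of length $2\delta_c$ where $\rho<0$. Off $J$ the argument of (a) still gives additive growth at rate at most $\bar D$. On $J$ write $\dot e = (1-d) + k|\rho| e$ and apply Grönwall to $|e|$: $\frac{d}{dt}|e| \le \bar D + k\bar\rho|e|$. This gives
\[
|e(t)| \le \Bigl(|e(t^*-\delta_c)| + \int \bar D\,ds\Bigr) e^{k\bar\rho (t-t^*+\delta_c)},
\]
and bounding the integral by $2\delta_c\bar D$ and the exponent by $2k\bar\rho\delta_c$ yields \eqref{eq:excursion}. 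The statement is then immediate: the bound is finite uniformly in $k$ iff $k\delta_c$ is bounded.

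The remaining claims concern the scaling of $\delta_c$. Here I would use the reduced fold model \eqref{eq:xe_system} via Proposition~\ref{prop:reduction}. In the continuity orientation, $\rho = \hat n_{n+1}$ changes sign with $g=2c\xi$, so $J$ is exactly the time $\xi$ spends on the far side before the orientation is restored. The injected component $ke\,\hat n_1$ has magnitude $k|e|\,|b|/D$, which is close to $k|e|$ near $\xi=0$ when $|b|=|w^Tb|$ is bounded away from zero. With $|e|$ bounded below by the quasi-steady value of Proposition~\ref{prop:sign}, $\xi$ traverses an $O(1)$ neighbourhood of $\Sigma_s$ in time $O(1/k)$, hence $\delta_c=O(1/k)$ and the factor is $e^{O(1)}$. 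As $w^Tb\to0$ the transverse speed degenerates, $\delta_c$ no longer shrinks with $k$, and the exponential blows up. This matches the divergence in Proposition~\ref{prop:reach}(i).

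\textbf{Main obstacle.} The hardest step is making $\delta_c=O(1/k)$ rigorous, since it requires a lower bound on $|e|$ and on $|\dot\xi|$ throughout $J$ rather than only at quasi-steady state. My first attempt is a two-sided invariant argument. I would show that $|e|$ cannot fall below a fixed fraction of its entry value on $J$, because $e$ is anti-dissipative there, so $|e|$ grows. I would then integrate $\dot\xi$ to bound the transit time, keeping the statement at the ``crossing-compatible'' level of generality as the proposition itself does.
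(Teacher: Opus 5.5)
Your part (a) and your derivation of \eqref{eq:excursion} follow the paper's proof. Part (a) is read off from Theorem~\ref{thm:crossing}(3). On the $\rho<0$ window, the comparison principle applied to $\dot e=(1-d)+k|\rho|e$, i.e.\ $\tfrac{d}{dt}|e|\le\bar D+k\bar\rho|e|$, gives the multiplicative bound. Recasting ``bounded iff $k\delta_c$ bounded'' as a statement about the bound is the defensible reading. An upper bound cannot give the ``only if'' for the excursion itself, and even for the bound the equivalence needs some nondegeneracy, e.g.\ $\bar\rho$ bounded away from zero.

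The step that fails is your justification of $\delta_c=O(1/k)$.

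\emph{The quasi-steady lower bound gives no factor $1/k$.} You bound $|e|$ below by the quasi-steady value of Proposition~\ref{prop:sign}, $e=(g^2-bv)/(k|g|D)$. Where $|g|$ is of order one, that value is $O(1/k)$. The injected speed $k|e|\,|b|/D$ is then $O(1)$, and crossing an $O(1)$ neighbourhood of $\Sigma_s$ takes $O(1)$ time. Near $\xi=0$ the value is larger and gives $|\dot\xi|\approx|v|/(2c|\xi|)$. The transit time is then $\approx c\varepsilon^2/|v|$, set by the reference speed. Either way the transit time is independent of $k$.

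\emph{The bound does not apply on $J$.} That value comes from $\dot e=0$ under the \emph{dissipative} orientation. On $J$, the corresponding equilibrium $-(1-d)/(k|\rho|)$ of $\dot e=(1-d)+k|\rho|e$ is repelling, so it bounds nothing there. Your fallback also fails: the claim that $|e|$ cannot drop below a fraction of its entry value because the loop is anti-dissipative is false. If $e$ enters between $0$ and that equilibrium, it is driven through zero.

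\emph{What is actually needed.} The paper's estimate $\delta_c\sim\varepsilon/(k|e|)$ tacitly requires $|e|$ bounded below by a $k$-independent constant on the window. That bound has to come from the output identity, not the locking equation.
\begin{itemize}
\item Under continuity, $\rho=\pm g/D$ with a fixed sign, so $\rho$ turns positive again only when the state crosses a fold again. It is not ``restored''. The $\rho<0$ window is therefore the transit of a whole branch, e.g.\ the middle branch of the cubic between $\pm1/\sqrt3$.
\item In a forced crossing the reference has left that branch's output range: $y^*$ is beyond the fold value while $y$ descends the branch. Hence $|e|\,\|b\|=\|y-y^*\|$ becomes of order one once the state is a fixed distance into the branch. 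This is what makes the transit speed of order $k$.
\item Near entry this lower bound degenerates quadratically in the distance from the fold. So $\delta_c=O(1/k)$ is only as good as an entry-phase estimate. Indeed, the gain sweep of Figure~\ref{fig:excursion-data} shows $k\delta_c$ growing with $k$.
\end{itemize}
The scaling claim should therefore stay at the heuristic, crossing-compatible level the paper adopts.
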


\begin{proof}
(a) is Theorem~\ref{thm:crossing}(3). (b) On the sub-window where $\rho=-|\rho|<0$ the scalar error obeys $\dot e=(1-d)+k|\rho|e$, a linear equation with nonnegative rate $k|\rho|\le k\bar\rho$ and $|1-d|\le\bar D$; the comparison principle gives \eqref{eq:excursion}. The transit time $\delta_c$ is set by the injected motion: in a neighbourhood $|\xi|<\varepsilon$ of $\Sigma_s$, $\dot\xi$ is dominated by $k\,e\,\hat n_1=O(k|e|)$, so an unobstructed passage onto a reachable $S_{\mathrm{out}}$ takes $\delta_c\sim\varepsilon/(k|e|)=O(1/k)$; when the far sheet is infeasible $\xi$ cannot leave the neighbourhood, $\delta_c=O(1)$, and the factor is $e^{O(k)}$.
\end{proof}

The additive-versus-multiplicative contrast of Proposition~\ref{prop:excursion} is exactly the difference between the gentle reflections and the violent-but-bounded crossings observed numerically (Section~\ref{sec:sims}): reflection costs $O(1/k+\delta)$, a compatible crossing costs $e^{O(1)}$ more, and an incompatible one is unbounded. Figure~\ref{fig:excursion} summarizes the two regimes and the quantities that bound them.

\begin{figure}[htbp]
\centering
% Excursion-bounds TikZ (only the pictures), replicating fig:excursion.
% Adapted surgically from fold_cover_fig.tex. Requires: tikz (main preamble).
% ================= (a) reflection: additive excursion =================
\begin{tikzpicture}[x=2.72cm,y=1.7cm,font=\footnotesize,line join=round]
  % background: infeasible strip and critical-value line
  \fill[black!6] (-0.6,-1.9) rectangle (0,1.72);
  \draw[black!30,dotted] (0,-1.9) -- (0,1.72);
  \node[black!45,anchor=south] at (0,1.52) {$y=y_c$};
  \node[black!55,align=center] at (-0.33,1.05) {infeasible\\[-1pt]$y<y_c$};
  % the cover (parabola y = xi^2, drawn as (xi^2, xi))
  \draw[gray,thick] plot[variable=\t,domain=-1.2:1.2,samples=90] ({\t*\t},{\t});
  % reflection: two lines parallel to the incoming branch S_-  (in and back)
  \draw[blue!70!black,thick] plot[variable=\t,domain=-0.95:-0.06,samples=40] ({\t*\t+0.05},{\t});
  \draw[blue!70!black,thick] plot[variable=\t,domain=-0.95:-0.06,samples=40] ({\t*\t-0.05},{\t});
  \draw[blue!70!black,->] (0.3525,-0.55) -- (0.2709,-0.47);   % in, toward Sigma_s
  \draw[blue!70!black,->] (0.1709,-0.47) -- (0.2525,-0.55);   % back, away
  \draw[blue!70!black,thick] (0.0536,-0.06) to[out=110,in=70] (-0.0464,-0.06); % U-turn
  % one-sided Filippov fields near Sigma_s (both toward xi=0)
  %\draw[red!75!black,->] (0.14,0.30) -- (0.14,0.08);
  %\draw[red!75!black,->] (0.14,-0.30) -- (0.14,-0.08);
  %\node[red!75!black,anchor=west] at (0.15,0.24) {$\dot\xi|_{0^{+}}<0$};
  %\node[red!75!black,anchor=west] at (0.15,-0.24) {$\dot\xi|_{0^{-}}>0$};
  % fold point + labels
  \fill (0,0) circle (1.3pt);
  \node[anchor=south east] at (-0.03,0.04) {$\Sigma_s$};
  % infeasible target y* and the FOLD GAP (on xi=0, from y_c to y*)
  \node[red!75!black] at (-0.42,0) {$\times$};
  \node[red!75!black,anchor=south] at (-0.42,0.05) {$y^{*}$};
  \draw[red!75!black,<->] (-0.02,0) -- (-0.40,0);
  \node[red!75!black,anchor=north] at (-0.21,-0.03) {fold gap};
  \node[blue!70!black,anchor=east] at (-0.07,-0.52) {reflect};
  \node[blue!70!black] at (0.98,-0.52) {$S_-$ (incoming)};
  \node[black!45] at (0.98,0.52) {$S_+$ (not entered)};
  % simplified output law: clamped at the reachable boundary
  \node[blue!70!black,anchor=west,font=\scriptsize] at (0.1,0) {$y(t) \to \max(y^{*},y_c)$};
  % additive-excursion box
  %\node[draw=blue!60!black,fill=blue!8,rounded corners,align=center,text width=4.6cm,font=\scriptsize]
  %     at (0.42,-1.42) {output pinned at $y_c$, error $=|y^{*}-y_c|=|e|\,\|b\|$;\\[1pt]
  %                      $|e|\le|e_0|+2\delta\bar D=O(1/k)+O(\delta)$};
  % axis labels + title
  \node[anchor=north] at (0.5,-1.98) {output $y=y_c+c\,\xi^{2}$};
  \node[rotate=90,anchor=south] at (-0.76,-0.1) {argument $\xi$};
  \node[anchor=south] at (0.45,1.78) {(a) reflection ($\rho\ge0$): additive excursion};
\end{tikzpicture}\hfill
% ================= (b) crossing: multiplicative excursion =================
\begin{tikzpicture}[x=2.72cm,y=1.7cm,font=\footnotesize,line join=round]
  \fill[black!6] (-0.6,-1.9) rectangle (0,1.72);
  \draw[black!30,dotted] (0,-1.9) -- (0,1.72);
  \node[black!45,anchor=south] at (0,1.52) {$y=y_c$};
  \node[black!55,align=center] at (-0.33,1.05) {infeasible\\[-1pt]$y<y_c$};
  % rho<0 transit window (half-width delta_c) near the fold -- drawn before trajectory
  \fill[red!10] (0,-0.30) rectangle (0.52,0.30);
  % cover = trajectory (green)
  \draw[green!45!black,very thick] plot[variable=\t,domain=-1.2:1.2,samples=90] ({\t*\t},{\t});
  % delta_c window: extent along xi and the injected-transit label
  \draw[red!75!black,<->] (0.045,-0.30) -- (0.045,0.30);
  \node[red!75!black,anchor=west] at (0.07,-0.02) {$\rho<0$ for $2\delta_c$};
  \node[red!75!black,anchor=south west] at (-0.7,0.3) {transit $\dot\xi\sim k|e|$};
  % crossing arrows: in on S_-, out on S_+
  \draw[green!45!black,->] (0.3844,-0.62) -- (0.2304,-0.48);
  \draw[green!45!black,->] (0.2304,0.48) -- (0.3844,0.62);
  \fill (0,0) circle (1.3pt);
  \node[anchor=south east] at (-0.03,0.04) {$\Sigma_s$};
  \node[green!45!black,anchor=east] at (-0.07,-0.52) {cross};
  \node[green!45!black] at (0.98,-0.52) {$S_-$ (incoming)};
  \node[green!45!black] at (0.98,0.52) {$S_+$ (exit)};
  % simplified output law: reference tracked on either sheet
  \node[green!45!black,anchor=west,font=\scriptsize] at (0.05,0.95) {$y(t) \to y^{*}(t)$};
  % multiplicative-excursion box
  %\node[draw=green!45!black,fill=green!8,rounded corners,align=center,text width=4.6cm,font=\scriptsize]
  %     at (0.42,-1.42) {$|e|\le(|e_0|+2\delta_c\bar D)\,e^{2k\bar\rho\delta_c}$;\\[1pt]
  %                      bounded iff $k\delta_c=O(1)$ (else diverges, Prop.)};
  \node[anchor=north] at (0.5,-1.98) {output $y=y_c+c\,\xi^{2}$};
  \node[rotate=90,anchor=south] at (-0.76,-0.1) {argument $\xi$};
  \node[anchor=south] at (0.45,1.78) {(b) continuity: crossing --- multiplicative excursion};
\end{tikzpicture}
\caption{The reachability dichotomy and the excursion bounds at a min-type fold
$y=y_c+c\,\xi^2$ (output $y$ horizontal, fold coordinate $\xi$ vertical; the two
sheets $S_\pm=\{\pm\xi>0\}$ meet on $\Sigma_s=\{\xi=0\}$, and only $y\ge y_c$ is
reachable). \textbf{(a) Reflection} ($\rho\ge0$, Proposition~\ref{prop:reach}(i)):
the reference dips to the infeasible side $y<y_c$ (shaded), where no state exists;
the flow rides $\Sigma_s$ with the output pinned at $y_c$ and returns on the
incoming sheet $S_-$. The tracking error is then the \emph{fold gap}
$|y^*-y_c|=|e|\,\|b\|$ (red double-arrow), and the locking error obeys the
\emph{additive} bound $|e|\le|e_0|+2\delta\bar D=O(1/k)+O(\delta)$
(Proposition~\ref{prop:excursion}(a)): the excursion grows at most linearly in the
time spent at the singularity and shrinks with the gain. \textbf{(b) Crossing}
(continuity, Proposition~\ref{prop:reach}(ii)): the destination is reachable on the
far sheet $S_+$, and the state passes through the fold. On the sub-window of
half-width $\delta_c$ around $\Sigma_s$ the parameter authority reverses sign
($\rho<0$, shaded), and the locking error obeys the \emph{multiplicative} bound
$|e|\le(|e_0|+2\delta_c\bar D)\,e^{2k\bar\rho\delta_c}$
(Proposition~\ref{prop:excursion}(b)); the state traverses this window at the
injected speed $\dot\xi\sim k|e|$, so $\delta_c=O(1/k)$ and the factor is
$e^{O(1)}$ for a crossing-compatible reference, but diverges as the transversality
margin collapses ($k\delta_c\to\infty$), recovering the failure of
Proposition~\ref{prop:reach}(i).}
\label{fig:excursion}
\end{figure}

\subsection{Choosing the orientation in practice}\label{sec:orientation}

Propositions~\ref{prop:reach}--\ref{prop:excursion} turn the orientation into a decision rule rather than a stylistic choice; because $\ker A$ is one-dimensional, the single degree of freedom is spent either on advancing/locking $\mu$ or on steering through $\ker Df$, and the two uses are mutually exclusive (Remark~\ref{rem:tradeoff}). We summarize.

\emph{Default: $\rho\ge0$ (reflection).} The greedy, minimal-reconfiguration policy keeps $\mu$ monotone and the feedback dissipative, so tracking stays bounded and re-locks (Theorem~\ref{thm:crossing}). Whenever the incoming sheet continues to serve the reference --- generic in the redundant case, where both assemblies realize the same $y^*(t)$ --- it is preferable: additive excursion (Proposition~\ref{prop:excursion}(a)) and no reconfiguration.

\emph{Forced crossing: the branch dies.} If the reference drives the incoming sheet to the boundary of its reachable range (a neighbouring fold), maintaining $\rho\ge0$ would require $\dot\mu<0$; the flow stalls or diverges (Proposition~\ref{prop:reach}(i)/\ref{prop:failure}). A crossing is then necessary, and by Proposition~\ref{prop:excursion}(b) it is bounded only along a crossing-compatible reference. A practical detector follows from the error dynamics: if $|e|$ grows despite the $\rho\ge0$ orientation, the branch is dying and a crossing is being forced.

\emph{Discretionary crossing: a secondary criterion.} When both sheets track $y^*$ (Proposition~\ref{prop:reach}(ii), reachable-on-both case), the tracking objective is indifferent, and any preference for a working-mode change must come from outside it --- joint limits, collision avoidance, manipulability, actuator effort, or a commanded posture. This is the price of the greedy default: $\rho\ge0$ is locally but not globally optimal.

These cases suggest a two-level controller: the inner continuation flow \eqref{flow} realizes tracking with the $\rho\ge0$ guarantees, while a supervisor monitors branch viability (through the growth of $|e|$) and any secondary criterion, overriding the orientation to cross --- ideally after shaping the reference to be crossing-compatible --- only when Proposition~\ref{prop:reach} dictates. Formalizing this supervisor, and a second-degree-of-freedom augmentation that would let locking and branch selection proceed simultaneously, is left for future work.

\paragraph{Verification on scalar folds.} As a bridge from the theory to the experiments, we verify Propositions~\ref{prop:reach}--\ref{prop:excursion} directly on scalar folds, where each regime is isolated: the min-type normal form $y=x^2$ (reachable $y\ge0$, genuinely infeasible below) for reflection, and the cubic $y=x^3-x$ (surjective, so outputs beyond a fold live on the far sheet) for crossing. Figure~\ref{fig:excursion-data} collects the measurements. The reflection run pins the output at $y_c$ over the infeasible arc with error equal to the fold gap ($0.500$, matching $\max|y^*-y_c|$ to three digits) and a locking error that stays additive; forcing the continuity orientation into that same infeasible arc diverges ($x\to-49.7$), the failure of Proposition~\ref{prop:reach}(i). The crossing run passes through both folds with a locking-error transient confined to the $\rho<0$ window, and the gain sweep confirms the multiplicative bound of Proposition~\ref{prop:excursion}(b) holds at every $k$ (peak $|e|=4.19\le38.0$ at $k=100$), up to the point where $k\delta_c$ grows enough that the bound---and the run---diverge. In plain terms, the two constants appearing in these bounds have a simple reading: $\delta$ (and its crossing counterpart $\delta_c$) is \emph{how long the trajectory lingers near the singularity} --- the width of the window over which feedback authority is weak --- while $\bar D$ bounds \emph{how fast the locking error can drift} during that window, once the stabilizing feedback has gone limp. Reflection multiplies the two additively (drift rate $\times$ time adrift), so its excursion is mild; crossing lets the drift compound while $\rho<0$, so its excursion grows exponentially in the same product $k\delta_c$.

\begin{figure}[htbp]
\centering
% this figure is generated by fig_excursion_data.py
\includegraphics[width=0.98\textwidth]{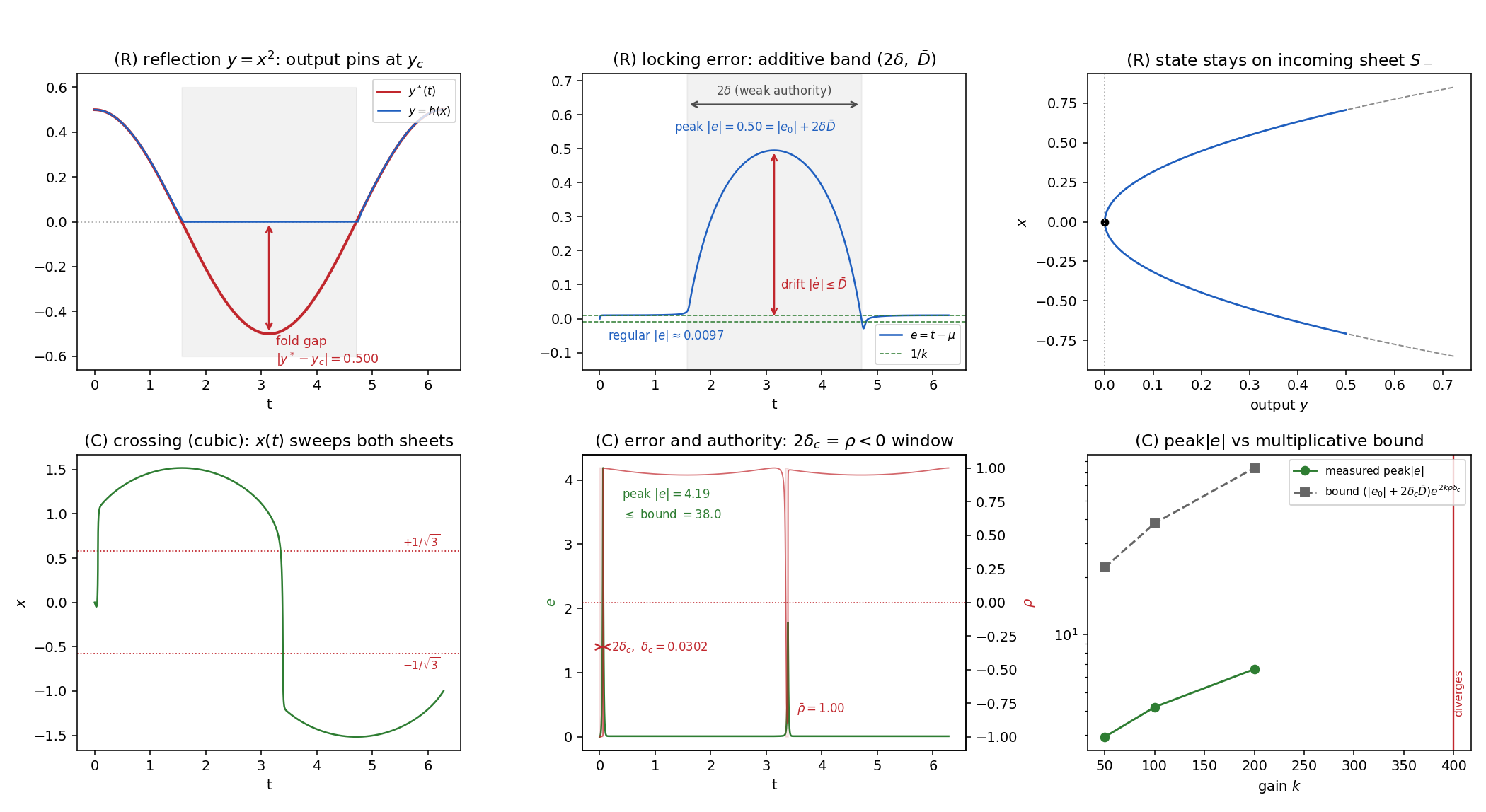}
\caption{Numerical verification of the reachability and excursion propositions on
scalar folds (all quantities measured from the continuation flow, $k=100$ unless
swept). \textbf{Top row --- reflection}, min-type fold $y=x^2$ under $\rho\ge0$,
reference $y^*=0.5\cos t$ dipping to the infeasible side $y<y_c=0$ (shaded):
\emph{(R1)} the output $y$ (blue) pins at $y_c$ while $y^*$ (red) is infeasible, the
red double-arrow marking the fold gap $|y^*-y_c|=0.500$; \emph{(R2)} the locking
error $e=t-\mu$ stays in the $\pm1/k$ band away from the fold ($|e|\approx9.7\times
10^{-3}$) and its excursion is additive (peak $0.50$), never explosive;
\emph{(R3)} the phase view $(y,x)$ shows the state confined to the incoming sheet
$S_-$ (reflection). \textbf{Bottom row --- crossing}, cubic $y=x^3-x$ under the
continuity orientation, $y^*=2\sin t$: \emph{(C1)} the state $x(t)$ sweeps through
both folds $\pm1/\sqrt3$ (dotted), a genuine branch change; \emph{(C2)} the locking
error $e$ (green) and the parameter authority $\rho$ (red, right axis) with the
$\rho<0$ transit window shaded --- here $\bar\rho=1$, $\delta_c$ gives $k\delta_c=3.0$,
and the measured peak $|e|=4.19$ respects the multiplicative bound
$(|e_0|+2\delta_c\bar D)e^{2k\bar\rho\delta_c}=38.0$; \emph{(C3)} the gain sweep
$k\in\{50,100,200,400\}$ plots the measured peak $|e|$ against that bound (log
scale): the bound holds throughout, and at $k=400$ the growth of $k\delta_c$ drives
both the bound and the run to divergence, the finite-$k$ realization of
Proposition~\ref{prop:reach}(i).}
\label{fig:excursion-data}
\end{figure}

\section{Generalization to control-affine systems of relative degree one}\label{sec:dynamic}

Sections~\ref{sec:wellbehaved}--\ref{sec:tracking} treated the output map as an object whose argument is directly assignable, $\dot x = u$: this is the kinematic, velocity-controlled case. We now return to the dynamic control-affine system of the introduction,
\begin{equation}\label{affine_sys}
\dot x = f(x) + g(x)\,u, \qquad y = h(x), \qquad x\in\R^n,\; u,y\in\R^m,
\end{equation}
of relative degree one, and show that the entire development --- Lemma~\ref{lem:kernel}, the transversality condition of Definition~\ref{def:transversal}, Theorem~\ref{thm:crossing}, Proposition~\ref{prop:failure}, and the reflection/crossing dichotomy --- carries over verbatim once the output-map Jacobian $Df$ is replaced by the \emph{decoupling matrix} of the system.

\subsection{Output-level homotopy and the decoupling matrix}

For a relative-degree-one system the input enters the first derivative of the output,
\begin{equation}\label{output_deriv}
\dot y = D h(x)\,[f(x)+g(x)u] = a(x) + \Lambda(x)\,u,
\end{equation}
with the decoupling matrix and output drift
\begin{equation}
\Lambda(x) := D h(x)\,g(x) \in \R^{m\times m}, \qquad a(x) := D h(x)\,f(x) \in \R^m .
\end{equation}
Feedback linearization inverts $\Lambda$; the \emph{control singularity} --- the loss of relative degree at which linearization fails --- is precisely $\rank \Lambda(x) < m$. This is the dynamic counterpart of the Jacobian singularity $\rank D f < n$ of the kinematic problem.

Impose the same homotopy, now on the output,
\begin{equation}\label{hom_dynamic}
H(x,\mu,t) = h(x) - y^*(t) + (\mu - t)\,b(t), \qquad b(t) := h(x_0) - y^*(t),
\end{equation}
so that $H = 0$ reproduces the output-error identity $h(x) - y^* = (t-\mu)\,b = e\,b$, with $e = t-\mu$. Differentiating along the closed loop and using \eqref{output_deriv},
\begin{equation}
\dot H = \Lambda(x)\,u + b\,\dot\mu + B, \qquad B := a(x) - b - \dot y^*(1-e),
\end{equation}
which is the linear relation $A\,c + B = 0$ in the unknowns $c = (u,\dot\mu)$, with the augmented matrix
\begin{equation}\label{A_dynamic}
A(x,t) = \bigl[\, \Lambda(x) \;\big|\; b(t) \,\bigr] \in \R^{m\times(m+1)} .
\end{equation}
This is structurally identical to the kinematic augmented system: the sole change is that the affine term $B$ now carries the drift $a(x) = D h\,f$, while $A = [\Lambda \mid b]$ has exactly the block form analyzed in Lemma~\ref{lem:kernel}. The controller is therefore
\begin{equation}\label{flow_dynamic}
c = \begin{pmatrix} u \\ \dot\mu \end{pmatrix} = -A^\dagger B + k\,e\,\hat n, \qquad \rho = \hat n_{m+1} \ge 0,
\end{equation}
where now the first $m$ components are the physical control $u$ applied to the plant \eqref{affine_sys} and $\mu$ is a scalar controller state. Setting $f \equiv 0$, $g \equiv I$ gives $\Lambda = D h = D f$ and $a = 0$, recovering the kinematic controller of Section~\ref{sec:tracking} exactly; the correspondence is $D f \leftrightarrow \Lambda = D h\,g$, $\dot x \leftrightarrow u$, with the drift $a = D h\,f$ absorbed into $B$.

Because $A = [\Lambda \mid b]$ has the same block structure, Lemma~\ref{lem:kernel} applies with $D f$ replaced by $\Lambda$: away from control singularities $\rho = (1 + \|\Lambda^{-1} b\|^2)^{-1/2} \in (0,1]$, and at a rank-$(m-1)$ point the kernel collapses to $\ker\Lambda \times \{0\}$ with $\rho \to 0$, provided transversality holds.

\begin{definition}[Dynamic transversality]\label{def:dyntransversal}
A control-singular point ($\rank\Lambda(x) = m-1$) is \emph{transversal} if $w^T b \ne 0$, where $w \ne 0$ spans the left null space of the decoupling matrix, $w^T\Lambda(x) = 0$; equivalently $b = h(x_0) - y^* \notin \range\Lambda(x)$. The vector $w$ is the output direction along which the input instantaneously loses authority.
\end{definition}

With this substitution, Theorem~\ref{thm:crossing} (bounded crossing and re-locking), Proposition~\ref{prop:failure} (failure at loss of transversality), and the reflection-versus-crossing analysis hold verbatim: their proofs use only the block structure of $A$ and the boundedness of $A^\dagger$ and $B$, and $B$ remains bounded wherever $a(x)$ is bounded. In particular the output error obeys $\|h(x) - y^*\| = |e|\,\|b\| = O(1/k)$ away from control singularities, and $\mu$ re-locks after a transversal crossing.

\subsection{Zero dynamics and boundedness of the state}

One genuinely new ingredient separates the dynamic case from the kinematic one. When $n > m$ the controller regulates only the $m$-dimensional output while the plant carries $n-m$ internal states, and boundedness of the control $u$ --- guaranteed exactly as the boundedness of $\dot x$ in Theorem~\ref{thm:crossing}(1) --- no longer implies boundedness of the full state. The internal (zero) dynamics must be stable. On a domain where relative degree one holds, choose coordinates $(y,z)$ with $z \in \R^{n-m}$ complementing the output, in which \eqref{affine_sys} reads
\begin{equation}
\dot y = a(x) + \Lambda(x)\,u, \qquad \dot z = q(y,z).
\end{equation}

\begin{assumption}[Minimum phase]\label{as:minphase}
The internal dynamics $\dot z = q(y,z)$ is input-to-state stable with respect to $y$, and the reference $y^*(t)$ together with $\dot y^*(t)$ is bounded.
\end{assumption}

\begin{proposition}[Dynamic tracking through control singularities]\label{prop:dyntracking}
Consider the relative-degree-one system \eqref{affine_sys} under the transversality condition of Definition~\ref{def:dyntransversal} along the trajectory, isolated control singularities, and the minimum-phase Assumption~\ref{as:minphase}. Then the dynamic continuation controller \eqref{flow_dynamic} produces a bounded control $u$, keeps the full state $(y,z)$ bounded, and achieves output tracking $\|h(x(t)) - y^*(t)\| = O(1/k)$ away from the singularities, with the bounded-excursion and re-locking guarantees of Theorem~\ref{thm:crossing} at each transversal control singularity.
\end{proposition}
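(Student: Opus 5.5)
The plan is to split the argument into an output-level part and an internal-state part. The output part reduces to Theorem~\ref{thm:crossing} by the structural substitution $Df\to\Lambda$. The internal part uses the minimum-phase Assumption~\ref{as:minphase} through a cascade/ISS argument. The order is: (1) well-posedness and boundedness of $u$; (2) the locking-error equation and its bounds; (3) output boundedness from the identity $h(x)-y^*=e\,b$; (4) state boundedness from ISS of the zero dynamics.

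\textbf{Step 1: bounded control.} Along the closed loop, $H\equiv0$ because $\dot H = A c + B = 0$ by construction of \eqref{flow_dynamic}. Differentiating $H$ therefore gives exactly the scalar locking equation $\dot e = 1-d-k\rho e$, now with $d=(-A^\dagger B)_{m+1}$ and $B=a(x)-b-\dot y^*(1-e)$. At each control singularity, Lemma~\ref{lem:kernel}(b) with $Df$ replaced by $\Lambda$ and Definition~\ref{def:dyntransversal} give $\rank A=m$. Hence $A^\dagger=A^T(AA^T)^{-1}$ is finite and $\hat n$ is a unit vector. This yields $\|u\|\le\|A^\dagger\|\,\|B\|+k|e|$, which is the analogue of Theorem~\ref{thm:crossing}(1).

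\textbf{The circularity, and how to break it.} The bound in Step 1 needs $a(x)$ bounded, and also a uniform lower bound on the smallest singular value of $A$. Both hold only if $x$ stays in a compact set, and that is what we want to prove. I would resolve this with a bootstrap (continuation-in-time) argument:
\begin{itemize}
\item Fix a compact set $K$ containing the initial state, together with a margin to be chosen.
\item Let $T^\sharp$ be the first exit time of $x$ from $K$.
\item On $[0,T^\sharp)$, continuity of $a$, $\Lambda$, $b$ on $K$, together with (A2)--(A3) read along the trajectory, gives a finite $\bar D_K$ and a positive $\rho_0$.
\end{itemize}

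\textbf{Step 2: locking-error bounds.} The proofs of items 2--4 of Theorem~\ref{thm:crossing} use only the scalar ODE for $e$, $\rho\ge0$ and $\bar D$. They therefore apply verbatim on $[0,T^\sharp)$. This gives $|e|\le |e(0)|+\bar D_K/(k\rho_0)$ off the singular windows, plus at most $2\delta\bar D_K$ during each window, followed by exponential re-locking. Because the singular instants are isolated, the windows are separated by intervals on which the error contracts. Summing these contributions window by window gives a uniform bound $|e|\le E_K$.

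\textbf{Step 3: output bound.} The identity $h(x)-y^*=e\,b$ gives $\|y\|\le\|y^*\|_\infty+E_K\|b\|_\infty$. By Assumption~\ref{as:minphase}, $y^*$ is bounded, and hence so is $b=h(x_0)-y^*$. The output is therefore uniformly bounded on $[0,T^\sharp)$.

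\textbf{Step 4: zero-dynamics bound and closing the bootstrap.} The ISS of $\dot z=q(y,z)$ gives
\[
\|z(t)\|\le\beta(\|z(0)\|,t)+\gamma(\|y\|_\infty),
\]
which is a bound independent of $K$ once $y$ is bounded. Choosing $K$ to contain the resulting $(y,z)$ ball with strict margin gives $T^\sharp=\infty$.

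\textbf{Main obstacle.} The main difficulty is that $E_K$ itself depends on $K$ through $\bar D_K$, and $\bar D_K$ contains $a(x)$. This makes the choice of $K$ self-referential. I would break the loop by noting that $y$-boundedness does not actually need $E_K$ small. It only needs $E_K$ finite, and $E_K$ scales like $\bar D_K(1/(k\rho_0)+2\delta N)$, where $N$ is the number of singular windows. Two routes are available:
\begin{itemize}
\item Take $k$ large enough that the off-window term shrinks, so that the output stays in a fixed neighbourhood of the reference tube.
\item Alternatively, argue that the singular instants and $\delta$ are determined along the reference tube. This makes $y$ lie in a $K$-independent set up to $O(1/k)$ and $O(\delta)$ terms, after which the ISS bound on $z$ fixes $K$.
\end{itemize}
The stated $O(1/k)$ tracking away from singularities then follows from Step 2 combined with $h(x)-y^*=e\,b$.
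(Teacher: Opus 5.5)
Your Steps 1--4 are the paper's proof. Rank $A=m$ from transversality bounds $A^\dagger$ and $\hat n$; the scalar locking equation with $Df\to\Lambda$ gives the $O(1/k)$ band and the crossing excursion; the identity $h(x)-y^*=e\,b$ bounds $y$; and ISS of $\dot z=q(y,z)$ bounds $z$. The paper closes simply by asserting that bounded $x$ keeps $a(x)$, $\Lambda(x)$, $b(t)$ bounded, ``which closes the argument.'' In effect it reads $\bar D<\infty$ and the constants $\rho_0,\delta$ of (A1)--(A3) as given along the closed-loop trajectory, exactly as Theorem~\ref{thm:crossing} does. You are right that this is circular in the dynamic case, because $\bar D$ now contains $a(x)$. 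However, the bootstrap you add to break the circle does not close, and that is a genuine gap in your argument.

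\textbf{Why the bootstrap fails.} On $[0,T^\sharp)$ you obtain $\|y\|\le\|y^*\|_\infty+\|b\|_\infty E_K$, where $E_K\ge|e(0)|+\bar D_K/(k\rho_0)+2\delta\bar D_K$, and then $\|z\|\le\beta(\|z(0)\|,0)+\gamma(\|y^*\|_\infty+\|b\|_\infty E_K)$. Concluding $T^\sharp=\infty$ requires $K$ to contain the ball it generates. That is a fixed point of a map $K\mapsto K'$ which grows with $K$ whenever $\sup_K|a|$ grows, and nothing in the hypotheses supplies one.
\begin{itemize}
\item Your first route does not help. Large $k$ removes only $\bar D_K/(k\rho_0)$, while the crossing term $2\delta\bar D_K$ of Theorem~\ref{thm:crossing}(3) is independent of $k$. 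Moreover, $\rho_0$ and $\delta$ are properties of the closed-loop trajectory, so they change with both $k$ and $K$ and cannot be fixed in advance.
\item Your second route fails for the same reason. Even if the singular windows were pinned by the reference tube, the excursion is still $2\delta\bar D_K$, with $\bar D_K$ depending on $K$ through $\sup_K|a|$.
\end{itemize}

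\textbf{What is missing.} The missing ingredient is a hypothesis, not a cleverer choice of $K$. One option is to state $\sup_t|a(x(t))|<\infty$ (equivalently $\bar D<\infty$), together with (A2)--(A3), as standing assumptions along the trajectory. Your Steps 2--4 then give state boundedness directly, with no bootstrap; this is the honest reading of the paper's proof. The other option is to assume the invariance condition outright: a compact $K$ whose generated $(y,z)$-ball lies strictly inside it. This could be derived, for instance, from a growth bound on $a$ combined with a small-gain condition against the ISS gain $\gamma$.

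Separately, your scaling $E_K\sim\bar D_K\bigl(1/(k\rho_0)+2\delta N\bigr)$ is not uniform in time when singularities recur, as they do in the periodic simulations, where $N=\infty$. A time-uniform bound needs a lower bound on the separation between singular windows, so that re-locking reabsorbs each excursion before the next window. (A3), stated for a single $t^*$, does not provide this.
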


\begin{proof}
Transversality gives $\rank A = m$ throughout, so $A^\dagger = A^T(AA^T)^{-1}$ and the unit kernel $\hat n$ are bounded, and the control $u = (-A^\dagger B + k e\,\hat n)_{1:m}$ is bounded. The scalar error equation \eqref{error_ode}, with $D f$ replaced by $\Lambda$, yields $|e| = O(1/k)$ where $\rho \ge \rho_0$ and the bounded excursion of Theorem~\ref{thm:crossing}(3) across a transversal crossing; hence $\|y - y^*\| = |e|\,\|b\|$ is bounded and $O(1/k)$ in the regular regime. Boundedness of $y = h(x)$ and of $y^*$ then bounds $z$ through the ISS property of Assumption~\ref{as:minphase}, hence bounds $x$; bounded $x$ keeps $a(x)$, $\Lambda(x)$, $b(t)$ bounded, which closes the argument.
\end{proof}

\begin{remark}
The scalar motivating example of Section~\ref{sec:motivating} ($\dot x = u$, $y = x(x^2-1)$) is the case $m = n = 1$, $f = 0$, $g = 1$, with $\Lambda(x) = D h(x) = 3x^2-1$ vanishing at $x = \pm 1/\sqrt3$; there are no internal states and Assumption~\ref{as:minphase} is vacuous. The generalization thus subsumes both the motivating example and the kinematic arm, and extends the guarantees to genuine control singularities --- points of relative-degree loss --- of MIMO affine systems, provided the system is minimum phase.
\end{remark}

\section{Arbitrary relative degree via filtered-error reduction}\label{sec:reldeg}

The relative-degree-one theory of Section~\ref{sec:dynamic} extends to any relative degree through a standard reduction. Consider first a SISO system of relative degree $r$, so the input enters at the $r$-th derivative of the output,
\begin{equation}\label{rdr}
y^{(r)} = L_f^r h(x) + L_g L_f^{r-1} h(x)\,u = \alpha_r(x) + \beta(x)\,u,
\end{equation}
with the high-frequency gain $\beta(x) = L_g L_f^{r-1} h(x)$. The control singularity --- the loss of relative degree at which feedback linearization $u = (v - \alpha_r)/\beta$ diverges --- is $\beta(x) = 0$, the direct generalization of $\Lambda$ dropping rank in Section~\ref{sec:dynamic}.

Introduce the \emph{filtered tracking error}
\begin{equation}\label{sliding}
\sigma \;=\; \Bigl(\tfrac{d}{dt} + \lambda\Bigr)^{r-1}\,(y - y^*), \qquad \lambda > 0 .
\end{equation}
Its highest-order term is $(y-y^*)^{(r-1)}$, which is free of $u$, so $\sigma$ has relative degree exactly one:
\begin{equation}\label{sigma_dot}
\dot\sigma = a_\sigma(x,t) + \beta(x)\,u,
\end{equation}
with the \emph{same} gain $\beta(x)$ and a $u$-free drift $a_\sigma$ collecting $\alpha_r$, the reference derivatives, and the lower-order filter terms. Equation \eqref{sigma_dot} is precisely the relative-degree-one structure of Section~\ref{sec:dynamic} with scalar decoupling $\beta$. Applying the continuation controller to regulate $\sigma \to 0$ --- augmented matrix $A = [\beta \mid b_\sigma]$, transversality $b_\sigma \ne 0$ (equivalently $b_\sigma \notin \range\beta$) at $\beta = 0$, and the flow \eqref{flow_dynamic} --- yields, by Theorem~\ref{thm:crossing}, a bounded control and $|\sigma| = O(1/k)$ with re-locking through the singularity. Since \eqref{sliding} is a Hurwitz filter of order $r-1$ driven by $\sigma$, the output error inherits the bound
\begin{equation}
\Bigl(\tfrac{d}{dt}+\lambda\Bigr)^{r-1}(y - y^*) = \sigma \;\Longrightarrow\; |y - y^*| = O(1/k),
\end{equation}
the constant depending on the filter pole $\lambda$.

The MIMO case is identical. For a vector relative degree $\{r_1,\dots,r_m\}$ with decoupling matrix $\Lambda(x)$ of entries $L_{g_j} L_f^{r_i-1} h_i$, set $\sigma_i = (\tfrac{d}{dt}+\lambda_i)^{r_i-1}(y_i - y_i^*)$; then $\dot\sigma = a_\sigma + \Lambda(x)\,u$ has relative degree one, the augmented matrix is $A = [\Lambda \mid b_\sigma] \in \R^{m\times(m+1)}$, and every result of Sections~\ref{sec:tracking}--\ref{sec:dynamic} applies with $D f$ replaced by $\Lambda$.

\begin{corollary}[Arbitrary relative degree]\label{cor:rdr}
Let the system have a uniform (vector) relative degree with decoupling matrix $\Lambda(x)$, be minimum phase (input-to-state stable zero dynamics of dimension $n - \sum_i r_i$), and let the reference be bounded with bounded derivatives. If the filtered-error anchor $b_\sigma$ is transversal to $\Lambda$ at the isolated decoupling singularities ($\rank\Lambda = m-1$), then the continuation controller applied to $\sigma$ produces a bounded control $u$, keeps the full state bounded, drives $\sigma$ to an $O(1/k)$ band with the bounded-excursion and re-locking behaviour of Theorem~\ref{thm:crossing} at each singularity, and hence achieves bounded output tracking $\|y - y^*\| = O(1/k)$ --- passing through decoupling singularities where feedback linearization diverges.
\end{corollary}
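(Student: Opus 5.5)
The plan is to reduce the corollary to Proposition~\ref{prop:dyntracking} applied to the auxiliary output $\sigma$, and then transfer the bound on $\sigma$ back to $y-y^*$ through the stable filter. The first step is structural. For each channel, expand $\sigma_i=(\tfrac{d}{dt}+\lambda_i)^{r_i-1}(y_i-y_i^*)$ as $\sum_{j=0}^{r_i-1}\binom{r_i-1}{j}\lambda_i^{r_i-1-j}(y_i-y_i^*)^{(j)}$. Uniform vector relative degree makes each $y_i^{(j)}=L_f^jh_i(x)$ for $j\le r_i-1$ independent of $u$. Differentiating once then gives $\dot\sigma=a_\sigma(x,t)+\Lambda(x)u$. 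Here $a_\sigma$ collects $L_f^{r_i}h_i$, the lower-order filter terms and $y^{*(j)}$, $j\le r_i$. These are smooth in $x$ and bounded in $t$ by the bounded-derivative hypothesis on the reference. So $\sigma$ is a relative-degree-one output with decoupling matrix exactly $\Lambda$, and the homotopy \eqref{hom_dynamic} is imposed on $\sigma$ with $b_\sigma(t)=\sigma(x_0,0)-\sigma^*$, where $\sigma^*\equiv0$. This gives the augmented matrix $A=[\Lambda\mid b_\sigma]$, the drift $B=a_\sigma-b_\sigma$ (the reference term $\dot\sigma^*$ vanishes), and the identity $\sigma=e\,b_\sigma$ on $H=0$.

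The second step verifies the hypotheses of the relative-degree-one results. Transversality of $b_\sigma$ at the isolated rank-$(m-1)$ points is assumed, so Lemma~\ref{lem:kernel}(b) gives $\rank A=m$ along the trajectory and $A^\dagger$, $\hat n$ bounded. Theorem~\ref{thm:crossing}, read with $Df\to\Lambda$, then yields a bounded $u$, $|e|=O(\bar D/(k\rho_0))$ off the singular windows, the additive excursion bound $2\delta\bar D$ on them, and re-locking. By $\sigma=e\,b_\sigma$ these become the corresponding bounds on $\sigma$.

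The third step handles state boundedness and is the main obstacle, because the drift bound $\bar D$ depends on $a_\sigma(x,t)$ and therefore on the full state, so the argument is circular. I would close it with a bootstrap on a maximal interval of existence. In normal-form coordinates $(\xi,z)$, where $\xi$ stacks $y_i-y_i^*$ and its derivatives up to order $r_i-1$, the filter equations read $\dot\xi=F\xi+G\sigma$ with $F$ Hurwitz (poles $-\lambda_i$). Hence $\|\xi\|\le c_1\|\xi(0)\|e^{-\lambda t}+c_2\sup\|\sigma\|$. The ISS zero dynamics, driven by $\xi+$ the bounded reference jets, then bound $z$. Suppose the state stays in a compact set $K$ on $[0,T)$. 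Then $\bar D$, $\|b_\sigma\|$ and $\rho_0$ are determined by $K$ and the reference bounds, $\sigma$ is bounded by a constant independent of $T$, and the cascade estimate returns the state to a (possibly larger) compact set. The delicate point is choosing $K$ self-consistently: a large enough $k$ makes $\sup|\sigma|$ small outside the windows, and the window contribution $2\delta\bar D\|b_\sigma\|$ is fixed, so the sublevel set of the ISS/filter Lyapunov bound must be taken to contain both. This is where I expect to lean on the isolated-singularity assumption (A3) to keep $\delta$ and $\rho_0$ uniform.

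Finally, the same linear filter estimate with $\sigma=O(1/k)$ in the regular regime gives $\|y-y^*\|\le\|\xi\|=O(1/k)$ after the exponential transient and the re-locking time. The constant depends on $\lambda_i$ through $\|(sI-F)^{-1}G\|_{L_1}$. Across a singular window the output error inherits the bounded excursion, which is then filtered. This yields the stated conclusion.
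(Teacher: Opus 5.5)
Your steps 1–2 and the final filter transfer match the reduction the paper gives just before the corollary. There, $\sigma$ has relative degree one with the same decoupling matrix $\Lambda$; the continuation law with $A=[\Lambda\mid b_\sigma]$ and Theorem~\ref{thm:crossing} puts $\sigma=e\,b_\sigma$ in an $O(1/k)$ band with bounded excursion; and the Hurwitz filter carries this to $y-y^*$. Your remark that the filter memory delays the return to the $O(1/k)$ band after each window is more careful than the paper. You diverge in step 3. The paper does not bootstrap. It treats boundedness of the data along the trajectory, and hence $\bar D<\infty$, as part of the standing assumptions (A1)--(A3). It then gets the state bound from the forward cascade of Proposition~\ref{prop:dyntracking}, whose closing sentence is exactly the circularity you identify.

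Your attempt to remove that circularity does not close as written, because the window term $2\delta\bar D\|b_\sigma\|$ is not fixed.
\begin{itemize}
\item In the scalar case $1-d=(\Lambda^2+b_\sigma a_\sigma)/(\Lambda^2+b_\sigma^2)$, which equals $a_\sigma/b_\sigma$ on the singular set. So $\bar D$ grows with $K$ through $\sup_K|a_\sigma|$, hence through $L_f^{r}h$.
\item Neither $\bar D$ nor $\delta$ shrinks with $k$. In a reflection $\delta$ is $O(1)$, since the state rides the singular set for as long as the reference is infeasible.
\item ``Choosing $K$ to contain both'' is therefore a fixed-point condition of the form $R\ge c_0+\gamma\bigl(2\delta\,\bar D(R)\,\|b_\sigma\|\bigr)$, with $\gamma$ the combined filter/ISS gain. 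This holds only under a small-gain or growth hypothesis that the corollary does not contain.
\item Using (A3) to keep $\delta$ and $\rho_0$ uniform already presupposes properties of the very trajectory whose boundedness is in question.
\end{itemize}
So either add such a condition explicitly, or adopt the paper's along-the-trajectory reading, in which case step 3 is just the forward cascade $\sigma\to\xi\to z\to x$.

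Separately, the paper warns that the normal form degenerates at $\beta=0$. Your $\xi$-estimate survives, since it is a linear ODE satisfied by the functions $L_f^jh_i(x)-y_i^{*(j)}$ wherever you are. What may fail is treating $(\xi,z)$ as a chart through the singular windows, and hence concluding that bounded $(\xi,z)$ means bounded $x$. The ISS step therefore has to be phrased in $x$-space, as the paper says.
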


Three costs of the reduction should be stated plainly. First, it requires a \emph{known}, uniform relative degree and the Lie derivatives up to order $r$, so the controller depends on more of the model and is correspondingly more sensitive to modelling error, exactly as exact feedback linearization is. Second, minimum-phaseness must now be certified on the larger $n - \sum_i r_i$ zero dynamics. Third, the relative-degree normal form itself degenerates at $\beta = 0$; the controller is unaffected because it operates in the original $x$-coordinates and only needs $\alpha_r, \beta$ (equivalently $a_\sigma, \Lambda$) as smooth functions of $x$, but the internal-boundedness argument must then be carried out in $x$-space via the minimum-phase assumption rather than in normal-form coordinates. Transient peaking of the filtered-error chain, familiar from high-gain designs, is also possible for aggressive $\lambda$.

\subsection{Relation to feedback linearization}\label{sec:fl}

We close the theory by placing the controller relative to exact feedback linearization: away from singularities it \emph{is} feedback linearization up to an $O(1/k)$ term, and its only added value is what it does \emph{at} the singularities.

\begin{proposition}[Consistency with feedback linearization]\label{prop:consistency}
In the regular region ($\Lambda$ invertible), at the locking steady state $\dot\mu = 1$, the continuation control \eqref{flow_dynamic} equals
\begin{equation}\label{u_reduction}
u = \Lambda^{-1}\bigl(\dot y^*(1-e) - a\bigr) = \Lambda^{-1}\bigl(\dot y^* - a\bigr) + O(1/k),
\end{equation}
i.e. it coincides with input--output feedback linearization (feedforward $\Lambda^{-1}(\dot y^* - a)$) up to an $O(1/k)$ tracking correction, and converges to it exactly as $k \to \infty$.
\end{proposition}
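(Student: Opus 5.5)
The plan is to exploit the fact that every solution $c=(u,\dot\mu)$ of the augmented linear system $A\,c+B=0$ satisfies the same $m$ scalar equations, whichever particular solution and kernel multiple the controller picks. The output of \eqref{flow_dynamic} is one such solution, since $A\hat n=0$ and $AA^\dagger=I$ when $\rank A=m$. So the first step is to write its first block row explicitly:
\[
\Lambda(x)\,u + b\,\dot\mu + a(x) - b - \dot y^*(1-e) = 0 .
\]
This is just the identity $\dot H=0$ from \eqref{hom_dynamic}, with $B=a-b-\dot y^*(1-e)$.

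In the regular region $\Lambda$ is invertible, so this relation can be solved uniquely for $u$ in terms of $\dot\mu$:
\[
u = \Lambda^{-1}\bigl(\dot y^*(1-e) - a + b\,(1-\dot\mu)\bigr).
\]
Imposing the locking steady state $\dot\mu=1$ kills the $b$-term and gives the first equality in \eqref{u_reduction} directly. No formula for $A^\dagger$ or for $\hat n$ is needed, because the decomposition into $-A^\dagger B$ and $k e\hat n$ only fixes $\dot\mu$, and $\dot\mu$ is what the hypothesis prescribes. As a cross-check I would verify this against Lemma~\ref{lem:kernel}(a): the kernel direction $(-\Lambda^{-1}b,1)$ is exactly what trades $\dot\mu$ against $u$.

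For the second equality, write $\Lambda^{-1}\dot y^*(1-e)=\Lambda^{-1}(\dot y^*-a)+\Lambda^{-1}\dot y^*\cdot(-e)$. It then suffices to show $|e|=O(1/k)$ with a bounded factor $\|\Lambda^{-1}\dot y^*\|$. In the regular region $\Lambda^{-1}$ is bounded on the compact set where $\rho\ge\rho_0$, equivalently where $\sigma_{\min}(\Lambda)$ is bounded below. The reference derivative $\dot y^*$ is bounded by (A1) and Assumption~\ref{as:minphase}.

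The bound $|e|\le |e(t_1)|e^{-k\rho_0(t-t_1)}+\bar D/(k\rho_0)$ comes from Theorem~\ref{thm:crossing}(2), transferred to the dynamic case as in Proposition~\ref{prop:dyntracking}. At the locking steady state the transient term has decayed, leaving $O(1/k)$. As $k\to\infty$ the correction vanishes, which gives the exact limit.

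I expect the main obstacle to be interpretive rather than computational: the meaning of ``locking steady state $\dot\mu=1$''. If $\dot\mu=1$ holds exactly, then from \eqref{error_ode} we get $\dot e=0$ and $e=(1-d)/(k\rho)$, which is consistent with the $O(\bar D/(k\rho_0))$ band. I would state explicitly that $e$ is then a constant of size $O(1/k)$. If instead $\dot\mu$ is only approximately $1$ after the exponential transient, then $1-\dot\mu=\dot e=O(e^{-k\rho_0 t})+O(1/k)$, and the extra term $\Lambda^{-1}b(1-\dot\mu)$ must also be absorbed into $O(1/k)$. That requires boundedness of $\dot e$, which follows from $\dot e=1-d-k\rho e$ with $k|e|=O(1)$. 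I would include this remark to make the $O(1/k)$ claim robust.
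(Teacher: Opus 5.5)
Your proof is correct, and for the first equality it takes a genuinely shorter route than the paper. The paper specializes to $m=1$ and writes out the least-norm particular solution $u_p=-\Lambda B/(\Lambda^2+b^2)$, $\dot\mu_p=-bB/(\Lambda^2+b^2)$ and the unit kernel $\hat n=(-b,\Lambda)/\sqrt{\Lambda^2+b^2}$. It then solves $\dot\mu_p+ke\rho=1$ for $ke=(1-\dot\mu_p)/\rho$, substitutes into $u=u_p+ke\,\hat n_1$, and simplifies algebraically to $u=-(B+b)/\Lambda$. The vector case is only asserted to ``follow identically''.

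You bypass all of this. Any output of the flow satisfies $\Lambda u+b\dot\mu+B=0$ (since $AA^\dagger=I$ and $A\hat n=0$), so $\dot\mu=1$ gives $u=-\Lambda^{-1}(B+b)$ in one line. This actually proves the vector case. It also shows the identity depends only on the constraint $\dot H=0$. It is therefore independent of which right inverse is used, of the kernel normalization and orientation, and of the locking law, so it covers the PI and feedforward compensators of Remark~\ref{rem:exactlocking} verbatim.

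The paper's explicit computation does buy one thing: the steady-state value $ke=(1-d)/\rho$ of the null-space coefficient, i.e.\ the $\gamma_{ss}$ of that remark with its $1/\rho$ blow-up near $\mathcal D$. But you recover the same identity from \eqref{error_ode}, and it is worth using it as your primary bound. At an instant where $\dot\mu=1$ it gives $|e|\le\bar D/(k\rho_0)$ directly, with no need to wait for the transient of Theorem~\ref{thm:crossing}(2) to decay.

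Two minor corrections in the $O(1/k)$ step.

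First, $\rho\ge\rho_0$ alone controls only $\|\Lambda^{-1}b\|$, not $\sigma_{\min}(\Lambda)$. For $\Lambda=\mathrm{diag}(1,\varepsilon)$ and $b=(1,0)^T$ one has $\rho=1/\sqrt2$ for every $\varepsilon>0$. Your ``equivalently'' becomes true once you add the uniform lower bound on $\sigma_{\min}(A)$ that the paper asserts under (A1)--(A2). Testing $A^T$ on the unit left singular vector of $\Lambda$ for its smallest singular value gives $\sigma_{\min}(\Lambda)\ge\rho\,\sigma_{\min}(A)$, so $\|\Lambda^{-1}\dot y^*\|$ is bounded wherever $\rho\ge\rho_0$.

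Second, in your robustness aside, boundedness of $\dot e$ makes $\Lambda^{-1}b\,(1-\dot\mu)$ only $O(1)$. Absorbing it into $O(1/k)$ needs $\dot e=O(1/k)$. That requires a quasi-steady-state argument on $e-(1-d)/(k\rho)$ with bounded time derivatives of $d$ and $\rho$, hence $y^*\in C^2$, which is beyond (A1). The transient in $\dot e$ is then of size $O(k\,e^{-k\rho_0 t})$, not $O(e^{-k\rho_0 t})$.

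Neither point affects the proposition as stated, which takes $\dot\mu=1$ exactly.
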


\begin{proof}
Take the scalar relative-degree-one case, $A = [\Lambda \mid b]$, $B = a - b - \dot y^*(1-e)$. The minimum-norm right inverse gives the particular solution component $u_p = -\Lambda B/(\Lambda^2+b^2)$ and $\dot\mu_p = -bB/(\Lambda^2+b^2)$, and the oriented unit kernel is $\hat n = (-b,\Lambda)/\sqrt{\Lambda^2+b^2}$ with $\rho = \Lambda/\sqrt{\Lambda^2+b^2}$. Imposing $\dot\mu = \dot\mu_p + k e\,\rho = 1$ fixes $k e = (1-\dot\mu_p)/\rho$. Substituting into $u = u_p + k e\,\hat n_1$ and simplifying,
\[
u = u_p - \frac{b}{\sqrt{\Lambda^2+b^2}}\cdot\frac{1-\dot\mu_p}{\rho}
  = -\frac{(\Lambda^2+b^2)(B+b)}{\Lambda(\Lambda^2+b^2)} = -\frac{B+b}{\Lambda} .
\]
Since $B + b = a - \dot y^*(1-e)$, this is $u = \Lambda^{-1}(\dot y^*(1-e) - a)$. The locking error satisfies $e = O(1/k)$ by Theorem~\ref{thm:crossing}(2), giving \eqref{u_reduction}. The vector and higher-relative-degree cases follow identically, the latter on the filtered variable $\sigma$: away from $\beta = 0$, $u = \beta^{-1}(-a_\sigma) + O(1/k)$.
\end{proof}

Using the output identity $y - y^* = e\,b$, the $O(1/k)$ term in \eqref{u_reduction} is exactly a proportional feedback of the tracking error, and the locking dynamics $\dot e = 1 - \dot\mu$, contracting at rate proportional to $k\rho$, plays the role of the linear outer loop of a feedback-linearizing tracker. The continuation controller is therefore a \emph{strict extension} of feedback linearization: identical to it (to $O(1/k)$) wherever $\Lambda$ is invertible, and bounded --- by Theorem~\ref{thm:crossing} and Proposition~\ref{prop:dyntracking} --- where $\Lambda$ is not.

\section{Multi-parameter continuation and adaptive control}\label{sec:adaptive}

The augmentation need not stop at one column. Introducing $p \ge 1$ continuation parameters enlarges the augmented Jacobian to
\begin{equation}\label{eq:multiaug}
A \;=\; [\,\Lambda \mid b_1\,\cdots\,b_p\,] \in \R^{m\times(m+p)},
\end{equation}
with a $p$-dimensional kernel at regular points. This section collects the structural results for \eqref{eq:multiaug} --- all machine-checked (Section~\ref{sec:lean}) --- and reads them as statements about adaptive control, where some of the columns are parameter sensitivities rather than homotopy directions. The numerical illustration is in Section~\ref{sec:adaptivesim}.

\begin{proposition}[Exact repair criterion]\label{prop:multirepair}
The augmented matrix \eqref{eq:multiaug} has full row rank $m$ if and only if the classes of $b_1,\dots,b_p$ in the cokernel $\R^m/\range\Lambda$ span it. In particular, at a corank-$p$ rank drop, $p$ directions with linearly independent cokernel classes repair the drop; and whenever the classes are linearly independent the kernel collapses to
\[
\ker A \;=\; \ker\Lambda \times \{0\},
\]
all $p$ parameter authorities vanishing simultaneously at the singularity. For $p = 1$ the criterion at a corank-one point is exactly the transversality of Definition~\ref{def:transversal}.
\end{proposition}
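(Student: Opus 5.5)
The plan is to reduce everything to elementary linear algebra on the range of $A$, generalizing the proof of Lemma~\ref{lem:kernel}(b). Write $B_p = [\,b_1\cdots b_p\,]\in\R^{m\times p}$ and let $\pi:\R^m\to\R^m/\range\Lambda$ be the quotient map. The first step is the identity
\[
\range A \;=\; \range\Lambda + \spanop\{b_1,\dots,b_p\} \;=\; \pi^{-1}\bigl(\spanop\{\pi(b_1),\dots,\pi(b_p)\}\bigr).
\]
It follows because $\range A$ contains $\range\Lambda=\ker\pi$, and a subspace containing $\ker\pi$ is the full preimage of its image. Hence $\range A=\R^m$ if and only if the classes $\pi(b_j)$ span the cokernel. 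This is the full-row-rank criterion. Concretely, one can pick a basis $w_1,\dots,w_q$ of the left null space of $\Lambda$ (with $q=m-\rank\Lambda$) and identify the cokernel with $\R^q$ via $y\mapsto(w_i^Ty)_i$. The criterion then reads $\rank\bigl(W^TB_p\bigr)=q$, where $W=[w_1\cdots w_q]$.

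The second step is the corank-$p$ specialization. When $q=p$, the $p$ classes span the $p$-dimensional cokernel exactly when they are linearly independent, that is, when the square matrix $W^TB_p$ is nonsingular. This gives the ``$p$ directions repair the drop'' clause.

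The third step is the kernel collapse. Take $(v,\theta)\in\ker A$ with $v\in\R^m$ and $\theta\in\R^p$, so that $\Lambda v+B_p\theta=0$. Applying $\pi$ kills $\Lambda v$ and leaves $\sum_j\theta_j\pi(b_j)=0$. Linear independence of the classes then forces $\theta=0$, hence $\Lambda v=0$, which gives $\ker A\subseteq\ker\Lambda\times\{0\}$. The reverse inclusion is immediate. Since every kernel vector has vanishing last $p$ coordinates, each parameter authority (the $\mu_j$-components of any orthonormal kernel basis) is zero at the singular point. Independence alone suffices here; spanning is not needed for this clause.

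Finally, for $p=1$ at a corank-one point the cokernel is one-dimensional, identified via $y\mapsto w^Ty$. The class of $b_1$ spans it iff it is nonzero iff $w^Tb\neq0$, which is Definition~\ref{def:transversal}, and the kernel statement recovers Lemma~\ref{lem:kernel}(b).

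I expect no real obstacle. The only point requiring care is the phrasing of the first step, namely showing that ``$\range A=\R^m$'' is equivalent to spanning in the quotient rather than merely sufficient. The preimage argument handles both directions at once, so I would state it explicitly rather than argue by dimension counting.
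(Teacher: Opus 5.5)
Your proposal is correct and follows essentially the same route as the paper. The identity $\range A=\range\Lambda+\spanop\{b_1,\dots,b_p\}$ gives the full-row-rank criterion, and projecting $\Lambda v+\sum_j\theta_j b_j=0$ onto the cokernel gives the kernel collapse under independence; the $p=1$ case then reduces to $w^Tb\neq0$, with your preimage formulation and the explicit test $\rank(W^TB_p)=q$ being only more explicit packagings of the paper's ``cover a complement'' argument.
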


\begin{proof}
$\range A = \range\Lambda + \operatorname{span}\{b_1,\dots,b_p\}$, so $A$ is onto iff the added directions cover a complement of $\range\Lambda$, i.e.\ iff their cokernel classes span; the count follows since $p$ independent classes span an $r$-dimensional cokernel exactly when $r \le p$. If the classes are independent and $(v,c) \in \ker A$, then $\sum_i c_i\,[b_i] = [-\Lambda v] = 0$ in the cokernel forces $c = 0$, and then $\Lambda v = 0$. For $p = 1$ a single class is independent iff it is nonzero, i.e.\ iff $b \notin \range\Lambda$.
\end{proof}

\begin{theorem}[Residual authority]\label{thm:residual}
Let $\pi : (v, c) \mapsto c$ denote the projection onto the parameter block. Then $\pi(\ker A)$ is the kernel of the composite map $\bar B : c \mapsto \bigl[\sum_i c_i b_i\bigr] \in \R^m/\range\Lambda$: a rate vector $c$ extends to a kernel direction $(v,c) \in \ker A$ exactly when $\sum_i c_i b_i$ falls back into $\range\Lambda$. Consequently, if $A$ has full row rank at a corank-$r$ point,
\[
\dim \pi(\ker A) \;=\; p - r :
\]
each unit of corank consumed by the repair pins one combination of the parameter rates, and the remaining $p-r$ combinations stay free. At a regular point ($r = 0$) nothing is pinned, and for $p = r = 1$ the statement is the authority collapse $\rho = 0$ of Lemma~\ref{lem:kernel}(b).
\end{theorem}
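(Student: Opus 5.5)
The plan is to prove the characterization of $\pi(\ker A)$ by a direct two-inclusion argument, and then obtain the dimension count by rank--nullity applied to $\bar B$. Let $q:\R^m\to\R^m/\range\Lambda$ be the quotient map. Write the parameter map as $Bc=\sum_i c_ib_i$, so that $A(v,c)=\Lambda v+Bc$ and $\bar B=q\circ B$.

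\emph{Step 1 (the characterization).} For the inclusion $\pi(\ker A)\subseteq\ker\bar B$, take $(v,c)\in\ker A$. Then $Bc=-\Lambda v\in\range\Lambda$, so $\bar B c=0$. For the reverse inclusion, take $c\in\ker\bar B$. Then $Bc\in\range\Lambda$, so some $v$ satisfies $\Lambda v=-Bc$. Hence $(v,c)\in\ker A$ and $c=\pi(v,c)$. This is exactly the stated extension criterion, and it uses no rank hypothesis at all.

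\emph{Step 2 (the dimension count).} At a corank-$r$ point the cokernel $\R^m/\range\Lambda$ has dimension $r$. By Proposition~\ref{prop:multirepair}, full row rank of $A$ is equivalent to the classes $[b_i]$ spanning the cokernel, that is, to $\bar B:\R^p\to\R^m/\range\Lambda$ being surjective. Rank--nullity then gives $\dim\ker\bar B=p-r$, and Step 1 identifies this kernel with $\pi(\ker A)$.

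For the interpretive phrase ``each unit of corank pins one combination'', I would note that $\pi(\ker A)$ is cut out of $\R^p$ by the $r$ independent linear functionals $c\mapsto\langle\ell_j,\bar Bc\rangle$. Here $\ell_1,\dots,\ell_r$ is a basis of the dual of the cokernel, concretely the left-null vectors $w_j$ of $\Lambda$, giving the conditions $w_j^TBc=0$. Surjectivity of $\bar B$ is what makes these $r$ conditions independent.

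\emph{Step 3 (the special cases).} At a regular point $r=0$, the cokernel is trivial, so $\pi(\ker A)=\R^p$ and nothing is pinned. For $p=r=1$ the count gives $\dim\pi(\ker A)=0$, so every kernel vector has vanishing last component, i.e.\ $\rho=0$. This matches Lemma~\ref{lem:kernel}(b), because transversality $w^Tb\neq0$ is precisely surjectivity of $\bar B$ for $p=1$.

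The only real subtlety, and thus the main obstacle, is bookkeeping. First, the statement's dimension claim genuinely needs the full-row-rank hypothesis, since without it one only gets $\dim\pi(\ker A)=p-\operatorname{rank}\bar B\ge p-r$. Second, the right-hand side must be read in the quotient rather than in $\R^m$, or else expressed dually through the left-null vectors $w_j$. I would state the general inequality $\dim\pi(\ker A)\ge p-r$ as a remark to make this dependence on full row rank explicit.
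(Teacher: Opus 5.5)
Your proposal is correct and follows essentially the same route as the paper. The paper likewise reads off $c\in\pi(\ker A)\iff \sum_i c_i b_i\in\range\Lambda$ directly from $A(v,c)=\Lambda v+\sum_i c_i b_i$, then invokes Proposition~\ref{prop:multirepair} to get surjectivity of $\bar B$ under full row rank and concludes $\dim\ker\bar B=p-r$ by rank--nullity; your extra remarks (the dual description through the left-null vectors $w_j$ and the inequality $\dim\pi(\ker A)\ge p-r$ without the rank hypothesis) are correct refinements the paper does not spell out.
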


\begin{proof}
$(v,c) \in \ker A$ iff $\sum_i c_i b_i = -\Lambda v$, so $c \in \pi(\ker A)$ iff $\sum_i c_i b_i \in \range\Lambda$, i.e.\ iff $\bar B c = 0$. If $A$ has full row rank then $\bar B$ is onto (its range is the span of the cokernel classes, which is everything by Proposition~\ref{prop:multirepair}), and rank--nullity gives $\dim\ker\bar B = p - \dim(\R^m/\range\Lambda) = p - r$.
\end{proof}

There is a precise precedent in numerical algebraic geometry: a generic parameter homotopy avoids the discriminant (singular) locus with probability one, meeting it only at isolated points, the added parameters lifting the solution path off the singular set (the ``gamma trick''). The parameter $\mu$ and anchor $b$ play exactly this role here, and the multi-parameter augmentation \eqref{eq:multiaug} is the systematic version: $p$ parameters transversally repair singularities of corank up to $p$, strictly enlarging the admissible singular set beyond the corank-one case treated above, while the same $p$-dimensional null space supplies the freedom for the auxiliary control of Section~\ref{sec:aux}.

\paragraph{Reading for adaptive control.} Let the plant depend on unknown parameters $\theta \in \R^q$ with online estimate $\hat\theta$, and run the continuation in the \emph{joint} variable $(x, \mu, \hat\theta)$: the homotopy map acquires the sensitivity columns $s_j = \partial H/\partial\hat\theta_j$, and the augmented Jacobian is exactly \eqref{eq:multiaug} with the columns split as $[\Lambda \mid b \mid s_1 \cdots s_q]$ --- homotopy direction and sensitivities entering on the same footing, as repair directions. Proposition~\ref{prop:multirepair} gives the exact condition for the joint tangent to stay well defined: the classes of $b, s_1, \dots, s_q$ must span the cokernel, so parameter mismatch that is visible in the output can substitute for continuation authority. The kernel of \eqref{eq:multiaug} consists precisely of the joint motions that leave the output untouched, so estimator updates projected onto $\ker A$ are \emph{exactly} output-invariant, where classical adaptive designs obtain this only asymptotically through swapping arguments. For identification with nonconvex parameterization (see the survey \cite{Borisevich2013}) the reading is sharpest: the spurious equilibria that obstruct gradient identification lie on the singular manifolds of the regressor, and these are transversality failures in the parameter block of \eqref{eq:multiaug}. Instead of \emph{avoiding} the regressor's singular manifolds --- the usual approach --- the continuation \emph{crosses} them transversally with the kernel orientation policy and bounded update rates, exactly as the controller crosses decoupling singularities. Persistent excitation becomes the requirement that the repair criterion hold uniformly along the trajectory, and Theorem~\ref{thm:residual} quantifies how much adaptation authority survives a crossing: a corank-$r$ singularity pins $r$ combinations of the joint rates and leaves $p - r$ free. A full adaptive design --- convergence under persistent excitation, mismatch robustness of the tracking bounds --- is beyond the present scope, but the frozen-parameter estimates of Theorem~\ref{thm:crossing} transfer verbatim to the parameter block, and Section~\ref{sec:adaptivesim} demonstrates the crossing on the smallest nonconvex identification problem.

\section{Machine-checked formalization}\label{sec:lean}

Before turning to the numerical studies we report a complementary form of validation: the core results of Sections~\ref{sec:augmented}--\ref{sec:reflection} and~\ref{sec:adaptive} have been formalized and verified in the Lean~4 proof assistant over the Mathlib mathematical library. Every formalized statement is proved completely, with no admitted goals, and each theorem was audited to depend only on the three standard axioms of Lean's classical foundation. This section records what was formalized and where the boundaries of the verification lie; the development is available from the author.

\paragraph{Scope.} The formalized material comprises: the kernel structure lemma (Lemma~\ref{lem:kernel}), both in coordinate-free form and in the matrix form of Definition~\ref{def:transversal}, including the equivalence of $b \notin \range\Lambda$ with $w^Tb \ne 0$ at a corank-one point; the failure statement (Proposition~\ref{prop:failure}); the authority--boundedness tradeoff (Remark~\ref{rem:tradeoff}), promoted to a theorem; the no-blow-up statement of Theorem~\ref{thm:crossing}(1), in the form: at a transversal singularity the Gram matrix $AA^T$ of the augmented matrix is positive definite, $A^\dagger = A^T(AA^T)^{-1}$ is a right inverse, and $A^\dagger y$ is norm-minimal among all solutions of $Az = y$ --- the least-norm property that names the inverse, here established by proof, which bounds the commanded velocity by every feasible one; the contraction and bounded-excursion estimates of Theorem~\ref{thm:crossing}(2)--(4), obtained from a one-sided Gr\"onwall comparison on $|e|$ valid for either sign of the rate; the sliding-sign computation (Proposition~\ref{prop:sign}); the consistency computation (Proposition~\ref{prop:consistency}); the exact-locking statements of Remark~\ref{rem:exactlocking} (exponential convergence of the locking error to zero under drift feedforward, and under PI locking via a strict quadratic Lyapunov function for the two-dimensional locking loop); and the sliding-mode content of Section~\ref{sec:reflection} in the scalar fold model --- a minimal constructive theory of Filippov solutions (defined in integral-selection form), the two-field convexification, finite-time attraction, confinement, the uniqueness of the sliding motion (Remark~\ref{rem:leanstrength}), release, the multi-parameter repair and residual-authority statements of Section~\ref{sec:adaptive} (Proposition~\ref{prop:multirepair} and Theorem~\ref{thm:residual}, including the exact computation of the kernel's parameter-block projection), and both the reflection and crossing lifts together with their deck relation. The kernel localization of Proposition~\ref{prop:reduction} --- at the Whitney fold point the augmented kernel is spanned by $(e_1,0)$ and transversality reduces to $b_1\ne0$ --- is likewise verified, subject to the caveat of Remark~\ref{rem:nonequivariance}.

\paragraph{Boundaries.} The uniqueness statement through the release instant is verified in full: every Filippov solution of the fold model through the singular point departs the singular set at the release instant and coincides thereafter with the unique classical release arc (by a Gr\"onwall argument on the integral form), so the reflection is not merely \emph{a} behaviour of the convexified closed loop but \emph{the} behaviour. One ingredient remains hypothesized rather than proved: the Whitney normal form (Proposition~\ref{prop:foldnf}) enters as the choice of coordinates, exactly as in the proofs above; it is checkable instance by instance, and every example of Sections~\ref{sec:sims}--\ref{sec:converter} verifies it explicitly. None of the numerical claims of the following sections depends on it.

\section{Numerical simulations}\label{sec:sims}

We validate the theory on a progression of systems: the scalar cubic of Section~\ref{sec:motivating}, a 2-DOF kinematic arm (which reflects at maximum reach and, under the opposite orientation, crosses the elbow branch), and dynamic plants of relative degree one and two. Unless stated otherwise each run uses the $\rho\ge0$ orientation and gain $k=100$, and each measured quantity is matched to the corresponding statement of Theorem~\ref{thm:crossing}. Throughout, the continuation control stays bounded where feedback linearization diverges by many orders of magnitude.

\paragraph{Constraint stabilization.} All runs integrate the continuation flow \eqref{flow} in real time with a fixed-step RK4 scheme, with one numerical safeguard worth stating. The flow only \emph{conserves} the homotopy constraint: it enforces $\dot H = 0$, so that $H\equiv 0$ holds identically along the exact trajectory once $H(x_0,\mu(0),0)=0$. But $\{H=0\}$ is a marginally stable invariant with no restoring action, so the local error the integrator commits at the stiff singular crossings accumulates as a slow drift off the manifold, corrupting the output identity $f(x)-y^* = e\,b$. We suppress this purely numerical artifact by a Baumgarte stabilization: in the particular solution the drift $B$ is replaced by $B + c_H\,H$, which turns the constraint dynamics into $\dot H = -c_H\,H$ and makes $\{H=0\}$ attracting. Along the exact solution $H\equiv 0$, so the added term vanishes identically there and the continuous-time guarantees of Theorem~\ref{thm:crossing} are untouched; it acts solely as an invariant-manifold stabilizer, as is standard for index-one differential-algebraic systems. The gain $c_H$ is of order $k$, and its magnitude matters only in proportion to the number and violence of the crossings a trajectory makes: the toy example below, whose reference sweeps far past both folds and changes branch repeatedly, is the extreme case that genuinely requires it, whereas a single gentle reflection injects almost no drift and needs only a small $c_H$.

\paragraph{Kernel computation and orientation (the behavioural choice).} The unit kernel $\hat n$ of the augmented matrix $A$ is computed directly: for a two-output map ($A\in\R^{2\times3}$) as the normalized cross product of the two rows of $A$, and in general (e.g.\ the $3\times4$ converter map) as the last right-singular vector of $A$ from an SVD. Since $\ker A$ is one-dimensional only the \emph{sign} of $\hat n$ is free, and by Definition~\ref{def:controllaw} that sign is the design choice that selects reflection versus crossing. Writing $\hat n_k$ for the unit kernel evaluated at integration step $k$, the two orientations are enforced as per-step sign selections:
\begin{itemize}
\item[\textbf{(R)}] $\rho\ge0$ (reflecting): take $\hat n_k$ with $(\hat n_k)_{n+1}\ge0$ (flip the computed kernel when its parameter component is negative). Used for the arm at maximum reach, the dynamic relative-degree one and two plants, and the converter.
\item[\textbf{(C)}] continuity (crossing): take $\hat n_k$ with $\hat n_k^{T}\hat n_{k-1}\ge0$, i.e.\ carry the tangent over from the previous step. This is the discrete realization of the continuous arc-length orientation \eqref{conditions} of Definition~\ref{def:controllaw}: as $\Delta t\to0$ the selection $\hat n_k^{T}\hat n_{k-1}\ge0$ reproduces the continuous unit tangent field. Used for the toy example and the elbow-flip run of Section~\ref{sec:arm}.
\end{itemize}
This sign selection is the entire mechanism behind the reflection/crossing dichotomy: the same flow \eqref{flow}, integrated with rule (R) or (C) and everything else (reference, gain, plant) held fixed, produces the qualitatively different outcomes reported below. Accordingly we state the orientation used for each experiment.

\subsection{Toy example}

We track the reference $y^*(t) = -6\cos t$ for the scalar system \eqref{toy_example}, $h(x) = x(x^2-1)$, at gain $k=100$. The amplitude far exceeds the fold values $h(\pm 1/\sqrt3) = \mp 2/(3\sqrt3) \approx \mp 0.385$, so tracking forces the state to sweep the full range $x\in[-2,2]$ and to cross \emph{both} decoupling singularities $x = \pm 1/\sqrt3$ twice per period. For $|y^*| > 0.385$ the demanded output is realized only on the outer branches of the cubic, not on the branch the state currently occupies, so tracking \emph{requires} passing through each fold; we therefore use the continuity (arc-length) orientation of $\hat n$. This is the crossing case of the dichotomy: unlike the arm at maximum reach or the converter at resonance --- where nothing exists beyond the fold, so the $\rho\ge0$ orientation correctly reflects and rides the boundary --- here the levels the reference demands (up to $|y^*|=6$) genuinely live on the far sheet, and reflection would merely saturate the output at the fold value $\pm0.385$ and fail to track. Crossing is thus not optional but forced (the reflection-versus-crossing dichotomy of Theorem~\ref{thm:reflection}).

The continuation negotiates both folds with bounded input. The commanded control $|u| = |\dot x|$ peaks at $\approx 33$, whereas the continuous-Newton law $\dot x = \dot y^*/Dh$ diverges as $1/|Dh|$ and demands $\sim 4\times 10^{4}$ at the crossings (here $\min_t |Dh| \approx 1.5\times 10^{-4}$). Away from the folds the locking error stays at $|e| \approx 1/k$ and $x$ follows the reference; at each fold the exact inverse is genuinely discontinuous---the state must traverse the non-invertible interval while $y^*$ passes through the fold value---so $y$ shows a brief transient and $\mu$ lags $t$ by a bounded amount (peak $|e|\approx 0.3$) before re-locking to $\approx 10^{-2}$. Under the Baumgarte stabilization the constraint identity is held to $\max_t|H|\approx 4\times 10^{-11}$ through the repeated crossings.

\begin{center}
\ifpdf %if using PDFTeX in PDF mode
  % this figure is generated by sim_toy.py
  \includegraphics[width=1\textwidth]{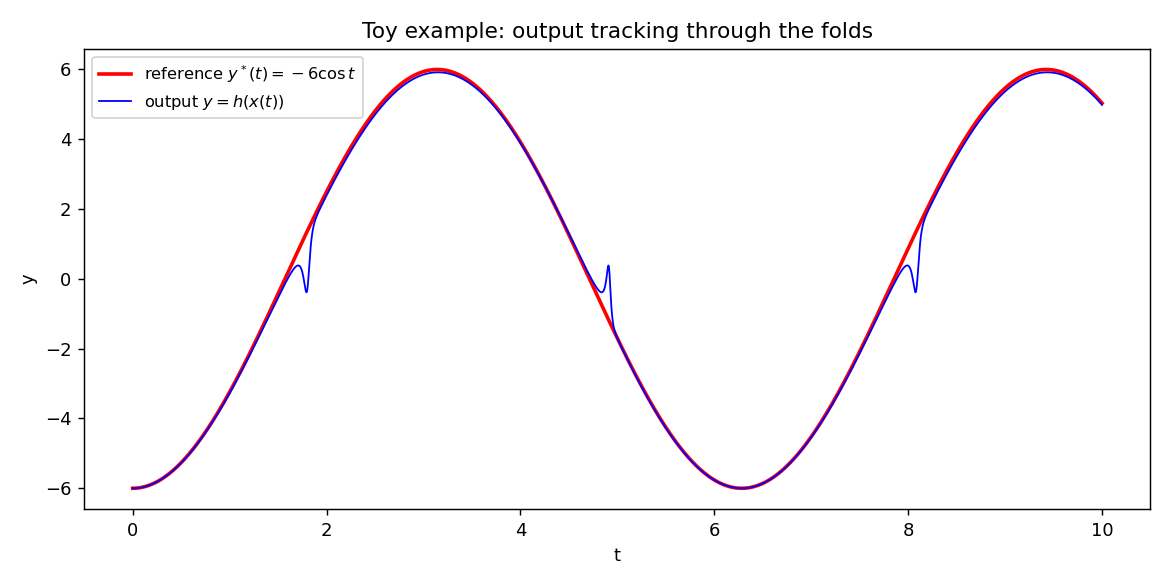}
\fi

Figure 1. Output $y = h(x(t))$ (blue) tracking the reference $y^*(t) = -6\cos t$ (red); the brief departures coincide with the fold crossings, where the exact inverse is discontinuous.
\end{center}

\begin{center}
\ifpdf %if using PDFTeX in PDF mode
  % this figure is generated by sim_toy.py
  \includegraphics[width=1\textwidth]{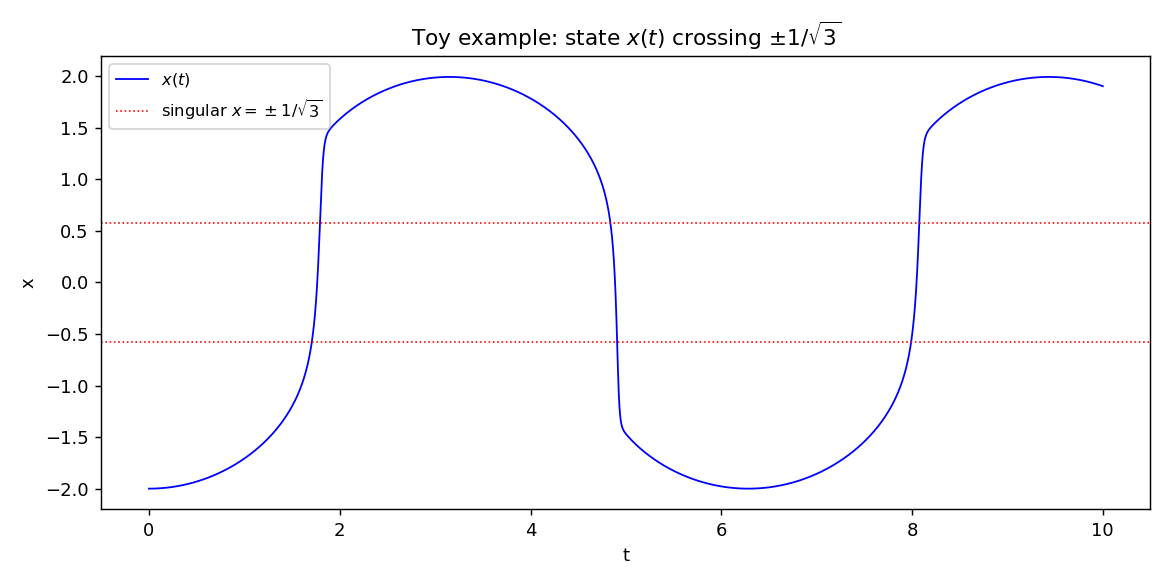}
\fi

Figure 2. State $x(t)$ sweeping $[-2,2]$ and crossing both decoupling singularities $x = \pm 1/\sqrt3$ (dotted), where $Dh = 3x^2 - 1$ vanishes.
\end{center}

\begin{center}
\ifpdf %if using PDFTeX in PDF mode
  % this figure is generated by sim_toy.py
  \includegraphics[width=1\textwidth]{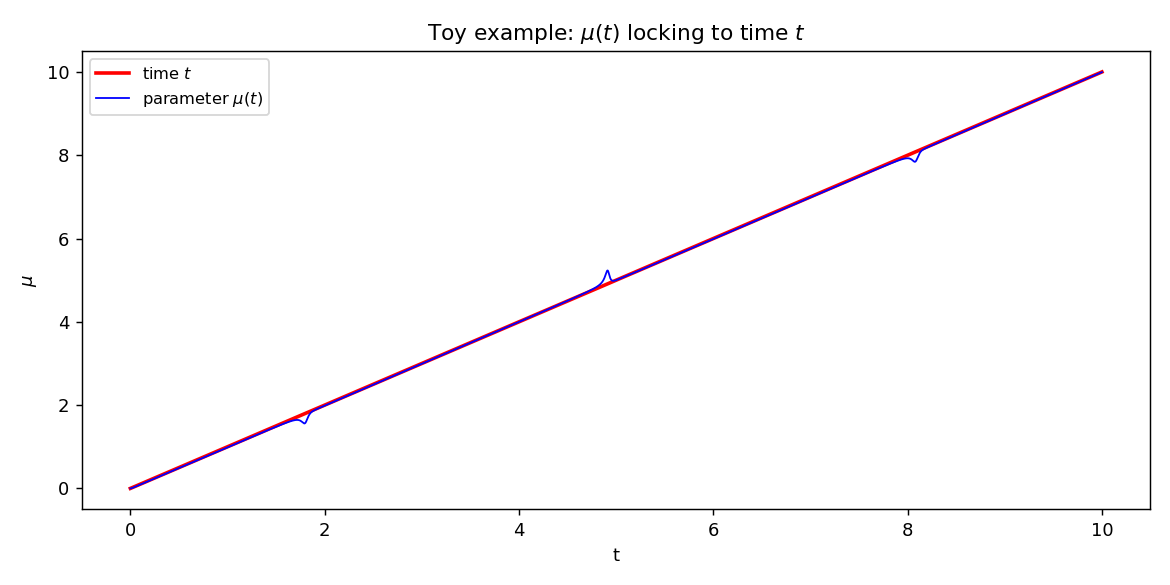}
\fi

Figure 3. Parameter $\mu(t)$ (blue) versus time $t$ (red): $\mu$ locks to $t$ away from the folds and lags briefly at each crossing before re-locking.
\end{center}

\subsection{Removing the $O(1/k)$ bias: locking compensators}\label{sec:lockingsims}

We compare the three locking laws of Remark~\ref{rem:exactlocking} --- the architecture of the full loop, with both compensators, is shown in Figure~\ref{fig:lockingarch} --- on the same cubic plant and reference, at the deliberately moderate gain $k = 20$ so that the proportional band $1/k = 5\times 10^{-2}$ is clearly visible, with all else identical (continuity orientation, both folds crossed twice). The PI gains are $k_p = k$, $k_i = k^2/4$ --- critical damping of the second-order locking loop, whose damping ratio at frozen authority is $\zeta = k_p\sqrt{\rho}/(2\sqrt{k_i})$ --- and the feedforward saturation is $M = 25$, gated off for $|\rho| < 0.2$. The PI output is clamped to $|\gamma| \le 25$ with conditional-integration anti-windup: the unclamped output $k_p e + k_i q$ is computed first, the integrator advances only if that output is not saturated in the direction the error pushes, and the clamped value is applied. This handles the fold crossings automatically, with no reference to the controller-internal authority $\rho$: through a transit the locking error grows (the anti-dissipative window of Proposition~\ref{prop:excursion}(b), during which $\rho < 0$ momentarily reverses the loop gain of the integral path), the output saturates at once, and integration freezes until the crossing completes --- without this clamping the integrator winds at each transit and the (then underdamped) loop rings visibly after every crossing. Figure~4 reports the outcome. In the regular region the proportional law settles at the predicted band (median $|e| = 4.6\times 10^{-2} \approx 1/k$); drift feedforward is exact there (median $|e| = 2.3\times 10^{-7}$, the integrator floor), and PI removes the bias without any model of the drift (median $|e| = 7.7\times 10^{-3}$, a factor $6$ below the proportional band, the residual reflecting the time-varying drift that the integrator tracks with finite bandwidth --- larger $k_i$ shrinks it at the cost of damping). All three laws pass the four fold crossings with bounded control; the feedforward variant pays for its regular-region exactness with the largest transients at the gate handoff ($\max|u| = 45$ against $14$ for P and $17$ for PI), so clamped PI is the practical default when the drift model is uncertain --- consistent with industrial practice.

\begin{figure}[htbp]
\centering
% Architecture of the continuation controller with the locking compensator
% (only the pictures; wrapped in a figure environment in the main file).
% Requires: tikz with libraries positioning, arrows.meta (main preamble).
% ---------- main loop ----------
\begin{tikzpicture}[font=\footnotesize, >=Latex,
  block/.style={draw,rounded corners=1pt,align=center,inner sep=3pt,minimum height=7mm},
  sum/.style={draw,circle,inner sep=0.5pt,minimum size=3.6mm},
  node distance=7mm and 8mm]
  % ---- main row ----
  \node[block] (asm) {homotopy data\\[-1pt]
    $b=f(x_0)-y^*(t)$\\[-1pt] $B=a(x)-b-\dot y^*(1-e)$};
  \node[block, right=of asm] (dec) {augmented decomposition\\[-1pt]
    $A=[\,\Lambda(x)\mid b\,]$,\quad $-A^\dagger B$,\quad $\hat n$ (orientation)\\[-1pt]
    $d=(-A^\dagger B)_{m+1}$,\quad $\rho=\hat n_{m+1}$};
  \node[sum, right=of dec] (sm) {$+$};
  \node[block, right=11mm of sm] (plant) {plant\\[-1pt] $\dot x=f(x)+g(x)u$};
  % branch point on the output wire (the flow splits into u and mu-dot)
  \coordinate (br) at ($(sm.east)!0.55!(plant.west)$);
  \fill (br) circle (1pt);
  % ---- locking row (right to left: integrator, error, compensator) ----
  \node[block] (integ) at ($(br)+(0,-20mm)$) {$\int$};
  \node[sum, left=1.06cm of integ] (esum) {$+$};
  \node[block, left=0.68cm of esum] (gam) {locking compensator\\[-1pt] $\gamma(e;d,\rho)$};
  % ---- main path ----
  \draw[->] ($(asm.west)+(-7mm,0)$) -- (asm.west) node[pos=0,anchor=east] {$y^*,\dot y^*$};
  \draw[->] (asm) -- (dec) node[midway,above] {$b,B$};
  \draw[->] (dec) -- (sm) node[midway,above] {$-A^{\dagger}B$};
  \draw[->] (sm) -- (plant) node[pos=0.8,above] {$u$};
  % state feedback over the top
  \draw[->] (plant.east) -- ++(5mm,0) -- ++(0,13mm) -| (dec.north)
    node[pos=0.25,above] {$x$};
  \draw[->] ($(plant.east)+(5mm,13mm)$) -| (asm.north);
  % ---- locking loop ----
  \draw[->] (br) -- (integ.north) node[midway,right] {$\dot\mu$};
  \draw[->] (integ) -- (esum) node[midway,above] {$\mu$} node[pos=0.85,below] {$-$};
  \draw[->] ($(esum.north)+(0,5mm)$) -- (esum.north)
    node[anchor=south, at start] {clock $t$} node[pos=0.8,right] {$+$};
  \draw[->] (esum) -- (gam) node[midway,above] {$e$};
  \draw[->] ($(gam.north)+(6mm,0)$) -- ++(0,0.78) -| (sm.south) node[above, pos=0.39] {$\gamma\,\hat n$};
  \draw[->] (dec.south)  -| ($(gam.north)+(-6mm,0)$)
    node[below right, pos=0.67] {\scriptsize $d,\rho$};
\end{tikzpicture}

\medskip

% ---------- compensator variants ----------
\begin{tikzpicture}[font=\footnotesize, >=Latex,
  block/.style={draw,rounded corners=1pt,align=center,inner sep=3pt,minimum height=6mm},
  sum/.style={draw,circle,inner sep=0.5pt,minimum size=3.6mm},
  node distance=5mm and 7mm]
  \node[block] (kp) {$k$};
  \node[block] (ff) at (-0.4,1.2) {$\mathrm{sat}_M\!\left(\dfrac{1-d}{\rho}\right)$\\[-1pt]
    \scriptsize gated: $|\rho|\ge\rho_{\min}$};
  \node[sum, right=8mm of kp] (s) {$+$};
  \draw[->] ($(kp.west)+(-6mm,0)$) -- (kp.west) node[pos=0,anchor=east] {$e$};
  \draw[->] ($(ff.west)+(-6mm,0)$) -- (ff.west) node[pos=0,anchor=east] {$d,\rho$};
  \draw[->] (kp) -- (s);
  \draw[->] (ff.east) -| (s.north);
  \draw[->] (s.east) -- ++(6mm,0) node[anchor=west] {$\gamma$};
  \node[anchor=north,align=center] at ($(kp.south)+(4mm,-3mm)$)
    {(a) drift feedforward: exact locking,\\ $\dot e=-k\rho e$ in the regular region};
\end{tikzpicture}\hspace{14mm}%
\begin{tikzpicture}[font=\footnotesize, >=Latex,
  block/.style={draw,rounded corners=1pt,align=center,inner sep=3pt,minimum height=6mm},
  sum/.style={draw,circle,inner sep=0.5pt,minimum size=3.6mm},
  node distance=5mm and 7mm]
  \node[block] (kp) {$k_p$};
  \node[block, above=of kp] (int) {$\int$\ \scriptsize conditional};
  \node[block, right=5mm of int] (ki) {$k_i$};
  \node[sum, right=16mm of kp] (s) {$+$};
  \node[block, right=0.58cm of s] (sat) {$\mathrm{sat}$};
  \coordinate (ein) at ($(kp.west)+(-6mm,0)$);
  \draw[->] (-1.4,0) -- (kp.west) node[anchor=east, at start] {$e$};
  \draw[->] ($(kp.west)+(-10mm,0)$) |- (int.west);
  \draw[->] (int) -- (ki);
  \draw[->] (kp) -- (s);
  \draw[->] (ki.east) -| (s.north);
  \draw[->] (s) -- (sat);
  \draw[->] (sat.east) -- ++(5mm,0) node[anchor=west] {$\gamma$};
  \draw[->,dashed] (sat.north) -- ++(0,0.19) -| ($(int.south)+(5mm,0)$)
    node[above, at start] {\scriptsize freeze if saturated};
  \node[anchor=north,align=center] at ($(kp.south)+(9mm,-3mm)$)
    {(b) clamped PI: exact for constant drift,\\ no model of $d$ required};
\end{tikzpicture}
\caption{Architecture of the continuation controller. Top: the main loop --- the homotopy data $(b, B)$ and the augmented decomposition ($-A^\dagger B$, unit kernel $\hat n$ with the orientation policy, drift $d$, authority $\rho$) produce the flow $\dot c = -A^\dagger B + \gamma\,\hat n$; the first $m$ components drive the plant, the last integrates to the internal parameter $\mu$, and the locking error $e = t - \mu$ closes the loop through the locking compensator $\gamma(e; d, \rho)$. Bottom: the two exact-locking variants of Remark~\ref{rem:exactlocking} --- (a) drift feedforward with saturation, gated off where the authority collapses; (b) clamped PI with conditional-integration anti-windup --- the integrator is frozen whenever the unclamped output is saturated in the direction the error pushes, which in particular freezes it automatically through the fold transits. The proportional law of the main text is the special case $\gamma = k e$.}
\label{fig:lockingarch}
\end{figure}

\begin{center}
\ifpdf
  % this figure is generated by sim_locking_comparison.py
  \includegraphics[width=1\textwidth]{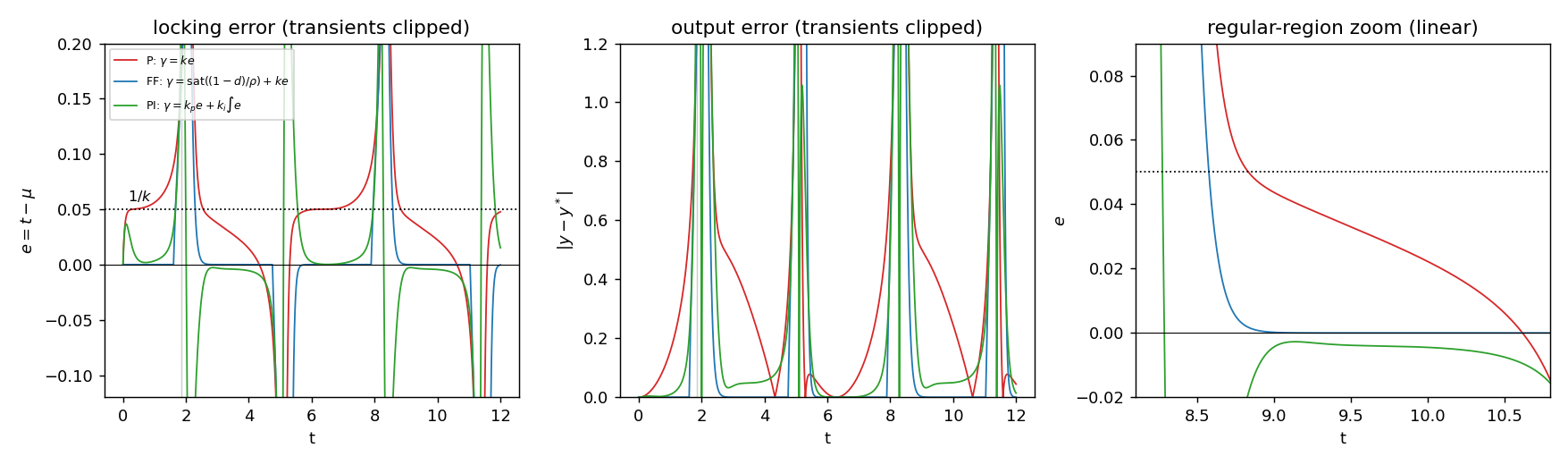}
\fi

Figure 4. Locking-law comparison on the toy example at $k=20$ (vertical grey lines mark the fold crossings; the crossing transients, up to $|e| = 1.3$ and $|y-y^*| = 3.6$, are clipped to keep the regular-region behaviour readable). Left: signed locking error $e = t-\mu$ --- the proportional law (red) hovers on the $1/k$ band (dotted), feedforward (blue) and PI (green) remove the bias; after each transit the PI trace settles through zero once as the integral state moves to its new drift-compensating value. Middle: output error $|y - y^*|$, inheriting the same ordering through the identity $y - y^* = e\,b$. Right: regular-region zoom --- the proportional offset against the zero-mean compensated laws.
\end{center}

\subsection{Adaptive identification through a singular regressor manifold}\label{sec:adaptivesim}

We illustrate the adaptive-control reading of Section~\ref{sec:adaptive} on the smallest nonconvex identification problem: model output $\hat y = g(\hat\theta) = \hat\theta^3 - 3\hat\theta$, measurement $y_m = g(\theta^*)$ with $\theta^* = 2.2$, initial guess $\hat\theta_0 = -2$. The regressor $g'(\hat\theta) = 3\hat\theta^2 - 3$ vanishes on the manifold $\hat\theta = \pm 1$, and every path from $\hat\theta_0$ to $\theta^*$ crosses both of its sheets. The gradient (MIT-rule) estimator $\dot{\hat\theta} = -\gamma\, g'(\hat\theta)\,(g(\hat\theta) - y_m)$ converges to $\hat\theta = -1$: a spurious equilibrium lying \emph{on} the singular manifold, where the residual is far from zero ($|g(\hat\theta) - y_m| = 2.05$) but the gradient direction has died --- the nonconvex-parameterization obstruction of \cite{Borisevich2013} in its purest form. The normalized-gradient (Newton) estimator's demand $|e/g'|$ exceeds $10^4$ along the path, so it can succeed only by avoiding the manifold. The continuation estimator instead runs the homotopy $H(\hat\theta, \mu) = g(\hat\theta) - (1-\mu)\,g(\hat\theta_0) - \mu\, y_m$ jointly in $(\hat\theta, \mu)$, with augmented map $A = [\,g'(\hat\theta) \mid b\,]$, $b = g(\hat\theta_0) - y_m \ne 0$: transversality holds identically on the manifold, and the kernel orientation (continuity) policy crosses both sheets --- $g'$ changes sign twice along the path, with $\min|g'| = 1.3\times10^{-4}$ --- converging to $\hat\theta = \theta^*$ with residual $\sim 10^{-12}$ while $\mu$ makes the characteristic bounded excursion and re-locks to its schedule. The residual-authority count of Theorem~\ref{thm:residual} is visible directly: wherever $|g'| < 10^{-2}$ the kernel's parameter component obeys $|\rho| \lesssim |g'|/|b|$, the parameter block pinned at the crossing ($p - r = 0$ here, with $p = r = 1$) and free elsewhere. Figure~5 reports the run.

\begin{center}
\ifpdf
  % this figure is generated by sim_adaptive_toy.py
  \includegraphics[width=1\textwidth]{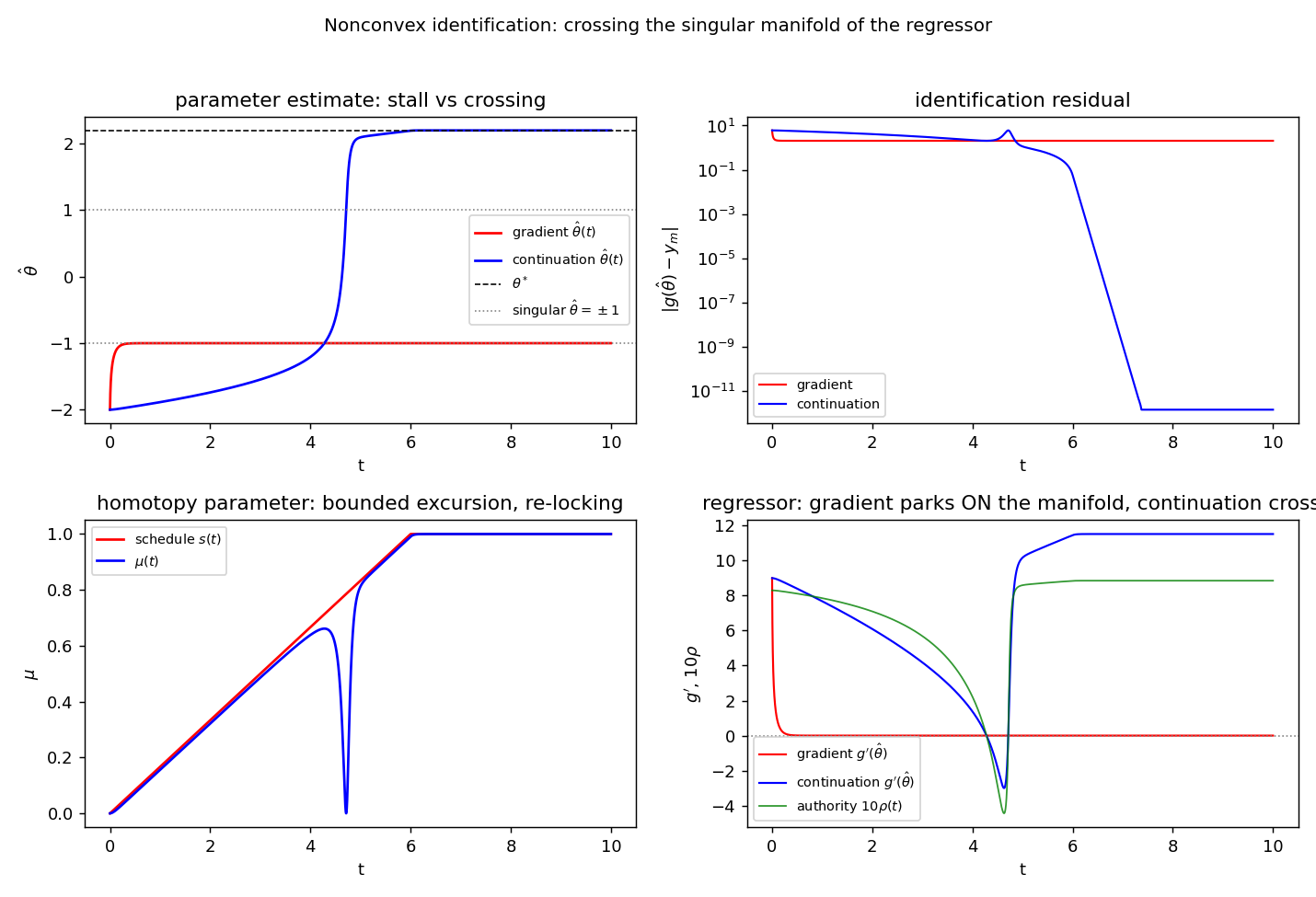}
\fi

Figure 5. Nonconvex identification $g(\hat\theta) = \hat\theta^3 - 3\hat\theta$, $\theta^* = 2.2$: gradient estimator (red) versus joint continuation (blue). Top left: $\hat\theta(t)$ --- the gradient stalls exactly on the singular manifold $\hat\theta = -1$ (dotted), the continuation crosses both sheets and converges to $\theta^*$ (dashed). Top right: identification residual. Bottom left: the homotopy parameter's bounded excursion and re-locking to the schedule. Bottom right: the regressor $g'(\hat\theta)$ along both paths --- the gradient's regressor dies on the manifold, the continuation's passes through zero twice; the authority $\rho$ (green) collapses exactly at the crossings, as Theorem~\ref{thm:residual} requires.
\end{center}

\subsection{2-DOF planar robotic arm}\label{sec:arm}

The singular positions of a 2-DOF planar robotic arm occur when the robot's Jacobian matrix becomes singular. This leads to a loss of controllability or a situation where certain end-effector motions are not possible.

The end-effector position of the robot is given by forward kinematics equations:
\[
x = L_1 \cos(q_1) + L_2 \cos(q_1 + q_2),
\]
\[
y = L_1 \sin(q_1) + L_2 \sin(q_1 + q_2),
\]
where:
- \( q_1, q_2 \): Joint angles.
- \( L_1, L_2 \): Link lengths.

The Jacobian matrix relates the joint velocities \( \dot{\mathbf{q}} = [\dot{q}_1, \dot{q}_2]^T \) to the end-effector velocities \( \dot{\mathbf{x}} = [\dot{x}, \dot{y}]^T \):
\[
\dot{\mathbf{x}} = \mathbf{J}(\mathbf{q}) \dot{\mathbf{q}},
\]

where the Jacobian matrix \( \mathbf{J}(\mathbf{q}) \) is:
\[
\mathbf{J}(\mathbf{q}) =
\begin{bmatrix}
-\left( L_1 \sin(q_1) + L_2 \sin(q_1 + q_2) \right) & -L_2 \sin(q_1 + q_2) \\
\left( L_1 \cos(q_1) + L_2 \cos(q_1 + q_2) \right) & L_2 \cos(q_1 + q_2)
\end{bmatrix}.
\]

Singularities occur when the determinant of the Jacobian matrix is zero:
\[
\det(\mathbf{J}(\mathbf{q})) = 0.
\]

The determinant becomes zero when:
\[
\sin(q_2) = 0 \quad \implies \quad q_2 = 0 \, \text{or} \, q_2 = \pi.
\]

The physical interpretation of the singularities consists of two cases: when \( q_2 = 0 \) the two links are fully extended in a straight line and the arm is at its maximum reach, or when \( q_2 = \pi \) the two links are folded back into a straight line and the arm is at its minimum reach, fully retracted.

Consider the reference circular trajectory centered at \( (x_c, y_c) \) with radius \( r \), assuming the target circle lies within the reachable workspace of the arm:

\[
  x_d(t) = x_c + r \cos(\omega t), \\
  y_d(t) = y_c + r \sin(\omega t),
\]

where \( \omega \) is the angular velocity of the circular motion and \( t \) is time.

\subsubsection{The continuation controller for the arm}

The arm is a velocity-controlled kinematic system: the joint rates $\dot q = (\dot q_1, \dot q_2)$ are the manipulated inputs and the end-effector position plays the role of the output map, $f(q) = (x,y)$, with $D f(q) = \mathbf{J}(q)$. Applying the construction of Section~\ref{sec:tracking}, we anchor the homotopy at the initial configuration $q_0$ (chosen on the reference, $f(q_0) = y^*(0)$), form the homotopy direction $b(t) = f(q_0) - y^*(t)$, and assemble the augmented Jacobian
\[
A(q,t) = \bigl[\, \mathbf{J}(q) \;\big|\; b(t) \,\bigr] \in \R^{2\times 3},
\qquad
B(q,\mu,t) = -\,b(t) - \dot y^*(t)\,(1-e), \quad e = t-\mu .
\]
The controller integrates, in real time with $\dot t = 1$, the flow \eqref{flow}
\[
\begin{pmatrix}\dot q \\ \dot\mu\end{pmatrix}
= -A^\dagger B + k\, e\, \hat n,
\qquad
A^\dagger = A^T (A A^T)^{-1},
\]
where $\hat n$ is the unit generator of $\ker A$ oriented so that its parameter component $\rho = \hat n_3 \ge 0$, and $\gamma = k\,e = k\,(t-\mu)$ is the parameter-locking feedback. The commanded joint rates are the first two components of the flow; $\mu$ is an internal state.

The behaviour is governed entirely by Lemma~\ref{lem:kernel}. Away from the singular set ($\sin q_2 \ne 0$) the Jacobian has rank $2$, the kernel has a nonzero parameter component $\rho\in(0,1]$, the feedback drives $e \to O(1/k)$, and the arm tracks the circle. As the arm approaches full extension $q_2 \to 0$, the Jacobian drops to rank~$1$; its left null direction is $w = (J_{21},\, -J_{11})$. Provided the \emph{transversality} condition of Definition~\ref{def:transversal} holds, $w^T b \ne 0$, the augmented matrix $A$ remains rank~$2$ and the flow stays well defined through the singularity, but the kernel rotates until $\rho \to 0$: the loss of authority over $\mu$ predicted by Lemma~\ref{lem:kernel}(b).

\subsubsection{Simulation results}

We take $L_1 = L_2 = 1$, so the reachable annulus has outer radius $2$. The reference circle is centered at $(x_c,y_c) = (1.2,\,0)$ with radius $r = 0.85$; its far point $(2.05,\,0)$ lies $0.05$ beyond maximum reach, so once per revolution the target is unreachable over a short arc and the arm is driven onto full extension $q_2 = 0$ --- a genuine rank-$1$ singularity. The trajectory is run for two revolutions ($\omega = 2\pi/10$, $T = 20$~s) so the singularity is met twice, at $t \approx 5$ and $t \approx 15$, with gain $k = 100$ and a fixed-step RK4 integrator ($\Delta t = 5\times 10^{-4}$). This exact reference and gain are reused in the next subsection under a different kernel orientation, so that the contrast reported there is attributable to the \emph{policy} alone and not to any change of trajectory. Table~\ref{tab:arm} collects the measured quantities and matches each to the corresponding statement of Theorem~\ref{thm:crossing}.

\begin{table}[h]
\centering
\begin{tabular}{l l l}
\hline
Quantity & Measured & Predicted / interpretation \\
\hline
$\min_t \sigma_{\min}(\mathbf{J})$ & $2.0\times 10^{-9}$ & genuine singularity is reached \\
$\min_t \rho$ (authority) & $1.2\times 10^{-9}$ & $\rho\to 0$ at singularity, Lemma~\ref{lem:kernel}(b) \\
$w^T b$ at singularity & $1.63$ & transversal, Def.~\ref{def:transversal} (not Prop.~\ref{prop:failure}) \\
$\max_t \|\dot q\|$ (this method) & $2.9$ & bounded velocity, Thm.~\ref{thm:crossing}(1) \\
$\max_t \|\mathbf{J}^{+}\dot y^*\|$ (naive) & $2.2\times 10^{7}$ & resolved-rate blow-up $\sim 1/\sigma_{\min}$ \\
$|e|$ in regular region & $7.5\times 10^{-3}$ & $\approx 1/k = 10^{-2}$, Thm.~\ref{thm:crossing}(2) \\
$|e|$ peak at crossing & $2.9\times 10^{-2}$ & bounded excursion, Thm.~\ref{thm:crossing}(3) \\
$|e|$ final (re-lock) & $9.8\times 10^{-3}$ & recovery, Thm.~\ref{thm:crossing}(4) \\
output error, regular / singular arc & $7.0\times 10^{-3}$ / $5.0\times 10^{-2}$ & singular value $\approx$ reach gap $0.05$ \\
\hline
\end{tabular}
\caption{2-DOF arm at the maximum-reach singularity, $\rho\ge0$ orientation, $r=0.85$, $k=100$.}
\label{tab:arm}
\end{table}

\begin{center}
\ifpdf
  % this figure is generated by sim_2dof.py
  \includegraphics[width=1\textwidth]{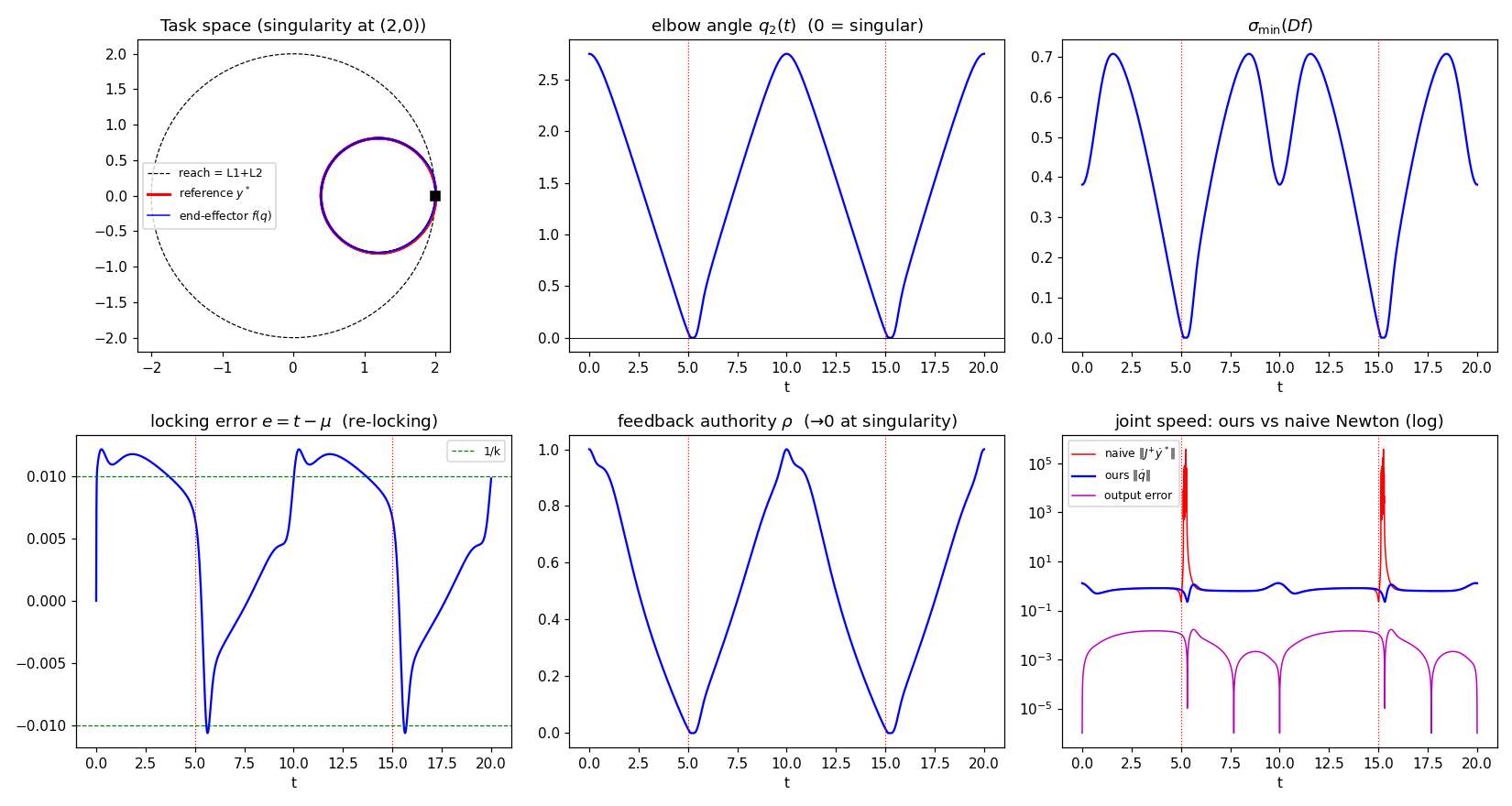}
\fi

Figure 6. Continuation tracking of the 2-DOF arm through the maximum-reach singularity (dotted red lines mark the singular instants $t\approx 5,15$). Top row: task-space path against the reference and the reach circle; elbow angle $q_2(t)$ touching $0$; smallest singular value $\sigma_{\min}(\mathbf{J})$ collapsing at the crossing. Bottom row: locking error $e = t-\mu$ inside the $\pm 1/k$ band with a bounded excursion at the crossing; feedback authority $\rho \to 0$ exactly at the singularity; joint speed of the continuation flow (bounded) against the naive resolved-rate demand $\|\mathbf{J}^{+}\dot y^*\|$ (log scale, spiking to $\sim 2\times 10^7$). This figure is the $\rho\ge0$ (reflecting) policy on the $r=0.85$ reference.
\end{center}

The results confirm the mechanism analyzed in Section~\ref{sec:crossing}. The augmentation converts a hard singularity of $\mathbf{J}$ into a regular point of $A$ precisely because the homotopy direction $b = f(q_0) - y^*$ retains a component along the uncontrollable mode $w$ ($w^T b = 1.63$). At the singular instant the parameter feedback goes limp ($\rho \to 0$), so $\mu$ cannot be actively corrected; but because the crossing is transversal, $\mu$ drifts by only about $3\times 10^{-2}$ and re-locks to time as soon as $\sigma_{\min}(\mathbf{J})$ recovers. During the unreachable arc the end effector rides the reach boundary with output error equal to the reach gap; by Remark~\ref{rem:leanstrength} this pinning is the \emph{unique} Filippov behaviour of the ideal closed loop while the one-sided fields straddle the singular surface --- forced dynamics, not an artifact of the initialization or of the integrator. Crucially, the joint velocities stay below $3$~rad/s throughout, whereas the naive resolved-rate law $\dot q = \mathbf{J}^{+}\dot y^*$ evaluated along the same path demands $\sim 2\times 10^{7}$~rad/s at the crossing --- seven orders of magnitude larger. Unlike damped least squares, this boundedness is obtained from the geometry of the augmented system rather than from an added damping term, and it costs no accuracy away from the singularity, where the tracking error sits at the $O(1/k)$ level.

We emphasize that this example exercises a \emph{boundary} (maximum-reach) singularity, where $q_2$ touches zero and returns on the same elbow branch; it validates parts (1)--(4) of Theorem~\ref{thm:crossing} but does not perform a full elbow reversal. Whether the flow can instead \emph{cross} the branch ($q_2$ passing from positive to negative) is examined next.

\subsubsection{Reflection versus crossing}

We now illustrate the dichotomy of Section~\ref{sec:reflection} empirically, using the \emph{same} reference circle and gain ($r=0.85$, far point $(2.05,0)$, $k=100$) and changing \emph{only} the orientation of $\hat n$, so the contrast is attributable to the policy alone. Since both elbow assemblies realize the same end-effector position, the reference does not fix the branch; the orientation does (Proposition~\ref{prop:reach}). Table~\ref{tab:flip} contrasts the two runs: the $\rho\ge0$ orientation reflects (stays $q_2\ge0$) with the additive excursion of Proposition~\ref{prop:excursion}(a) --- joint speed $2.9$~rad/s, $|e|$ re-locking to $9.8\times10^{-3}$; the continuity orientation crosses ($q_2:+\pi\to-\pi$) with the multiplicative excursion of Proposition~\ref{prop:excursion}(b) --- $\rho$ dips to $-0.51$, joint speed spikes to $\sim240$~rad/s, and locking is transiently lost before recovering. This is the discretionary-crossing case (both branches feasible), so which run is "correct" is set by a secondary criterion, not by tracking.

\begin{table}[h]
\centering
\begin{tabular}{l c c}
\hline
 & $\rho \ge 0$ (Thm.~\ref{thm:crossing}) & continuity (arc-length) \\
\hline
elbow behaviour & reflects (stays $q_2 \ge 0$) & \textbf{crosses} $+\pi \to -\pi$ \\
$\min \sigma_{\min}(\mathbf{J})$ & $2\times 10^{-9}$ & $8\times 10^{-5}$ \\
$\rho$ at singularity & $\to 0^{+}$ (stays $\ge 0$) & passes through $0$, dips to $-0.51$ \\
max $\|\dot q\|$ & $2.9$ & $240$ \\
$\mu$-locking & preserved, $|e|\!\to\!10^{-2}$ & lost then recovered \\
\hline
\end{tabular}
\caption{Reflection versus crossing at the maximum-reach singularity, same reference and gain $k=100$, differing only in the kernel orientation.}
\label{tab:flip}
\end{table}

\begin{center}
\ifpdf
  % this figure is generated by sim_2dof_flip.py
  \includegraphics[width=1\textwidth]{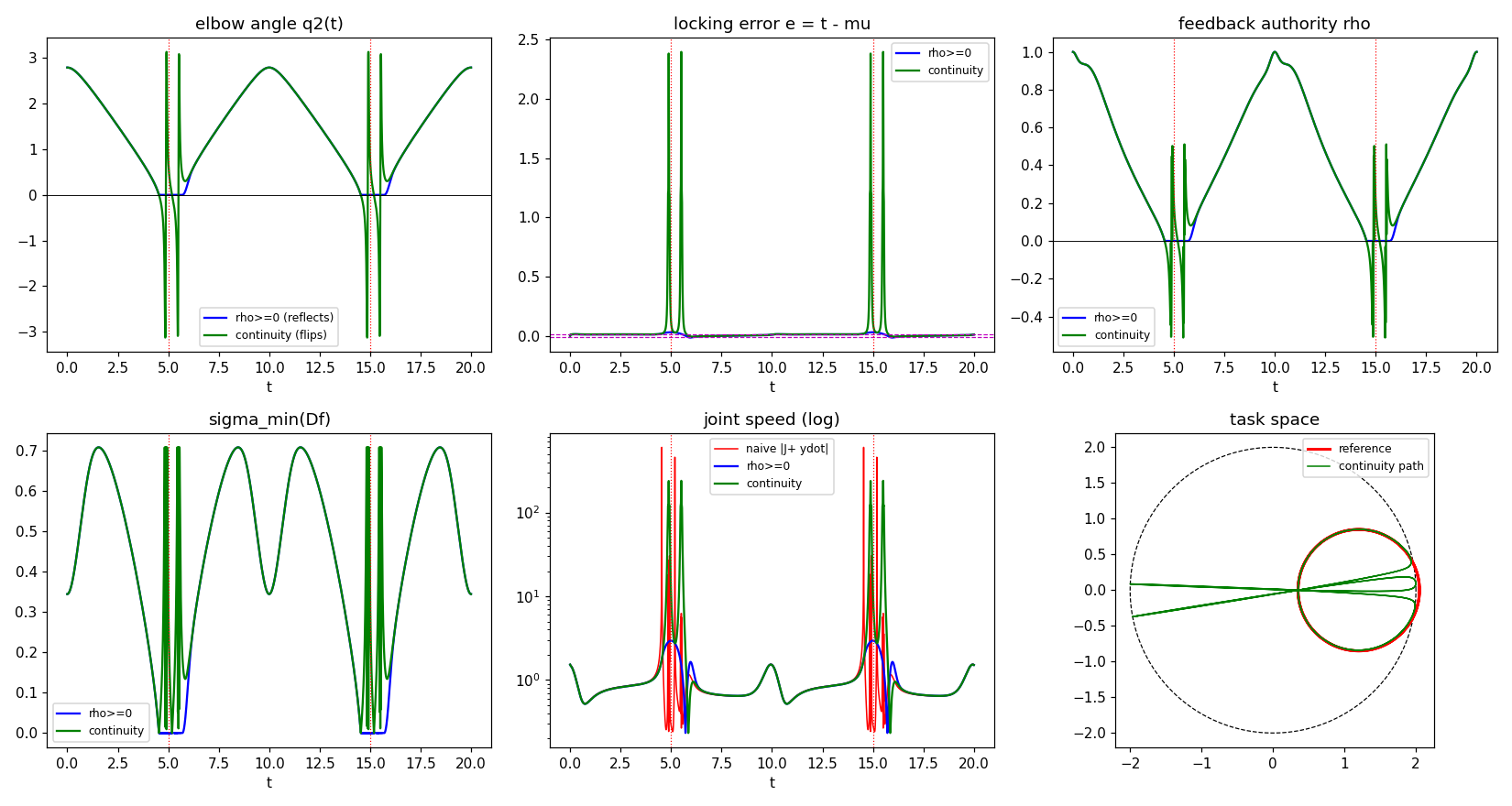}
\fi

Figure 7. Reflection ($\rho\ge0$, blue) versus branch crossing (continuity, green) at the maximum-reach singularity ($t\approx 5,15$). The reflecting orientation keeps $q_2 \ge 0$ and $\rho \ge 0$ with bounded joint speed; the continuity orientation flips $q_2$ across zero but drives $\rho$ negative and produces a large velocity transient.
\end{center}

The run confirms Proposition~\ref{prop:excursion} directly: the crossing transient is largest exactly where the transversality margin is smallest ($|w^Tb|$ collapses toward $0.02$ near the flip), i.e.\ where $\delta_c$ fails to shrink and the multiplicative factor $e^{2k\bar\rho\delta_c}$ grows --- the finite-$k$ shadow of the divergence in Proposition~\ref{prop:reach}(i). By the guidance of Section~\ref{sec:orientation} this is a discretionary crossing (both branches feasible); the practical detector (growth of $|e|$) is what a supervisor would monitor to decide when a crossing is instead forced.

\subsection{Dynamic relative-degree-one example}

We finally test the dynamic generalization of Section~\ref{sec:dynamic} on a genuinely dynamic plant, comparing the continuation controller against exact feedback linearization at a control (decoupling) singularity. Consider the relative-degree-one system with $n = 2$, $m = 1$,
\begin{equation}\label{dyn_example}
\dot x_1 = -x_1 + x_2 + u, \qquad \dot x_2 = -x_2 + \tfrac12 x_1, \qquad y = h(x) = x_1^3 - x_1 .
\end{equation}
Here $D h = (3x_1^2 - 1,\,0)$, $g = (1,0)^T$, so the decoupling matrix is the scalar $\Lambda(x) = 3x_1^2 - 1$ and the output drift is $a(x) = (3x_1^2-1)(-x_1+x_2)$. The control singularity $\Lambda = 0$ occurs at $x_1 = \pm 1/\sqrt3 \approx \pm 0.577$, where feedback linearization $u = (v - a)/\Lambda$ diverges. The internal state $x_2$ obeys $\dot x_2 = -x_2 + \tfrac12 x_1$, which is input-to-state stable with respect to $x_1$, so the system is minimum phase and Assumption~\ref{as:minphase} holds.

The reference $y^*(t) = 0.45\sin t$ pokes just past the fold value $0.385$ attained at the singularity, so the output is driven onto $\Lambda = 0$ once per half-cycle. We run the continuation controller \eqref{flow_dynamic} with $\rho \ge 0$ orientation, $k = 100$, against exact feedback linearization with a stabilizing outer loop $v = \dot y^* - \lambda(y - y^*)$, $\lambda = 5$. Table~\ref{tab:dyn} reports the outcome.

\begin{table}[h]
\centering
\begin{tabular}{l c c}
\hline
Quantity & Continuation & Feedback linearization \\
\hline
$\min_t |\Lambda|$ reached & \multicolumn{2}{c}{$1.0\times 10^{-8}$ (singularity is met)} \\
control effort $\max_t |u|$ & $15$ (bounded) & $1.3\times 10^{7}$ demand (diverges) \\
output tracking $|y-y^*|$, regular & $9.8\times 10^{-3}\;(\approx 1/k)$ & fails ($\gg 1$) \\
output tracking, at fold & $6.5\times 10^{-2}$ (fold gap) & --- \\
locking error $|e|$, regular / final & $9.9\times 10^{-3}$ / $8.4\times 10^{-3}$ & --- \\
internal state $\max_t |x_2|$ & $0.22$ (bounded) & --- \\
\hline
\end{tabular}
\caption{Dynamic relative-degree-one system \eqref{dyn_example} at the decoupling singularity $\Lambda = 0$, $k = 100$.}
\label{tab:dyn}
\end{table}

\begin{center}
\ifpdf
  % this figure is generated by sim_dynamic_rd1.py
  \includegraphics[width=1\textwidth]{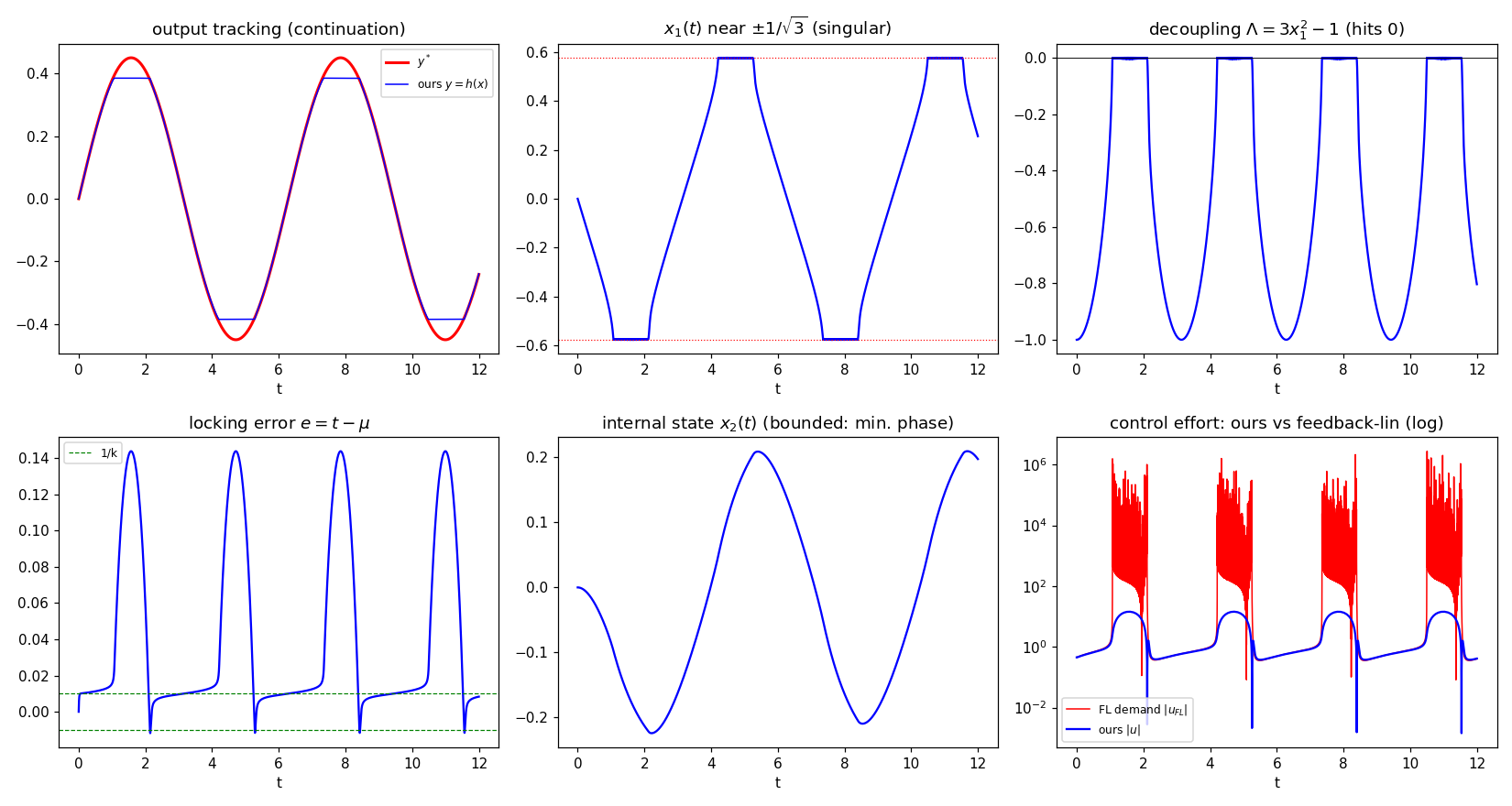}
\fi

Figure 8. Dynamic relative-degree-one tracking through the control singularity $\Lambda = 0$. Top row: output $y = h(x)$ against the reference; the state $x_1(t)$ grazing $\pm 1/\sqrt3$; the decoupling term $\Lambda = 3x_1^2 - 1$ collapsing to zero. Bottom row: locking error $e = t - \mu$ within the $\pm 1/k$ band; the internal state $x_2(t)$ bounded (minimum phase); control effort of the continuation flow (bounded, $\le 15$) against the feedback-linearization demand $|u_{FL}| = |(v-a)/\Lambda|$ (log scale, diverging to $\sim 10^7$).
\end{center}

The experiment realizes Proposition~\ref{prop:dyntracking}. At the decoupling singularity feedback linearization is undefined --- its control demand reaches $1.3\times 10^7$ --- whereas the continuation controller passes through with bounded effort $|u| \le 15$, tracks the output to $O(1/k)$ away from the singularity, and keeps the internal state bounded, confirming that minimum-phaseness carries the boundedness of $u$ over to boundedness of the full state. The residual output error at the fold, $6.5\times 10^{-2}$, is precisely the amount by which the reference exceeds the value reachable on the incoming branch --- the scalar analogue of the reach gap of the arm --- and $\mu$ re-locks to time once $\Lambda$ recovers.

Two caveats mirror the kinematic study. First, this is again the reflecting regime: the state remains on the incoming branch $|x_1| \le 1/\sqrt3$, which is exactly the behaviour guaranteed under the $\rho \ge 0$ orientation; tracking a reference that demands the far branch would require the crossing orientation and the supervisory logic of Section~\ref{sec:orientation}. Second, the output identity $h(x) - y^* = e\,b$ was observed to hold up to a residual $\max_t|H| \approx 1.5\times 10^{-2}$ that halves as the integration step is halved: it is a numerical artifact of the orientation kink at $\rho = 0$ (where $u$ is discontinuous) rather than a property of the controller, and it motivates a smoothed orientation near the singularity as a practical refinement.

\subsection{Relative-degree-two example}\label{sec:rd2}

To test the reduction of Section~\ref{sec:reldeg} we take a genuinely second-order plant ($n=3$, $m=1$, relative degree two, one internal state),
\begin{equation}\label{rd2_example}
\dot x_1 = x_2, \quad \dot x_2 = \alpha(x) + \beta(x)\,u, \quad \dot z = -z + \tfrac12 x_1, \quad y = x_1,
\end{equation}
with $\alpha(x) = -x_1 - x_2 + 0.3\,z$ and $\beta(x) = 3x_1^2 - 1$. The decoupling term $\beta$ vanishes at $x_1 = \pm 1/\sqrt3$, where the relative degree is lost and feedback linearization $u = (v-\alpha)/\beta$ diverges. The internal state $z$ satisfies $\dot z = -z + \tfrac12 x_1$, ISS in $x_1$, so the system is minimum phase. We form the filtered error $\sigma = (\dot y - \dot y^*) + \lambda(y - y^*)$ with $\lambda = 5$ (relative degree one), and regulate it with the continuation controller ($k=100$), against exact feedback linearization with Hurwitz error dynamics ($c_0 = 25$, $c_1 = 10$). The reference $y^*(t) = 0.7\sin t$ drives $y = x_1$ back and forth across $\pm 1/\sqrt3$.

\begin{table}[h]
\centering
\begin{tabular}{l c c}
\hline
Quantity & Continuation & Feedback linearization \\
\hline
$\min_t |\beta|$ reached & \multicolumn{2}{c}{$2.6\times 10^{-6}$ (singularity crossed)} \\
control effort $\max_t |u|$ & $59$ (bounded) & $2.7\times 10^{5}$ demand (diverges) \\
output tracking $|y-y^*|$, regular / max & $1.8\times 10^{-3}$ / $7.4\times 10^{-3}$ & fails \\
filtered error $|\sigma|$, regular & $6.1\times 10^{-3}$ ($\to 0$) & --- \\
locking error $|e|$, regular / final & $8.8\times 10^{-3}$ / $1.5\times 10^{-2}$ & --- \\
internal state $\max_t |z|$ & $0.26$ (bounded) & --- \\
identity residual $\max_t |H|$ & $2.5\times 10^{-11}$ & --- \\
\hline
\end{tabular}
\caption{Relative-degree-two system \eqref{rd2_example} crossing the decoupling singularity $\beta = 0$ via the filtered-error reduction, $k = 100$, $\lambda = 5$.}
\label{tab:rd2}
\end{table}

\begin{center}
\ifpdf
  % this figure is generated by sim_dynamic_rd2.py
  \includegraphics[width=1\textwidth]{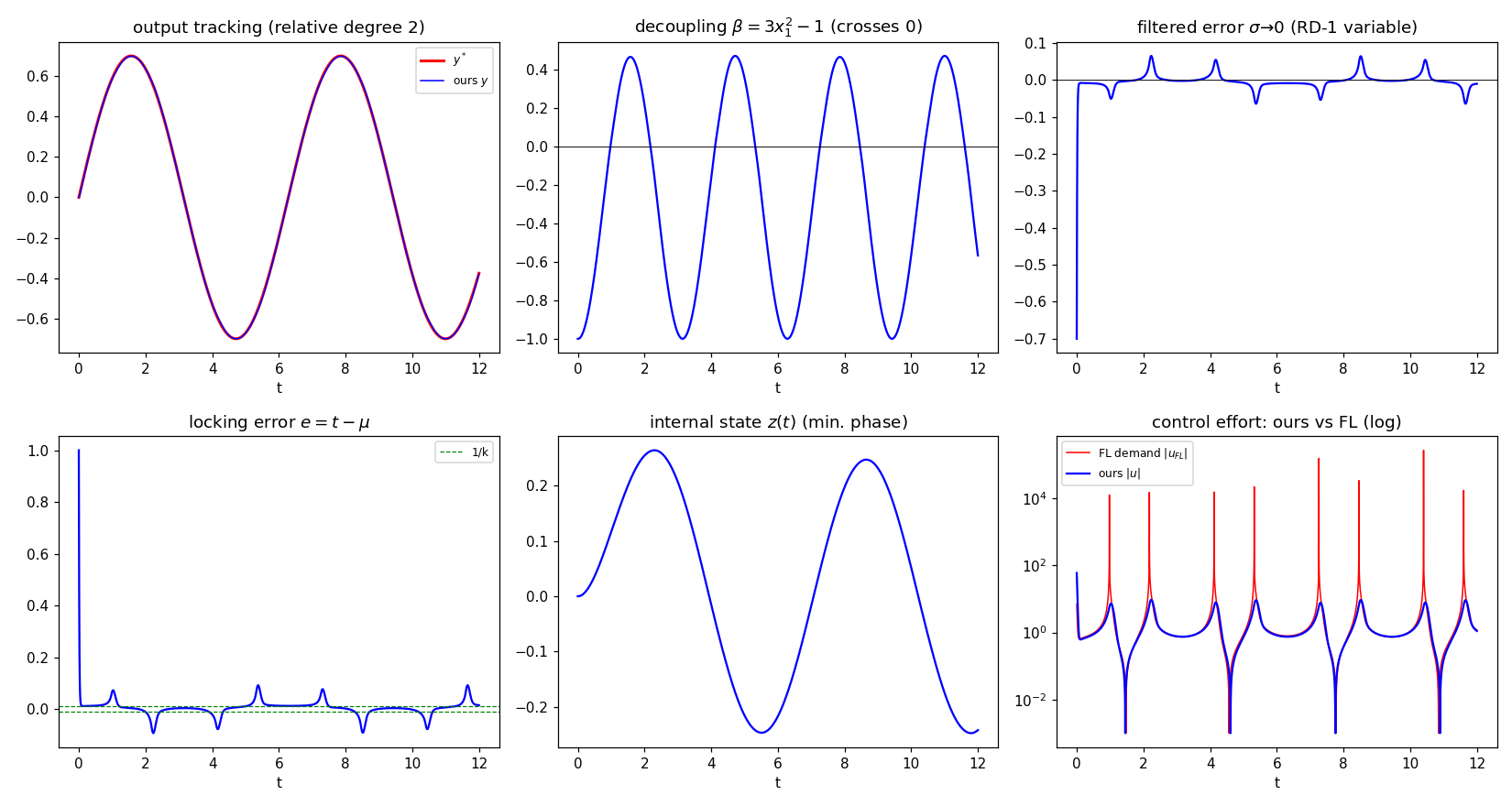}
\fi

Figure 9. Relative-degree-two tracking through the control singularity $\beta = 0$ via the filtered-error reduction. Top row: output $y$ against the reference; the decoupling term $\beta = 3x_1^2 - 1$ crossing zero; the filtered error $\sigma$ (the relative-degree-one variable) driven to zero. Bottom row: locking error $e = t-\mu$ within the $\pm 1/k$ band; the internal state $z(t)$ bounded (minimum phase); control effort of the continuation flow (bounded, $\le 59$) against the feedback-linearization demand (log scale, diverging to $\sim 3\times 10^5$).
\end{center}

The result confirms Corollary~\ref{cor:rdr}. Unlike the reflecting behaviour of the first-order example, here the output map $y = x_1$ is monotone, so the state genuinely \emph{crosses} the singularity ($|y|$ reaches $0.70 > 1/\sqrt3$): the momentum $x_2$ carries $x_1$ through $\beta = 0$, where the control is instantaneously ineffective, while the continuation flow keeps $u$ bounded ($\le 59$, against a feedback-linearization demand of $2.7\times 10^5$). The filtered error $\sigma$ is regulated to $O(1/k)$, so the output tracks to $1.8\times 10^{-3}$; the internal state stays bounded, confirming minimum-phaseness; and the identity residual is essentially exact ($10^{-11}$), because the constant anchor $b_\sigma = \sigma_0$ dominates the augmented kernel and suppresses the orientation kink that afflicted the first-order run. This is the cleanest of the three experiments: a genuine passage through a control singularity of a relative-degree-two system, with bounded effort and accurate tracking, where feedback linearization is undefined.

\section{Application: DC/DC resonant power converter}\label{sec:converter}

The preceding examples were designed to expose the mechanism in the cleanest possible settings. We close with a genuine engineering plant whose feedforward inversion is singular in two physically distinct ways, and we show that the continuation controller of Sections~\ref{sec:tracking}--\ref{sec:reldeg} inverts it through both. The plant is the dual-bridge series resonant DC/DC converter (DB SRC), whose first-harmonic (FHA) model and closed-form feedforward inversion were developed in \cite{Borisevich2020SRC}. Instead of a dynamic (relative-degree) model we use the converter's steady-state input--output characteristic directly: this is the kinematic / root-finding case of Sections~\ref{sec:wellbehaved}--\ref{sec:tracking}, in which the manipulated quantities are the modulation parameters and the ``output map'' $f$ is the steady-state characteristic. The controller commands the \emph{rates} of the modulation parameters ($\dot u$), exactly as it commanded joint rates for the arm of Section~\ref{sec:arm}.

\subsection{Plant model}

The DB SRC drives a series $LC$ tank (leakage inductance $L$, resonant capacitor $C$, transformer ratio $n$) from two actively controlled full bridges. In switching-frequency angle $t' = \omega t$ the modulation is described by four variables: primary duty $d$, secondary shorting time $s$, phase shift $\beta$, and switching frequency $\omega$. The voltage gain is $G = n V_{out}/V_{in}$, the tank resonant frequency $\omega_0 = 1/\sqrt{LC}$, characteristic impedance $Z_0 = \sqrt{L/C}$, and normalized frequency $F_n = \omega/\omega_0$.

Following \cite{Borisevich2020SRC}, the buck and boost regimes are merged into a single \emph{aggregated} modulation variable $q \in [0, 2\pi]$,
\begin{equation}\label{eq:pe_q}
(d,s) = \begin{cases} (q,\,0), & q \le \pi \quad\text{(buck: duty $d=q$, no shorting)},\\[2pt] (\pi,\,q-\pi), & q > \pi \quad\text{(boost: full duty, shorting $s=q-\pi$)}. \end{cases}
\end{equation}
This aggregation is itself a hand-built homotopy that stitches the two modes continuously across the transition $d=\pi$; our construction generalizes it. Taking the secondary shorting offset $s_{add}=0$, the modulation reduces to the three quantities $u=(q,\beta,\omega)$.

The FHA steady-state characteristic maps $u$ to two commutation angles $(\sigma,\delta)$ (the ZVS timing quantities) and the output transconductance $W = I_{out}/V_{in}$:
\begin{align}
A(q,\beta) &= 4\sin d + 4G\bigl[\sin(\beta+s) + \sin\beta\bigr], \label{eq:pe_A}\\
B(q,\beta) &= 4 - 4\cos d - 4G\bigl[\cos(\beta+s) + \cos\beta\bigr], \label{eq:pe_B}\\
\sigma &= \operatorname{atan2}(B, A), \qquad \delta = \beta - \sigma, \label{eq:pe_sigdel}\\
W &= \frac{n}{2\pi^2 Z(\omega)}\sqrt{A^2+B^2}\,\bigl[\cos(\delta+s) + \cos\delta\bigr], \label{eq:pe_W}
\end{align}
with $(d,s)$ given by \eqref{eq:pe_q} and the tank impedance
\begin{equation}\label{eq:pe_Z}
Z(\omega) = \sqrt{R^2 + \bigl(\omega L - 1/\omega C\bigr)^2}, \qquad Z(\omega)\ \text{minimized at}\ \omega = \omega_0 .
\end{equation}
The output vector is $y = (\sigma, \delta, W)$ and the control is $u = (q,\beta,\omega)$. Experiment~1 below fixes $\omega$ and works with the $2\times 2$ sub-problem $u=(q,\beta)\mapsto y=(\sigma,\delta)$.

\subsection{Two decoupling singularities and reduction to the framework}

The model has a decisive structural feature. From \eqref{eq:pe_sigdel}, $\delta = \beta - \sigma$, hence $\sigma+\delta=\beta$ identically; and $\sigma,\delta$ do not depend on $\omega$. The $3\times 3$ Jacobian is therefore block-triangular,
\begin{equation}\label{eq:pe_jac}
J \;=\; \frac{\partial(\sigma,\delta,W)}{\partial(q,\beta,\omega)}
\;=\;
\begin{pmatrix}
\sigma_q & \sigma_\beta & 0 \\
-\sigma_q & 1-\sigma_\beta & 0 \\
W_q & W_\beta & W_\omega
\end{pmatrix},
\qquad
\det J \;=\; \frac{\partial\sigma}{\partial q}\,\frac{\partial W}{\partial\omega},
\end{equation}
using $\det\!\begin{pmatrix}\sigma_q & \sigma_\beta\\ -\sigma_q & 1-\sigma_\beta\end{pmatrix} = \sigma_q$. The determinant factorizes into two independent mechanisms, giving two decoupling singularities:
\begin{itemize}
\item[(S1)] \emph{Buck/boost fold}, $\partial\sigma/\partial q = 0$. On the synchronous-rectification line $\delta=0$ it sits at $\cos\sigma = G$, i.e.\ $\sigma = \arccos G$, which coincides with the mode transition $q=\pi$. There the closed-form inverse $s_{\min} = \arccos(2\cos\sigma/G - \cos\delta) - \delta$ has $dq/d\sigma \to \infty$: the analytical inversion breaks down.
\item[(S2)] \emph{Resonance}, $\partial W/\partial\omega = 0$. Since $W \propto 1/Z(\omega)$ and $Z$ is minimized at $\omega=\omega_0$ \eqref{eq:pe_Z}, $W$ attains its maximum and the frequency input loses all authority: the entire $\omega$-column of $J$ vanishes.
\end{itemize}

The reduction to the continuation framework is immediate. With $f$ the steady-state map \eqref{eq:pe_A}--\eqref{eq:pe_W} and $D f = J$, the controller forms the augmented matrix $A = \bigl[\, J(u)\;\big|\;b(t)\,\bigr]$, $b(t) = f(u_0) - y^*(t)$, and integrates the flow \eqref{flow} in real time with $\dot t = 1$,
\begin{equation}\label{eq:pe_flow}
\begin{pmatrix}\dot u \\ \dot\mu\end{pmatrix} = -A^\dagger B + k\,e\,\hat n, \qquad e = t-\mu,\quad \rho = \hat n_{\text{last}} \ge 0 .
\end{equation}
This is precisely the kinematic controller of Section~\ref{sec:tracking} with joint rates replaced by modulation-parameter rates; the dynamic generalization of Section~\ref{sec:dynamic} is not required because the manipulated variables enter the steady-state map directly. Lemma~\ref{lem:kernel}, Definition~\ref{def:transversal}, Theorem~\ref{thm:crossing} and Proposition~\ref{prop:failure} apply verbatim with $Df=J$.

\subsection{Experiment 1: crossing the buck/boost singularity, $(q,\beta)\to(\sigma,\delta)$}

We fix $\omega$ and invert $u=(q,\beta)\mapsto y=(\sigma,\delta)$ at $G=0.8$, $k=100$. The reference holds $\delta^*=0$ (synchronous rectification) and sweeps $\sigma^*(t) = \sigma_x + a\cos(\omega_r t)$ with $\sigma_x = \arccos G = 0.6435$, $a=0.28$, period $10$~s, so the target crosses the transition (S1) four times over $T=20$~s and $q$ is driven back and forth across $\pi$. Crucially, the homotopy is anchored at an \emph{extremum}, $\sigma^*(0)=\sigma_x + a$, not at $\sigma_x$: this keeps $b = f(u_0)-y^*$ with a nonzero component along the uncontrollable mode $w$ at the crossing. Anchoring at $\sigma_x$ makes $b\to 0$ exactly at the singular instant, i.e.\ a \emph{non-transversal} crossing in the sense of Proposition~\ref{prop:failure}; in that (mistaken) setup the tracking error grew and the commanded rate jumped an order of magnitude --- the observable signature of lost transversality. Table~\ref{tab:pe1} reports the transversal run.

\begin{table}[h]
\centering
\begin{tabular}{l l l}
\hline
Quantity & Measured & Predicted / interpretation \\
\hline
$\min_t \sigma_{\min}(J)$ & $1.2\times 10^{-4}$ & buck/boost fold (S1) is reached \\
$\min_t \rho$ & $4.2\times 10^{-4}$ & $\rho\to 0$ at singularity, Lemma~\ref{lem:kernel}(b) \\
$|w^T b|$ at singularity & $0.27$ & transversal, Def.~\ref{def:transversal} \\
regular-region $|e|=|t-\mu|$ & $8.5\times 10^{-3}$ & $\approx 1/k$, Thm.~\ref{thm:crossing}(2) \\
$|e|$ peak at crossing & $3.3\times 10^{-2}$ & bounded excursion, Thm.~\ref{thm:crossing}(3) \\
$|e|$ final (re-lock) & $1.0\times 10^{-2}$ & recovery, Thm.~\ref{thm:crossing}(4) \\
$\max_t \|\dot u\|$ (ours) & $3.3$ & bounded, Thm.~\ref{thm:crossing}(1) \\
$\max_t \|\tfrac{d}{dt}(q,\beta)\|$ (invert\_ds) & $7.3\times 10^{2}$ & analytical inverse rate diverges \\
regular tracking error & $\le 1.1\times 10^{-2}$ & $O(1/k)$ off the singularity \\
\hline
\end{tabular}
\caption{Experiment 1: $(q,\beta)\to(\sigma,\delta)$ across the buck/boost fold, $G=0.8$, $k=100$.}
\label{tab:pe1}
\end{table}

\begin{center}
\ifpdf
  % this figure is generated by sim_continuation_qbeta.py
  \includegraphics[width=1\textwidth]{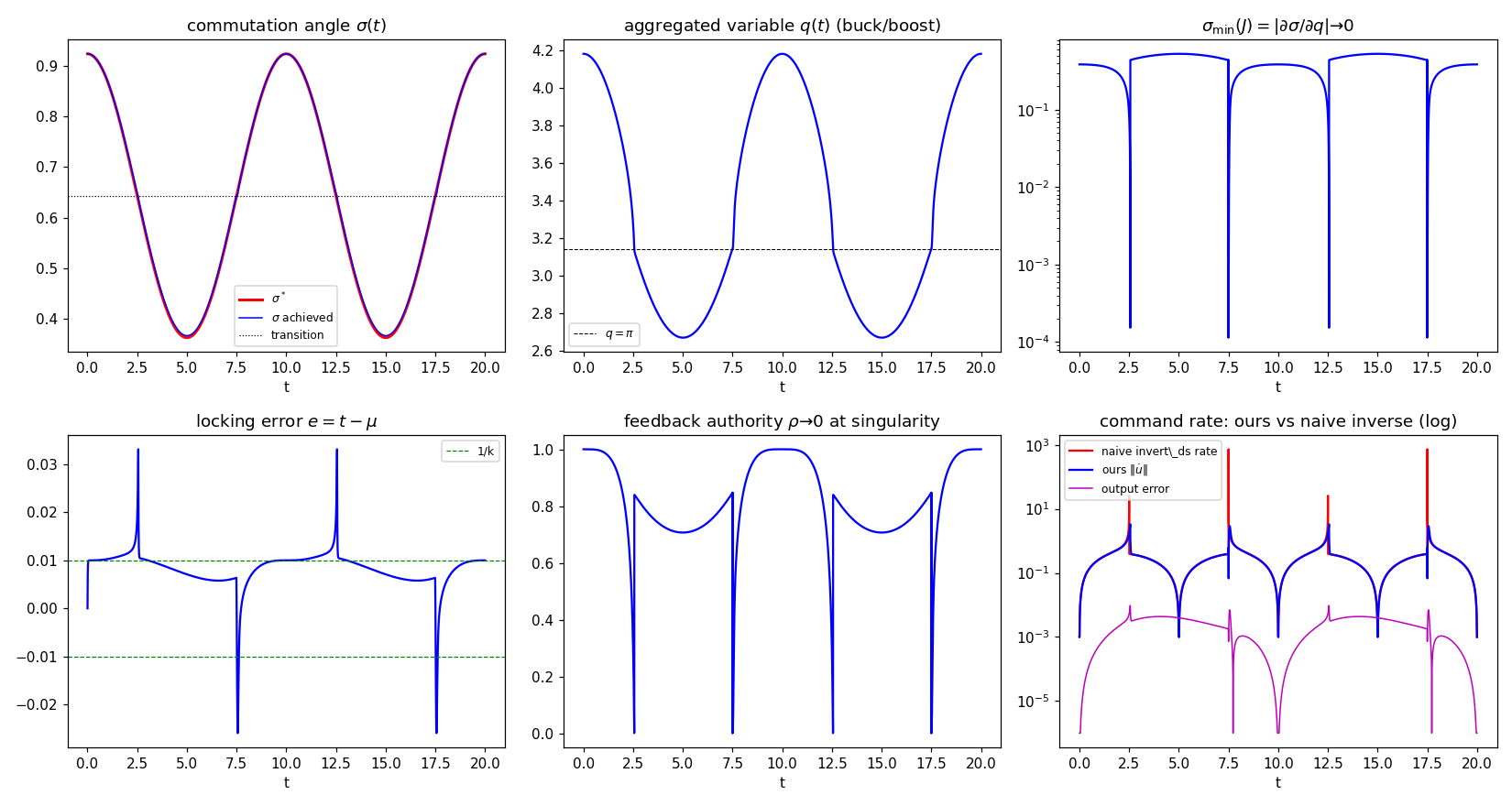}
\fi

Figure 10. Continuation inversion across the buck/boost singularity. Top: $\sigma(t)$ tracking the reference through the transition $\sigma_x=\arccos G$; the aggregated variable $q(t)$ crossing $\pi$; the collapse of $\sigma_{\min}(J)=|\partial\sigma/\partial q|$. Bottom: locking error $e=t-\mu$ in the $\pm 1/k$ band; feedback authority $\rho\to 0$ at the crossing; commanded rate of the continuation flow (bounded) against the analytical inverse \texttt{invert\_ds} whose rate spikes to $\sim 7\times 10^2$.
\end{center}

This is a genuine \emph{crossing} (the output moves monotonically in $\sigma$ as $q$ passes through $\pi$), analogous to the relative-degree-two example of Section~\ref{sec:rd2}. The aggregated variable $q$ is continuous through the mode change, but the map \emph{from the output to $q$} has infinite slope at the transition --- exactly the fold $\partial\sigma/\partial q = 0$ --- and it is there that the closed-form inverse's commanded rate diverges while the continuation flow crosses with bounded rate.

\subsection{Experiment 2: square commutation-plus-power inversion, $(q,\beta,\omega)\to(\sigma,\delta,W)$}

We now use all three controls to regulate all three outputs. The converter parameters are those of the source paper, $L=31~\mu$H, $C=8.2$~nF, $n=2.2$ ($f_0 = 315.7$~kHz, $Z_0 = 61.5~\Omega$), with tank damping $R = Z_0/6$ (quality factor $\approx 6$) and $F_{n,0}=1.3$. The reference sweeps $\sigma^*$ across the transition as in Experiment~1 while simultaneously demanding a $\pm 12\%$ power swing, $W^*(t) = W_0\,(1 + 0.12\sin\omega_r t)$. Because the run stays above resonance, $\partial W/\partial\omega \ne 0$ throughout, so by \eqref{eq:pe_jac} the only singularity met is the buck/boost fold (S1) --- now embedded in a square $3\times 3$ system where $\omega$ moves only to hold power. Table~\ref{tab:pe2}.

\begin{table}[h]
\centering
\begin{tabular}{l l l}
\hline
Quantity & Measured & Interpretation \\
\hline
$q$ crosses $\pi$; $F_n$ range & yes; $[1.16,1.34]$ & fold met, $\omega$ stays above resonance \\
$\min_t \sigma_{\min}(J)$ & $1.4\times 10^{-4}$ & fold (S1) reached \\
$\min_t \rho$ & $4.8\times 10^{-4}$ & authority $\to 0$, Lemma~\ref{lem:kernel}(b) \\
$|w^T b|$ at singularity & $0.27$ & transversal, Def.~\ref{def:transversal} \\
regular $\|y-y^*\|$ (max) & $1.4\times 10^{-2}$ & $O(1/k)$; $W$ rel.\ error $4\times 10^{-3}$ \\
$|e|$ regular / peak / final & $8.6\!\cdot\!10^{-3}$ / $3.3\!\cdot\!10^{-2}$ / $1.0\!\cdot\!10^{-2}$ & Thm.~\ref{thm:crossing}(2)--(4) \\
$\max_t \|\dot u\|$ (ours) & $3.3$ & bounded, Thm.~\ref{thm:crossing}(1) \\
$\max_t \|J^{-1}\dot y^*\|$ (naive FL) & $1.3\times 10^{3}$ & feedback linearization diverges \\
\hline
\end{tabular}
\caption{Experiment 2: square $(q,\beta,\omega)\to(\sigma,\delta,W)$ across the buck/boost fold while regulating power, $k=100$.}
\label{tab:pe2}
\end{table}

\begin{center}
\ifpdf
  % this figure is generated by sim_continuation_qbw.py
  \includegraphics[width=1\textwidth]{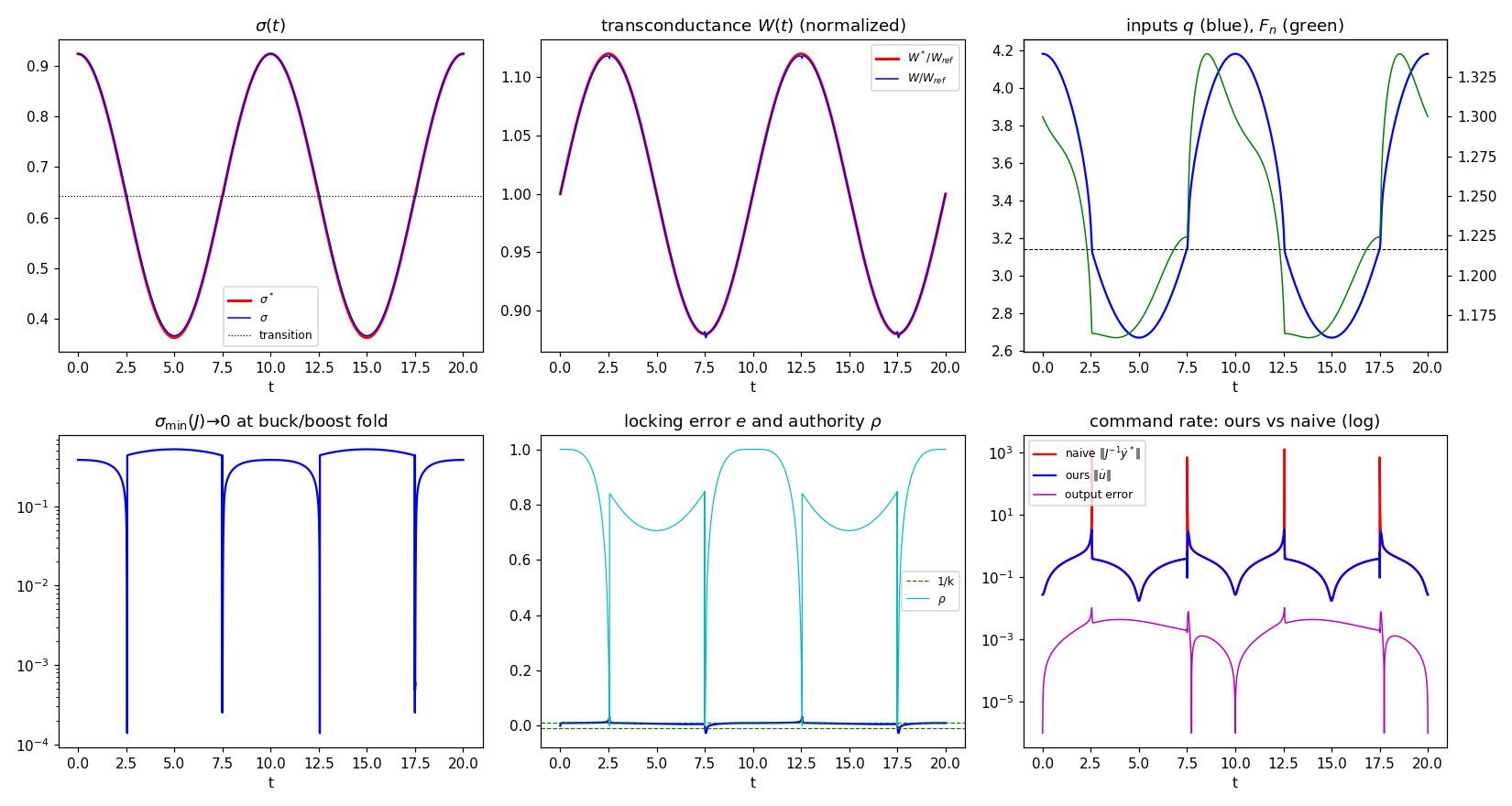}
\fi

Figure 11. Square commutation-plus-power inversion. The controller holds $\sigma$ and the normalized power $W/W_{ref}$ on their references while $q$ crosses $\pi$ and $F_n$ moves only to track power; $\sigma_{\min}(J)$ collapses at the fold; the locking error stays in the $\pm 1/k$ band and re-locks; the continuation rate (bounded) contrasts with the feedback-linearization demand $\|J^{-1}\dot y^*\|\sim 10^3$.
\end{center}

The three outputs are held to $O(1/k)$ throughout the crossing. Since $\det J = (\partial\sigma/\partial q)(\partial W/\partial\omega)$ and the second factor is bounded away from zero above resonance, the rank drop is entirely attributable to the fold factor, and the augmented construction repairs it exactly as in the scalar-output case.

\subsection{Experiment 3: the resonance singularity, $\omega\to\omega_0$}

To exercise the second mechanism (S2) we isolate it: hold $\sigma^*=0.5$ (in buck, away from the transition) and $\delta^*=0$, and ramp the power demand from its $F_n=1.3$ value up through and $5\%$ past the resonant peak $W_{\text{peak}} = W(\omega_0)$. As $W^*$ rises the controller lowers $F_n$ toward $1$; at resonance $\partial W/\partial\omega \to 0$ and the frequency input goes limp. Table~\ref{tab:pe3}.

\begin{table}[h]
\centering
\begin{tabular}{l l l}
\hline
Quantity & Measured & Interpretation \\
\hline
$F_n$ reached & $0.9998$ & resonance $\omega_0$ effectively attained \\
$\min_t |\partial W/\partial\omega|$ & $7.3\times 10^{-8}$ & frequency authority collapses (S2) \\
$\min_t \sigma_{\min}(J)$ & $1.5\times 10^{-8}$ & genuine rank drop \\
$\min_t \rho$ & $2.9\times 10^{-8}$ & $\rho\to 0$, Lemma~\ref{lem:kernel}(b) \\
$|w^T b|$ near resonance & $0.50$ & transversal, Def.~\ref{def:transversal} \\
$W$ achieved max vs peak & $3.338$ vs $3.338$ & saturates at reach limit (reach gap) \\
$\max_t \|\dot u\|$ (ours) & $0.88$ & bounded, Thm.~\ref{thm:crossing}(1) \\
$\max_t \|J^{-1}\dot y^*\|$ (naive FL) & $2.9\times 10^{5}$ & diverges as $1/(\partial W/\partial\omega)$ \\
$|e|$ regular / final & $1.0\times 10^{-2}$ / $2.9\times 10^{-3}$ & $\approx 1/k$, then re-lock \\
\hline
\end{tabular}
\caption{Experiment 3: reflection at the resonance singularity $\partial W/\partial\omega=0$, $k=100$. Power normalized by $W_{ref}=W(F_n{=}1.3)$.}
\label{tab:pe3}
\end{table}

\begin{center}
\ifpdf
  % this figure is generated by sim_continuation_resonance.py
  \includegraphics[width=1\textwidth]{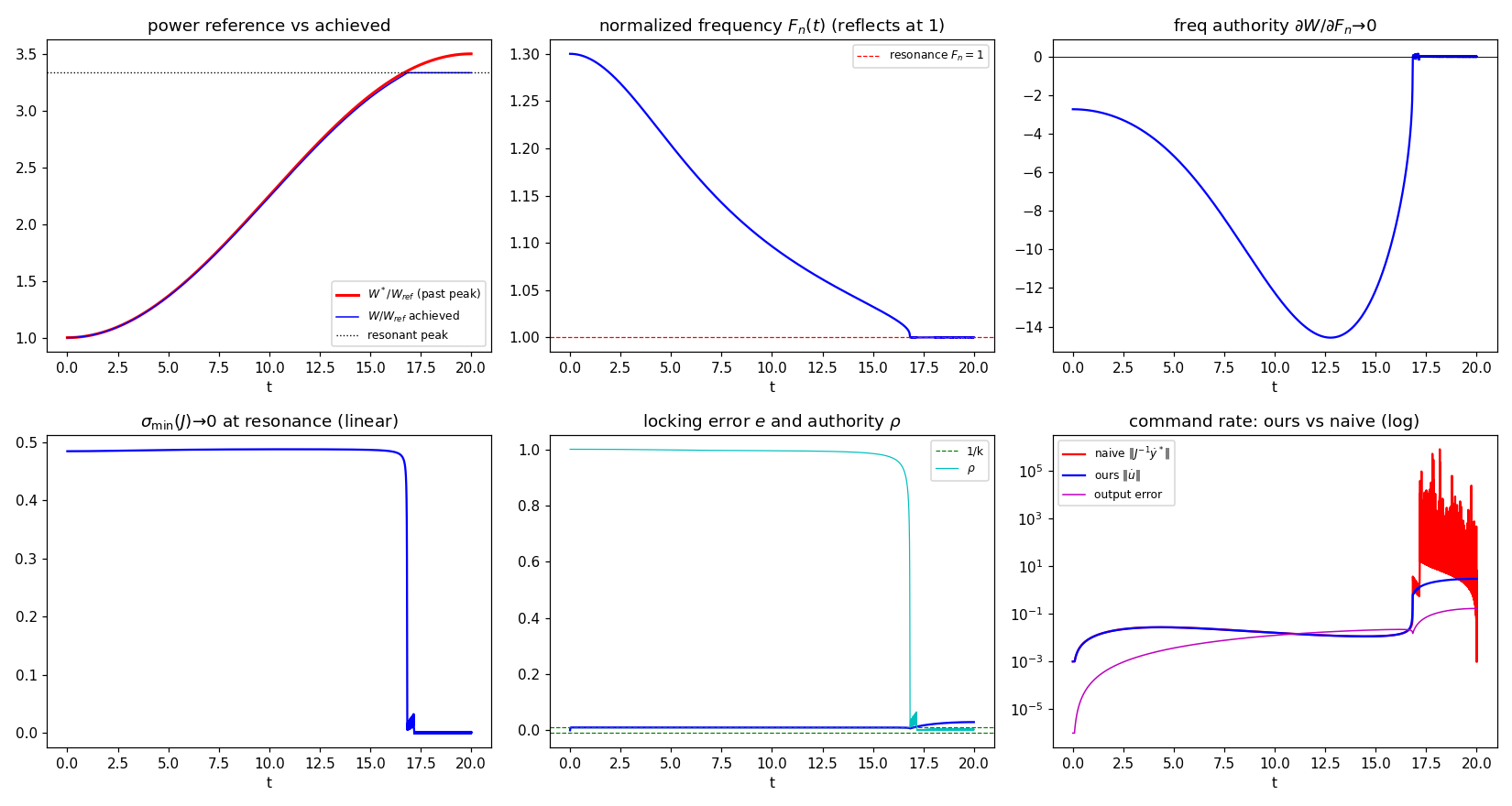}
\fi

Figure 12. Reflection at resonance. Top: the power reference is pushed past the resonant peak (dotted) but $W$ saturates \emph{exactly} at the peak; $F_n(t)$ descends to $1$ and reflects rather than crossing to $F_n<1$; the frequency authority $\partial W/\partial\omega$ collapses to zero. Bottom: $\sigma_{\min}(J)\to 0$; locking error and $\rho$; the bounded continuation rate against the feedback-linearization demand diverging to $\sim 3\times 10^5$.
\end{center}

Unlike Experiments~1--2 this is a \emph{reflection}, not a crossing: $W$ saturates precisely at its resonant maximum --- the reach-gap analogue of the arm at full extension (Section~\ref{sec:arm}) --- and $F_n$ reflects at $1$ under the $\rho\ge 0$ orientation instead of crossing to $F_n<1$. By Remark~\ref{rem:leanstrength}, riding the resonant power ceiling is the unique Filippov behaviour of the ideal closed loop here, so the observed saturation is forced dynamics rather than a numerical artifact. This is physically decisive. The sub-resonant region $F_n<1$ is capacitive, where zero-voltage switching is lost; the $\rho\ge 0$ orientation makes the controller ride the resonant power ceiling and \emph{refuse to leave the soft-switching region}, with a bounded frequency slew, whereas a naive power inverter would command an unbounded downward sweep through resonance (demand $\sim 3\times 10^5$). Deliberately crossing into $F_n<1$ would require the continuity orientation of Section~\ref{sec:orientation} together with a crossing-compatible reference, at the cost of transiting hard switching --- precisely the reflection-versus-crossing dichotomy established for the arm.

\subsection{Discussion}

The converter exhibits, in a single plant, both faces of Lemma~\ref{lem:kernel}: a transversal \emph{crossing} at the buck/boost fold and a transversal \emph{reflection} at the resonance reach-boundary, mirroring respectively the relative-degree-two crossing of Section~\ref{sec:rd2} and the maximum-reach reflection of the arm. The factorization $\det J = (\partial\sigma/\partial q)(\partial W/\partial\omega)$ shows the two singular mechanisms are independent --- the fold is a property of the $(q,\beta)\to(\sigma,\delta)$ sub-map, resonance a property of the $\omega\to W$ channel --- and the augmented construction repairs each without special-casing. Two engineering lessons follow directly from the theory. First, transversality is a \emph{reference-shaping} condition: anchoring the homotopy away from the mode boundary keeps $w^T b \ne 0$, and violating it reproduces the Proposition~\ref{prop:failure} failure. Second, the $\rho \ge 0$ orientation encodes the physically safe operating choice --- stay above resonance, on the incoming mode --- so the boundedness guarantee of Theorem~\ref{thm:crossing} and the correct engineering behaviour coincide.

Finally, this is the \emph{static-inversion} application (Sections~\ref{sec:wellbehaved}--\ref{sec:tracking}): the controller wraps the converter's steady-state characteristic and outperforms the closed-form and feedback-linearizing inverses at both singularities. Extending it to a genuine \emph{dynamic} averaged converter model --- where the relative-degree-one theory of Section~\ref{sec:dynamic} applies with $\Lambda = Dh\,g$, and where minimum-phaseness of the output-filter and load zero dynamics must be certified (and can fail in deep boost) --- is the natural next step.

\section{Future directions}\label{sec:future}

Three directions stand out. The first is largely worked out and near-term; the second builds on the multi-parameter theory of Section~\ref{sec:adaptive}; the third is a conjecture that we state carefully, as it concerns the ultimate scope of the approach. A single design quantity ties them together: the number $p$ of continuation parameters, which sets the dimension of the null space of the augmented Jacobian \eqref{eq:multiaug} and is spent jointly on outer-loop control, adaptation, and transversal repair of singularities (Theorem~\ref{thm:residual}).

\subsection{Arbitrary auxiliary controller in the null space}\label{sec:aux}

The proportional locking law $\gamma = k\,e$ is only the simplest choice. The null-space coefficient $\gamma$ is a virtual control for the scalar locking-error subsystem
\begin{equation}
\dot e = 1 - d(x,t) - \rho(x,t)\,\gamma,
\end{equation}
so \emph{any} stabilizing compensator may be substituted. Integral action, $\gamma = k_p e + k_i\!\int e$, drives $e \to 0$ exactly and removes the $O(1/k)$ bias of Theorem~\ref{thm:crossing}, giving asymptotically exact tracking away from singularities --- this case, together with the drift-feedforward variant, is now developed, machine-checked, and validated in Remark~\ref{rem:exactlocking} and Sections~\ref{sec:lean} and~\ref{sec:lockingsims}; lead--lag, $H_\infty$, adaptive, or nonlinear designs shape the transient and reject disturbances in the drift $d$ and gain $\rho$. This recovers, in the present tracking setting, the ``auxiliary controller as a drift term'' construction of the antecedent setpoint work, and it exposes a clean separation of concerns: singularity handling is intrinsic (through the augmentation and the authority $\rho$), while outer-loop performance is a free design in $\gamma$. The one structural limit is that $\rho \to 0$ at the singular instant caps every auxiliary controller identically there --- the bounded excursion of Theorem~\ref{thm:crossing}(3) is set by the geometry, not by the outer loop.

\subsection{Adaptive designs and higher-order degeneracies}

The structural theory of the multi-parameter augmentation --- the exact repair criterion, the kernel collapse, and the residual-authority count --- is established in Section~\ref{sec:adaptive}, and Section~\ref{sec:adaptivesim} demonstrates the resulting crossing of a regressor's singular manifold on a nonconvex identification problem. What remains open on this front is the full adaptive design: convergence of the joint continuation under persistent excitation (the repair criterion of Proposition~\ref{prop:multirepair} holding with uniform margin), mismatch robustness of the tracking bounds, and the interaction between the estimator's locking law and the fold transits.

A complementary direction concerns \emph{higher-order} (non-fold) degeneracies that remain corank one but violate the Whitney nondegeneracy \eqref{eq:foldnondeg} --- the cusp $Y=\xi^3$ and its successors in the Thom--Boardman hierarchy. There the local model is no longer quadratic, so the reflection/crossing dichotomy of Theorem~\ref{thm:reflection} acquires additional branches; characterizing the continuation flow through such points, and validating it on a Whitney cusp, is left to future work.

\subsection{Beyond a fixed morphism: flowing between linearizations (conjecture)}

Feedback linearization is a global object: it requires a linearizing morphism --- a diffeomorphism together with static or dynamic feedback --- whose existence is an integrability condition (involutivity in the static case, flatness in the dynamic case). The continuation controller, by contrast, requires only \emph{pointwise, path-local} regularity of the augmented map along the trajectory, $\rank[\Lambda \mid b] = m$, which is strictly weaker. We therefore conjecture that the homotopy viewpoint --- a continuous deformation \emph{between} linearizing morphisms, rather than commitment to a single one --- can regulate references and operating regions for which no global linearizing morphism exists, provided a regular augmented solution path exists in a suitably prolonged (jet) space. Concretely, building the homotopy on a dynamic extension that augments $u$ and its derivatives would let continuation replace exact inversion within the flatness/dynamic-linearization machinery, with the pointwise rank condition taking the place of global integrability.

Two boundaries must be respected, and they delimit the conjecture. First, the approach does nothing for the \emph{non-minimum-phase} obstruction: it regulates the output along a path but does not stabilize unstable zero dynamics. Second, it cannot manufacture \emph{controllability}: transversal (removable) rank drops are repairable, but a genuine loss of control authority in a required direction is not. Within these limits, characterizing exactly which non-feedback-linearizable systems admit a regular augmented path --- the true ``width'' of the class reachable by flowing between morphisms --- is the central open problem this work points to.

\section{Conclusion}\label{sec:conclusion}

We presented a homotopy-continuation controller that tracks references through isolated decoupling singularities of square, relative-degree-one systems, and, via the filtered-error reduction, of any uniform relative degree. Replacing the partial inverse $\Lambda^{-1}$ by the least-norm solution of the augmented map $A = [\Lambda \mid b]$ turns a rank-$(m{-}1)$ singularity into a regular point of $A$ whenever the transversality condition $w^Tb \ne 0$ holds; the resulting flow crosses with bounded control and $O(1/k)$ tracking error and re-locks afterwards (Theorem~\ref{thm:crossing}), and its reflection-versus-crossing behaviour at a fold is fixed by the kernel orientation, the two lifts differing by the $\mathbb Z/2$ monodromy of the fold cover (Theorem~\ref{thm:reflection}).

Two statements delimit the contribution. The method does \emph{not} enlarge the structural class of regulable systems: it presupposes a well-defined relative degree, requires minimum phase for internal boundedness, and depends on the model through the same Lie derivatives as feedback linearization --- indeed, off the singular set it \emph{is} feedback linearization up to $O(1/k)$ (Proposition~\ref{prop:consistency}). What it \emph{does} enlarge is the admissible operating envelope of a given system: it regulates along trajectories that traverse decoupling singularities at which both static and dynamic feedback linearization produce unbounded control, the dynamic-systems analogue of what damped least squares provides over the plain Jacobian inverse. This was demonstrated on the arm, the dynamic relative-degree one and two plants, and the resonant converter, with feedback-linearization demand reaching $10^5$--$10^7$ while the continuation control stayed bounded. The gain is genuine but bounded: at the singular instant the uncontrollable direction $w$ ($w^T\Lambda = 0$) is uncontrollable for any controller, so the method fails gracefully, with a bounded and recoverable error, rather than diverging. Lifting these limits --- adaptive designs on the multi-parameter theory of Section~\ref{sec:adaptive}, higher-order degeneracies, and the open question of which non-feedback-linearizable systems admit a regular augmented path --- is the programme set out in Section~\ref{sec:future}.

\end{document}